\documentclass[11pt]{article}

\usepackage[margin=1in]{geometry}

\usepackage{amsmath,amsthm,amsfonts,bm,bbm} 

\usepackage[table]{xcolor}
\usepackage{mathtools}
\usepackage{graphicx}
\usepackage{array}
\usepackage{arydshln}
\usepackage{diagbox}
\usepackage{multirow}
\usepackage{enumitem}
\usepackage{complexity}
\usepackage{colortbl}

\usepackage{hyperref}
\usepackage[capitalize,noabbrev]{cleveref}

\newtheorem{theorem}{Theorem}
\newtheorem{lemma}{Lemma}
\newtheorem{corollary}{Corollary}

\theoremstyle{definition}
\newtheorem{definition}{Definition}[section]

\newcommand{\bR}{\mathbb{R}}
\newcommand{\bZ}{\mathbb{Z}}

\newcommand{\supp}{\textrm{supp}}

\DeclareMathOperator*{\argmax}{arg\,max}

\newcommand{\RESBI}[1]{\textsc{ReBimatrix}$(#1)$}
\newcommand{\PURE}{\textsc{PureCircuit}}
\newcommand{\POLY}{\textsc{RePolymatrix}}
\newcommand{\NOT}{\textsc{Not}}
\newcommand{\AND}{\textsc{And}}
\newcommand{\PURIFY}{\textsc{Purify}}

\newcommand{\red}[1]{\textcolor{red}{#1}}
\newcommand{\blue}[1]{\textcolor{blue}{#1}}

\newcommand{\itemwithtable}[2]{%
\item[]%
\parbox[t]{\linewidth}{%
  \begin{minipage}[t]{0.55\textwidth}\vspace{0pt}#1\end{minipage}\hfill
  \begin{minipage}[t]{0.4\textwidth}\vspace{0pt}\centering #2\end{minipage}%
}%
}

\usepackage{amssymb}
\usepackage{booktabs}
\usepackage[ruled]{algorithm2e}
\usepackage{nicematrix}

\usepackage[square,numbers]{natbib}
\SetAlFnt{\small}
\SetAlCapFnt{\small}
\SetAlCapNameFnt{\small}
\SetAlCapHSkip{0pt}
\IncMargin{-\parindent}

\newcommand{\eps}{\ensuremath{\varepsilon}}
\DeclareMathOperator{\encode}{enc}
\DeclareMathOperator{\cb}{CB}
\DeclareMathOperator{\rb}{RB}
\newcommand{\reals}{\mathbb{R}}

\title{The Complexity of Sparse Win-Lose Bimatrix Games}

\author{
  Eleni Batziou$^{1}$,
  John Fearnley$^{1}$,
  Abheek Ghosh$^{2}$, and
  Rahul Savani$^{1,3}$ \\
  \small
  $^{1}$University of Liverpool, \texttt{\{eleni.batziou, john.fearnley, rahul.savani\}@liverpool.ac.uk} \\
  \small
  $^{2}$Technical University of Munich, \texttt{abheek.ghosh@tum.de} \\
  \small
  $^{3}$The Alan Turing Institute
}

\date{}  

\begin{document}

\maketitle

\begin{abstract}
\noindent
We prove that computing an $\eps$-approximate Nash equilibrium of a win-lose bimatrix game with \emph{constant} sparsity is PPAD-hard for inverse-polynomial $\eps$.
Our result holds for 3-sparse games, which is tight given that 2-sparse win-lose bimatrix games can be solved in polynomial time.
\end{abstract}


\begin{titlepage}

\maketitle

\vspace{1cm}
\setcounter{tocdepth}{2} 
\tableofcontents

\end{titlepage}


\section{Introduction}

An important strand of research within algorithmic game theory has studied the computational complexity
of finding an (approximate) Nash equilibrium in different classes of games.
One of the most basic models of a game is a \emph{bimatrix game}, namely a two-player game in strategic form,
and a celebrated result showed that it is PPAD-complete to find an $\epsilon$-Nash equilibrium
of a bimatrix game for $\eps$ that is inverse polynomial in
the input size of the bimatrix game~\cite{DaskalakisGP09,ChenDT09}.
PPAD-hardness is generally considered as strong evidence that a problem cannot be solved in polynomial time~\cite{Goldberg2011}.

Around the same time as those breakthrough results, it was shown that this result extends to \emph{win-lose}
bimatrix games, where all payoffs are in $\{0, 1\}$.
In particular, this was shown via a polynomial-time reduction from the problem of finding
an approximate Nash equilibrium of a bimatrix game with general payoffs to that of finding an
approximate equilibrium of a win-lose bimatrix game~\cite{abbott2005complexity,chen2007approximation}.%
\footnote{The results in~\cite{abbott2005complexity} were for \emph{exact} Nash equilibria;
\citet{chen2007approximation} then showed that the \emph{same} construction actually
gives PPAD-hardness for $\eps$-Nash equilibria with inverse polynomial $\eps$.}
The win-lose games arising from the corresponding reduction are not \emph{sparse} in general, meaning
that rows and columns can contain many 1s as well as 0s.
In a separate line of work, it was shown to be PPAD-hard to find an
$\epsilon$-Nash equilibrium of a 10-sparse (but \emph{not} win-lose) bimatrix game -- i.e.,
a bimatrix game with at most 10 non-zero entries per row and per column -- again
for $\eps$ that is inverse polynomial in the input size of the bimatrix game~\cite{ChenDT06}.

\citet{ChenDT06} and, contemporaneously, \citet{CodenottiLR06} showed
that 2-sparse win-lose games can be solved in polynomial time.
This left open the possibility that less sparse win-lose bimatrix games could also
be solved in polynomial time.
Progress was made some years later, when \citet{liu2018approximation} showed
PPAD-hardness for inverse polynomial approximations in win-lose bimatrix games with $\log(n)$-sparsity.
Whether \emph{constant}-sparse win-lose bimatrix games can be solved in polynomial time for inverse-polynomial approximations remained an open question.
However, constant-sparse bimatrix games admit a PTAS \citep{DaskalakisP09,Barman2015}, so constant approximations can be computed in polynomial time.

No further progress on this question has been made for bimatrix games, but
results have been shown for the more general class of \emph{polymatrix games}.
\citet{LiuLD21} showed that win-lose polymatrix games with constant sparsity are PPAD-hard.
They also explicitly noted that their techniques for showing hardness in polymatrix games would not extend to
bimatrix games and left the problem of trying to show PPAD-hardness for win-lose bimatrix
games with constant sparsity as the main open problem in this line of work.
More recently, \citet{deligkas2024pure} also showed a strong hardness result
for win-lose \emph{non-sparse polymatrix} games, which we will utilise in this
paper.

\paragraph{\bf Our contribution.}

In this paper we fully resolve the complexity of finding an approximate
equilibrium of a win-lose bimatrix game with constant sparsity.
We show that
it is PPAD-hard to find an $\eps$-Nash equilibrium of a 3-sparse win-lose bimatrix game for inverse polynomial $\eps$.
Recall that 2-sparse win-lose bimatrix games can be solved in polynomial time,
which means that our result is tight: 2-sparse games are easy, whereas 3-sparse
games are PPAD-hard.

Our reduction starts from the \PURE{} circuit problem, which is known to be
PPAD-complete~\cite{deligkas2024pure}. It then proceeds via polymatrix games,
and then on to $3$-sparse $\{0, 1, 2, 8\}$-valued bimatrix games. It then
uses a sequence of steps in which the maximum payoff of the game is
reduced while maintaining constant sparsity, finally giving a
$3$-sparse $\{0, 1\}$-valued bimatrix game.

While several of these steps are already known, such as a reduction from
\PURE{} to polymatrix games, or a reduction from polymatrix games to
bimatrix games, at each step we need to carefully refine those reductions in
order to control the sparsity of the final game.

\begin{enumerate}
\item While it is already known that polymatrix games with payoffs in $\{0,
1\}$ are PPAD-hard~\cite{deligkas2024pure}, we cannot start with those games.
This is because in those games each player's payoff can be the sum of up to six
$1$s, and when we reduce such a game to a bimatrix game, we would end up with a game with sparsity at least $6$, which is too large for our purposes.\footnote{We were able to construct an optimized reduction from \PURE{} to polymatrix games with $\{0, 1\}$-payoffs where each player's payoff can be the sum of up to three $1$s. However, when we have $m$ many $1$s per player in the polymatrix game, we end up having sparsity of at least $m + 1$ in the bimatrix game. So, with six (three) $1$s per player in the polymatrix game, we end up having sparsity of seven (four) in the bimatrix game, which is too large for our purposes.}

Instead, we find that it is more convenient to work with polymatrix games that
use payoffs in the set $\{0, 1, 2\}$, which allows the sparsity of the bimatrix
game to be reduced. We carefully construct our hard polymatrix
instances so that the resulting bimatrix game has sparsity $3$.

\item We then reduce the polymatrix game to a bimatrix game. Here we use a
polymatrix game to bimatrix game reduction that is different from the usual
procedures in the literature~\cite{DaskalakisGP09}.
The previous reductions produce one or more of the following in the constructed bimatrix game:
negative payoffs, non-constant payoffs, or non-constant sparsity.
Our technique resolves these issues and produces a $3$-sparse $\{0, 1, 2, 8\}$-valued bimatrix game.

\item The final task is to reduce the maximum payoff in the bimatrix game
without blowing up the sparsity. While there are generic techniques in the
literature for constructing win-lose
games~\cite{abbott2005complexity,chen2007approximation} from arbitrary bimatrix
games, they use too many payoffs to maintain our desired sparsity, and so they
are not useful for our purposes.

Instead, we introduce a small set of bimatrix game to bimatrix game reductions
that slowly reduce the maximum payoff of the game while keeping a handle on the
sparsity. This allows us to produce a
$3$-sparse $\{0, 1\}$-valued bimatrix game, thereby proving our main result.
\end{enumerate}
We also show that each of the steps above hold even for inverse
polynomial approximate equilibria; this is unlikely to be improvable since
 a constant additive approximation can be computed in quasi-polynomial time~\cite{LiptonMM03,Rubinstein16}.
We give more details about our techniques in Section~\ref{sec:to}.


\section{Related Work}
The seminal results of \citet{DaskalakisGP09} and \citet{ChenDT09} established that computing Nash equilibria in two-player (bimatrix) games is PPAD-complete. \citet{Goldberg2011} provides an extensive survey that explores the limits of tractability for equilibrium computation and highlights restricted subclasses of games where efficient algorithms are possible.

An extensive line of work has studied restrictions on the payoff values of bimatrix games. \citet{abbott2005complexity} show that restricting payoff values to $\{0,1\}$ (win-lose games) does not allow for polynomial-time algorithms in the general case and the equilibrium computation problem remains PPAD-hard. \citet{chen2007approximation} study approximation in win-lose games and subsequently show PPAD-hardness of finding an $\eps$-approximate Nash equilibrium for $\eps = 1/\poly(n)$.

Restricting their focus to sparse bimatrix games, \citet{ChenDT06} show that sparsity does not
guarantee tractability, establishing PPAD-hardness for games where payoff
matrices have at most 10 non-zero entries per row and column. Closest to our setting are results on games that are both win-lose and sparse. \citet{CodenottiLR06}
give an efficient algorithm for the problem of finding a Nash equilibrium in
2-sparse win-lose games. \citet{liu2018approximation} establish PPAD-hardness for
win-lose bimatrix games with $\log (n)$-sparsity, and \citet{LiuLD21} extend
the result to multi-player games. Both works identify the
constant-sparsity regime as the main open question, which we resolve in this
paper.

The question of tractability in win-lose games has inspired a line of work studying restrictions on the adjacency graph that can lead to efficient algorithms. Games with two players can be represented as directed bipartite graphs. \citet{Addario007} study win-lose bimatrix games under the restriction of planarity and show that such games always have a uniform Nash equilibrium that can be found by a polynomial-time algorithm. Their work poses the question of how far tractability extends to other graph classes, and our work contributes by providing a hardness result for bounded degree instances derived by 3-sparsity. In follow-up work, \citet{Datta2011} strengthen the previous result by providing a logarithmic-space algorithm and extend tractability to additional graph classes defined by minor restrictions.

On the algorithmic side, for constant $\eps$, \citet{Kontogiannis2007} propose polynomial-time algorithms to compute well-supported approximate Nash equilibria in win-lose bimatrix games using graph-theoretic methods.
\citet{Barman2015} gives a polynomial-time approximation scheme (PTAS) for bimatrix games in which the sum of the payoff matrices has fixed column sparsity, using an approximate version of Carathéodory's theorem.
\citet{Collevecchio2025} analyze random win-lose games, with payoff entries drawn independently from a Bernoulli distribution, and obtain expected polynomial-time algorithms for broad parameter regimes.

\citet{Hermelin2011} study the parametrized complexity of computing Nash equilibria in bimatrix games and show fixed-parameter tractability under the joint restriction of constant sparsity and support size, namely the set of strategies played with non-zero probability. \citet{Bilo2023} consider the decision version of the problem in multi-player games and show that hardness persists even under the win-lose assumption beyond the two player setting. 

The hardness result of \citet{abbott2005complexity} has been used as a starting point for reductions in various settings. \citet{Papadimitriou2023} prove PPAD-hardness for public goods games on directed networks, obtaining hardness for divisible goods with summation utilities via a reduction from the problem of approximating mixed Nash equilibria in win-lose bimatrix games. \citet{Brandt2009} study ranking games and show that any $k$-player game is equivalent to a $k+1$-zero-sum game. They prove PPAD-hardness for computing mixed Nash equilibria in 3-player ranking games via a reduction from win-lose 2-player games using Nash homomorphisms, illustrating how hardness of win-lose bimatrix games can be extended to structured multi-player settings. By showing hardness for \emph{3-sparse} win-lose games, we open up the possibility of starting with
these games for hardness reductions which may lead to other such results.

\section{Preliminaries}
Let $[n]$ denote the set $\{0, 1, 2, \ldots, n-1\}$ and let $\Delta(n)$ denote the $n-1$ dimensional simplex. Let $\bmod(k, n) \in [n]$ be the modulo of integer $k \in \bZ$ with respect to positive integer $n \in \bZ_{>0}$.

In a bimatrix game, there are two players, the row player and the column player, with $n$ and $m$ actions, respectively. We represent the game as $(A, B)$, where $A$ and $B$ are $n \times m$ payoff matrices of the row and the column players, respectively.
A pure strategy of the row player is a row $i \in [n]$. A mixed strategy of the row player is a distribution over the rows $x \in \Delta(n)$. Similarly, pure and mixed strategies of the column player are of the form $j \in [m]$ and $y \in \Delta(m)$. For pure strategies $i \in [n]$ and $j \in [m]$ the payoff pair is $(A_{ij}, B_{ij})$.
Given a mixed profile $(x, y)$ the expected payoffs to the row and column players are $x^\intercal A y$ and $x^\intercal B y$, respectively. The \emph{support} of a distribution $x$ is denoted $\supp(x)=\{ i \in [n] : x_i > 0 \}$.

\paragraph{Nash equilibrium and approximations.}
A strategy profile $(x, y) \in \Delta(n) \times \Delta(m)$ is a (mixed-strategy) Nash equilibrium (NE) if neither player can gain by unilaterally deviating:
\[
    x^\intercal A y \ge x'^\intercal A y \quad\text{for all } x' \in \Delta(n),
    \qquad \text{and} \qquad
    x^\intercal B y \ge x^\intercal B y' \quad\text{for all } y' \in \Delta(m).
\]

For $\eps \ge 0$, we say $(x, y)$ is an $\eps$-Nash equilibrium ($\eps$-NE) if no player can improve her expected payoff by more than $\eps$ by a unilateral deviation. Formally,
\[
    x^\intercal A y + \eps \ge x'^\intercal A y \quad\text{for all } x' \in \Delta(n),
    \qquad \text{and} \qquad
    x^\intercal B y + \eps \ge x^\intercal B y' \quad\text{for all } y' \in \Delta(m).
\]

A stronger notion is that of an $\eps$-well-supported Nash equilibrium ($\eps$-WSNE): here every pure strategy that is played with positive probability must be an $\eps$-best-response. Formally, $(x, y)$ is an $\eps$-WSNE if
\begin{align*}
    e_i^\intercal A y + \eps &\ge x'^\intercal A y, \quad\text{for all } i \in \supp(x), x' \in \Delta(n),\\
    x^\intercal A e_j + \eps &\ge x^\intercal A y', \quad\text{for all } j \in \supp(y), y' \in \Delta(m),
\end{align*}
where $e_i$ denotes the $i$-th standard basis vector.


Note that an $\eps$-WSNE is also an $\eps$-NE, but the converse need not hold.
However, an approximate NE can be converted to an approximate WSNE with a
polynomial loss in error.

\begin{lemma}{\cite[Lemma 3.2]{ChenDT09}}
\label{lm:ne2wsne}
In a bimatrix game $(A, B)$ with $A, B \in [0, 1]^{n \times m}$, given
any $(\eps^2/8)$-NE, we can find an $\eps$-WSNE in polynomial time.
\end{lemma}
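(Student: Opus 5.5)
The plan is to reproduce the standard Chen--Deng--Teng argument. Start from an $(\eps^2/8)$-NE $(x,y)$ and set $\delta = \eps^2/8$. The idea is to remove from each player's support the strategies that are far from best responses, then renormalize. We handle the row player; the column player is symmetric.

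Write $u_i = e_i^\intercal A y$ and $u^* = \max_i u_i$. Since $(x,y)$ is a $\delta$-NE, we have $x^\intercal A y \ge u^* - \delta$. Equivalently,
\[
\sum_i x_i (u^* - u_i) \le \delta .
\]
Let $B = \{ i : u_i < u^* - \eps/2 \}$ be the set of bad rows. By Markov's inequality on the nonnegative gaps $u^* - u_i$, we get
\[
\sum_{i\in B} x_i \le \frac{\delta}{\eps/2} = \frac{2\delta}{\eps} = \frac{\eps}{4}.
\]
Define $x'$ by zeroing out the entries of $x$ on $B$ and renormalizing; this is well defined since $\eps/4<1$ for $\eps < 1$, and for $\eps\ge 1$ the statement is trivial. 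Define $y'$ analogously from the column payoffs $x^\intercal B e_j$. Both are computed in polynomial time.

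The key step, and the only delicate one, is that the payoffs now have to be evaluated against the modified opponent strategy. Row $i$'s payoff against $y'$ is $e_i^\intercal A y'$, not $e_i^\intercal A y$. We have
\[
\|y' - y\|_1 \le 2\sum_{j\in B_{\mathrm{col}}} y_j \le \frac{\eps}{2},
\]
and since entries of $A$ lie in $[0,1]$, it follows that $|e_i^\intercal A y' - e_i^\intercal A y| \le \eps/4$ for every $i$. (Using the span of the entries, the bound is $\tfrac12\|y'-y\|_1$.)

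Now take any $i\in\supp(x')$. Then $i\notin B$, so $u_i \ge u^* - \eps/2$. Switching from $y$ to $y'$ perturbs both $u_i$ and $u^*$ by at most $\eps/4$ each. Hence
\[
e_i^\intercal A y' \ge \max_k e_k^\intercal A y' - \frac{\eps}{2} - \frac{\eps}{4} - \frac{\eps}{4} = \max_k e_k^\intercal A y' - \eps .
\]
This is exactly the $\eps$-WSNE condition for the row player. The symmetric argument handles the column player.

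The main obstacle is this bookkeeping of the constants, so that the threshold ($\eps/2$) and the perturbation losses sum to at most $\eps$. With $\delta=\eps^2/8$ they fit exactly as computed above. If a constant turns out off by a factor, I would adjust the threshold, e.g.\ take $B$ with gap $\eps/2$ and bound the discarded mass more carefully using $\tfrac12\|\cdot\|_1$.
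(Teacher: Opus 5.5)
Your proof is correct, and it is essentially the standard Chen--Deng--Teng argument that the paper cites without reproducing: discard every strategy more than $\eps/2$ below the best response (Markov bounds the discarded mass by $\eps/4$), renormalize, and use $|e_i^\intercal A(y'-y)| \le \tfrac12\|y'-y\|_1 \le \eps/4$ to absorb the change in the opponent's strategy, so the losses total exactly $\eps/2+\eps/4+\eps/4=\eps$. The only blemish is notational: you use $B$ both for the column player's payoff matrix and for the set of bad rows, and one of them should be renamed.
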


\paragraph{Sparsity.}
A matrix $M$ is called $k$-row-sparse if every row of $M$ has at most $k$ non-zero entries and $k$-column-sparse if every column of $A$ has at most $k$ non-zero entries. If $A$ is both $k$-row-sparse and $k$-column-sparse, we simply call it $k$-sparse. The bimatrix game $(A, B)$ is called $k$-sparse if both payoff matrices $A$ and $B$ are $k$-sparse.

\paragraph{Win--lose and $S$-valued games.}
A bimatrix game $(A,B)$ is a win--lose or $\{ 0, 1 \}$-valued game if all entries of $A$ and $B$ belong to $\{ 0, 1 \}$. More generally, we say a game is an $S$-valued game when all entries lie in a set $S \subseteq \bR$.


\section{Technical Overview}
\label{sec:to}

Our goal is to show that computing an $\eps$-NE of $3$-sparse $\{ 0, 1
\}$-valued bimatrix game is PPAD-hard for inverse polynomial $\eps$. In this
section we give a high-level sketch for showing PPAD-hardness of exact Nash
equilibria. In later sections we formally prove that these techniques can be
extended for approximate equilibria with inverse polynomial $\eps$.

We start the reduction from the \PURE{} problem. \PURE{} is a PPAD-hard
fixed-point computation problem~\cite{deligkas2024pure}. We reduce this problem to the equilibrium
computation problem in a class of \textit{polymatrix} games with the following
properties (see \cref{sec:res-bimatrix} for a formal definition):
(i) every player has two actions,
(ii) the game uses payoffs only from the set $\{ 0, 1, 2 \}$,
(iii) all payoff matrices are diagonal or anti-diagonal,
(iv) the underlying graph of the polymatrix game is bipartite, and
(v) every player can affect the payoff or has a payoff that depends upon at most two other players.
These properties become useful in our subsequent reduction steps.
\citet{deligkas2024pure} also give a similar reduction from \PURE{} to
polymatrix games, but their reduction does not ensure that all of the properties
above hold. We need these properties for our further reductions, so we show
that their reduction can be adapted to obtain these properties.

The two main technical contributions of the paper are to show that the class of
polymatrix games described above can be reduced to $3$-sparse $\{ 0, 1, 2, 8
\}$-valued bimatrix games, and that the $3$-sparse $\{ 0, 1, 2, 8 \}$-valued
bimatrix games can be reduced to $3$-sparse $\{ 0, 1 \}$-valued bimatrix games.
The next two subsections provide an overview of these two steps.

\subsection{$\{ 0, 1, 2, 8 \}$-Valued Bimatrix Games}
We reduce the class of polymatrix games described above to $3$-sparse $\{ 0, 1, 2, 8 \}$-valued bimatrix games (with some additional properties).

Recall that the input polymatrix game is assumed to be bipartite.
Without loss of generality we can assume that the two sides of the bipartite polymatrix game have equal number of players, say $n$.
We construct a bimatrix game with $3n + 2$ actions per player, where the row (column) player simulates the actions of the $n$ polymatrix players on the left (right) side of the bipartite graph.

We call the first $2n$ actions per bimatrix player \textit{primary} actions.
These actions correspond to the actions of the $n$ polymatrix players on each
side. The $2n$ primary actions are interpreted as $n$ pairs of actions: each
player in the polymatrix game has two actions, and a pair of primary actions
corresponds to these two actions. The top left $2n \times 2n$ primary-primary
block of the bimatrix game encodes the $n^2$ many $2 \times 2$ payoff
matrices in the bipartite polymatrix game. As all the payoffs in the polymatrix
game are in $\{ 0, 1, 2 \}$, all the payoffs in this top-left block are also in
this set.

We call the last $n + 2$ actions of the bimatrix game \textit{secondary}
actions. Outside of the top-left primary-primary block of size $2n \times 2n$,
the remaining game has a \textit{generalized matching pennies} flavour. The
payoffs are constructed in a manner to ensure that (i) the players randomize
uniformly across their primary action-pairs and (ii) each player's choice
between the two actions in a primary action-pair does not depend upon the
other player's strategy over the secondary actions. \cref{tab:bimatrix:overview} shows the payoffs
in the top-right, bottom-left, and bottom-right blocks for the case $n = 3$. We
next describe how this payoff structure achieves the required properties.

\begin{table}[pt]
    \centering
    \renewcommand{\arraystretch}{1.5}
    \begin{tabular}{| c : c | c : c | c : c || c | c | c | c | c | }
        \hline
        \cellcolor{gray!50} & \cellcolor{gray!50} & \cellcolor{gray!50} & \cellcolor{gray!50} & \cellcolor{gray!50} & \cellcolor{gray!50} &
        & \red{$1$} & \blue{$K$} & &
        \\
        \hdashline
        \cellcolor{gray!50} & \cellcolor{gray!50} & \cellcolor{gray!50} & \cellcolor{gray!50} & \cellcolor{gray!50} & \cellcolor{gray!50} &
        & \red{$1$} & \blue{$K$} & &
        \\

        \hline
        \cellcolor{gray!50} & \cellcolor{gray!50} & \cellcolor{gray!50} & \cellcolor{gray!50} & \cellcolor{gray!50} & \cellcolor{gray!50} &
        & & \red{$1$} & \blue{$K$} &
        \\
        \hdashline
        \cellcolor{gray!50} & \cellcolor{gray!50} & \cellcolor{gray!50} & \cellcolor{gray!50} & \cellcolor{gray!50} & \cellcolor{gray!50} &
        & & \red{$1$} & \blue{$K$} &
        \\

        \hline
        \cellcolor{gray!50} & \cellcolor{gray!50} & \cellcolor{gray!50} & \cellcolor{gray!50} & \cellcolor{gray!50} & \cellcolor{gray!50} &
        & & & \red{$1$} & \blue{$K$}
        \\
        \hdashline
        \cellcolor{gray!50} & \cellcolor{gray!50} & \cellcolor{gray!50} & \cellcolor{gray!50} & \cellcolor{gray!50} & \cellcolor{gray!50} &
        & & & \red{$1$} & \blue{$K$}
        \\

        \hline \hline
        \blue{$1$} & \blue{$1$} & & & & & & & & & \red{$1$}
        \\
        \hline
        \red{$K$} & \red{$K$} & \blue{$1$} & \blue{$1$} & & & & & & &
        \\
        \hline
        & & \red{$K$} & \red{$K$} & \blue{$1$} & \blue{$1$} & & & & &
        \\
        \hline
        & & & & \red{$K$} & \red{$K$} & \blue{$1$} & & & &
        \\
        \hline
        & & & & & & \red{$1$} & \blue{$1$} & & &
        \\
        \hline
    \end{tabular}
    \caption{Constructed $\{0, 1, 2, K\}$-valued bimatrix game for $n = 3$, where $K = 8$. The \red{red} (\blue{blue}) values represent the row (column) player's payoffs. The top-left shaded block encodes the polymatrix game payoffs. All white blanks are $0$ payoffs.}
    \label{tab:bimatrix:overview}
\end{table}

We show that all primary action-pairs and all secondary actions get positive probability in any NE. (But an individual action inside a primary action-pair may get zero probability.) At a high level, we do this as follows:
Look at \cref{tab:bimatrix:overview}, and
consider a primary action $i$ of the row player.
For this row $i$, there is exactly one column $j$ where the row player has
payoff $1$  (coloured in red in the table) in the top-right block, all other payoffs in this row are $0$ in the top-right block.

We can show that if the column player gives $0$ probability to column $j$, then
the row player gives $0$ probability to row $i$. In particular, note that the
input polymatrix game has payoff matrices diagonal or anti-diagonal, payoffs
that are at most $2$, and each player's payoff depends upon at most two other players.
Using these properties, we can show that the $i$-th row has at most two
positive payoffs in the top-left block and these positive values are at most
$2$. Then we can show, irrespective of the strategy of the column player
(other than our assumption that the probability of column $j$ is $0$), one of the secondary row actions $i' \ge
2n$ of the row player (specifically, a row with the two $K = 8$ payoffs present) gives strictly
higher utility than the primary row action $i$. Therefore, if column player plays
action $j$ with $0$ probability, then the row player will play action $i$ with
$0$ probability. This idea can be repeated with small modifications to show
that the players give $0$ probability to all rows (and columns), leading to
contradiction. Hence, all primary action-pairs and all secondary actions get
positive probability in every NE.


Observe that for each primary-action pair of the column palyer, there is a
unique row that gives a payoff of $K$ for both actions in this primary
action-pair. As the row player plays all secondary actions with positive
probability in an NE, they must be giving the same expected utility to
the row player, therefore all the primary action-pairs are played uniformly by
the column player. A symmetric argument shows that all primary action-pairs are
played uniformly by the row player as well. On the other hand, notice that the
strategy of a player across their secondary actions does not affect the other
player's choice among the two actions in any given primary action-pair.

Summarizing, we get that the two players randomize uniformly across their primary action-pairs, and a player's choice among the two actions in any primary action-pair does not depend upon the strategy of the other player over her secondary actions.
Hence the players face the same payoff structure as the polymatrix game when making the choice among the two actions in any primary action-pair. This allows us to map the NE strategies of the bimatrix players for their primary actions into an NE of the polymatrix game.

Our reduction above extends a polymatrix to bimatrix reduction by
\citet{ghoshUniqueNash}. Their construction also enforces
uniformity but introduces negative payoffs and payoffs that are $\Theta(n)$. On
the other hand, we improve their construction to get rid of the negative
payoffs and leverage the properties of the input polymatrix game $G$ to ensure
that all the payoffs are in the set $\{ 0, 1, 2, 8 \}$.

Our reduction here also ensures that the resulting bimatrix game is $3$-sparse.
In the initial polymatrix game, the payoff for each player
depended on at most two other players, and that each payoff matrix is diagonal
or anti-diagonal. So each row or column in the top-left block in
the bimatrix game contains at most two non-zero payoffs. Thus, each primary
strategy row or column contains at most three non-zero payoffs, when we take
into account the extra $1$ or $K=8$ payoff that we add in the top-right or bottom-left
blocks (it is important to note here that the $1$ and $K$ payoffs in a primary
row or column lie in different matrices, so they increase the sparsity only by
$1$). In addition to this, each secondary row or column contains either two
$1$s or two $K$s.

\subsection{Column Simulation}

At this point we have a $3$-sparse $\{ 0, 1, 2, 8 \}$-valued bimatrix game,
and our goal is to turn this into a $3$-sparse $\{0, 1\}$-valued bimatrix game.
Here we follow a \emph{column simulation} approach, in which each column of the
game that contains a payoff greater than $1$ is simulated by a set of columns
that have smaller payoffs. This technique was used in prior
work~\cite{abbott2005complexity,chen2007approximation}, with the strongest
prior result being that of \citet{chen2007approximation}
when they showed that finding a $O(1/\poly(n))$-equilibrium in a win-lose game
is PPAD-hard. They present a single general simulation that takes any bimatrix
game and directly produces a win-lose game. However, their simulation does not
produce a game with constant sparsity.
Our approach is to produce three highly-tailored simulations that will take our
$3$-sparse $\{ 0, 1, 2, 8 \}$-valued bimatrix game, and produce a $3$-sparse
win-lose game.

\paragraph{\bf First-type single column simulations.}

The first task is to make payoffs of $8$ in our input game smaller. Suppose that in
our game $(A, B)$, the first column of $A$ is $A_0 = (1, 2, 8, 0 \dots)^\intercal$.
We can simulate this column by building a new game $(A', B')$ in the following
way.

\begin{align*}
A'=\left(
\begin{array}{c c !{\vrule width \arrayrulewidth} c !{\vrule width \arrayrulewidth} c}
2 & 0 & 1 &  \\
0 & 1 & 1 &  \\ \hline
1 & 0 &   & \multirow{5}{*}{$A_{1,\dots,n-1}$} \\
0 & 1 &   &  \\
0 & 4 &   &  \\
0 & 0 &   &  \\
\vdots & \vdots & &
\end{array}
\right)
\qquad
B'=\left(
\begin{array}{c c !{\vrule width \arrayrulewidth} c !{\vrule width \arrayrulewidth} c}
0 & 1 &   &  \\
1 & 0 &   &  \\ \hline
  &   & \multirow{5}{*}{$B_0$}  & \multirow{5}{*}{$B_{1,\dots,n-1}$} \\
  &   &   &  \\
  &   &   &  \\
  &   &   &  \\
\phantom{\vdots}  &   &   &
\end{array}
\right)
\end{align*}

Every blank space in the two matrices above is filled with zeroes.

Here we give a high-level sketch for why this works for exact Nash equilibria.
We will later formally prove that this scheme works for approximate equilibria
with $\eps = O(1/\poly(n))$.
Let us refer to the three column blocks as column block~1, column block~2, and
column block~3,
respectively, and the two row blocks as row block~1 and row block~2.

We will show the following properties hold at an exact Nash equilibrium in the game $(A', B')$.
\begin{enumerate}
\item If the column player allocates positive probability to column block 1, then the
row player must allocate positive probability to row block 1, because otherwise the
column player will receive payoff zero for that column. Moreover, if both
players play their first blocks, then they must mix over both of the strategies
in those blocks.

\item Let $c_1$ and $c_2$ denote the probability assigned to columns $1$ and
$2$ of $A'$. We must have $c_2 = 2 \cdot c_1$. This is to ensure that the row
player is indifferent between the two rows in row block~1 to allow them to mix
over those two rows.

\item This then means that the payoff to the row player arising from columns 1
and 2 of $A'$ are exactly the same as the payoffs that would be obtained when
the column player plays $c_1$ on the first column of $A$. In effect, each
payoff in column 2 is doubled, so the $2$ and the $8$ in $A_0$ are simulated by
the $1$ and the $4$ in column 2 of $A'$.

\item If the column player allocates positive probability to column block~1,
then they must also allocate positive probability to column block~2. This is
because if zero probability was allocated to column block~2, then the third row
of row block 2 would be strictly better than both strategies in row block~1
when the column player plays a strategy with $c_2 = 2 \cdot c_1$. This would in
turn cause the row player to place zero probability on row block~1, and the
contrapositive of point 1 above would then imply that the column player
places zero probability on column block 1, giving a contradiction.

\item The column player cannot allocate all their probability to column
block~2, because if they did, the row player's best-response would be to play
row block~1, and the column player would receive zero payoff.

\item
The row player must allocate positive probability to rows in row block~2. To
see why, first observe that if the column player allocates positive probability
to column block~2, then the row player must place positive probability on row
block~2, because if not, column block~2 would give zero payoff to the column
player. On the other hand, if the column player allocates zero probability to
column block~2, then they also allocate zero probability to column block~1 by
point~4 above. So in this case the column player allocates all their
probability to column block~3, and hence only rows in row block~2 give non-zero
payoff to the row player.
\end{enumerate}

These properties allow us to translate an equilibrium $(x^*, y^*)$ of $(A',
B')$ back to an equilibrium $(x, y)$ of $(A, B)$.
\begin{itemize}
\item To construct $x$ we simply take the probabilities allocated to row
block~2 and re-normalize them to obtain a probability distribution. Point 6
above says that a positive amount of probability is allocated to row block~2,
so this re-normalization is well-defined.

\item To construct $y$, we use $c_1$ as the probability that $y$ assigns
to the first column of $A$, and then we use the probabilities in column block~3
for the rest of the columns. Since we have effectively deleted the probability
assigned to the rest of column block~1 and to column block~2, we then need to
re-normalize to obtain a probability distribution. Point 5 above implies that
this re-normalization is well defined.
\end{itemize}

To see that this is an equilibrium, note that the row player's payoffs in $A$
are the same as the row player's payoffs in row block~2 of $A'$, but scaled by the
re-normalization factor used to define $y$. So any row in row block~2 of $A'$
that is a best-response against $y^*$ continues to be a best-response in $A$
against $y$.
Meanwhile, the payoff to the column player of the columns in
column blocks~2 and~3 in $B'$ are the same as the payoffs in $B$ but scaled by
the re-normalization factor used to define $x$. Point~4 above implies that if
the column player allocates positive probability to column block~1 in $B'$, and
hence positive probability to column 1 in $B$, then they also allocate positive
probability to column block~2. Hence column block~2 in $B'$ is a best-response against
$x^*$, meaning that column 1 in $B$ is a best-response against $x$.

\paragraph{\bf Applying column simulations.}

We can use the method described above to turn the first column of our $\{ 0, 1,
2, 8 \}$-valued bimatrix game into three columns that contain payoffs in the
set $\{ 0, 1, 2, 4 \}$. Note also that the new rows and columns that we
introduce in this way are $3$-sparse. So we can apply this method to each
column of $A$, and then swap the two players, and apply the same method to each
column of $B^\intercal$. The result will be a $\{ 0, 1, 2, 4 \}$-valued bimatrix game
that is $3$-sparse.

Note that the single-column simulation is essentially a generic method that
divides the maximum payoff in a game by two. So the next step is
to take our $\{ 0, 1, 2, 4 \}$-valued bimatrix game, and to apply the
single-column simulation again. This time we turn all of our~$4$ payoffs
into~$2$ payoffs, and we arrive at a $\{ 0, 1, 2 \}$-valued bimatrix game.

While it might seem tempting to apply the column simulation step one more time
to arrive at a win-lose game, this unfortunately does not work. This is because
the method introduces a payoff of $2$ in the upper left corner of the game, and
so we can proceed no further with this technique.

\paragraph{\bf Second-type single column simulations.}

The purpose of a second-type single column simulation is to reduce a $2$ payoff
to two $1$ payoffs. Suppose that we have a game $(A, B)$ in which the first column
of $A$ is $A_0 = (1, 1, 2, 0, \dots)^\intercal$.
We produce the game $(A', B')$ in the following way.

\begin{align*}
A' &=\left(
\begin{array}{c c c  !{\vrule width \arrayrulewidth} c  !{\vrule width \arrayrulewidth} c}
1 & 0 & 0 & 1 & \\
0 & 1 & 0 & 1 & \\
0 & 0 & 1 & 1 & \\ \hline
1 & 0 & 0 &  & \multirow{5}{*}{$A_{1,\dots,n-1}$} \\
0 & 1 & 0 &  &\\
0 & 1 & 1 &  &\\
0 & 0 & 0 &  &\\
\vdots & \vdots & \vdots & &
\end{array}
\right) &
B' &=\left(
\begin{array}{c c c  !{\vrule width \arrayrulewidth} c  !{\vrule width \arrayrulewidth} c}
0 & 1 & 0 &  \\
0 & 0 & 1 &  \\
1 & 0 & 0 &  \\ \hline
  &   &   &\multirow{5}{*}{$B_0$}  & \multirow{5}{*}{$B_{1,\dots,n-1}$} \\
  &   &   &  &  \\
  &   &   &  &  \\
  &   &   &  &  \\
\phantom{\vdots}  &   &   & &
\end{array}
\right)
\end{align*}

This works for essentially the same reason that first-type single column
simulations work. The main difference is that the column player is now required
to mix uniformly over the first three columns. This means that the $2$ payoff
in the original matrix is simulated by the two~$1$ payoffs in the third row of
row block~2. Note also that all the rows and columns introduced
here have sparsity $3$.

\paragraph{\bf Dual column simulations.}

What we have presented so far is unfortunately not enough to get us to a
win-lose bimatrix game that has sparsity~$3$. To see why, recall that our
original game had rows that contained two $8$ payoffs. We then applied
first-type column simulations to first produce a game in which that row
contained two $4$ payoffs, and then to produce a game in which that row
contained two $2$ payoffs. But if we now apply a second-type single column
simulation, each $2$ in the row will be replaced by two $1$s, and we will have
sparsity $4$, not $3$.

We address this by introducing \emph{dual column} simulations, which allow us
to replace a row that contains two $2$s, with a row that contains only one $2$.
Let $(A, B)$ be a game such that
\begin{equation*}
(A_0, A_1) = \begin{pmatrix}
1 & 0 \\
0 & 1 \\
2 & 2 \\
0 & 0 \\
\vdots & \vdots
\end{pmatrix}.
\end{equation*}
We will show that this is the worst case, in the sense that by the time that we need
to apply a dual column simulation, for every $2$ that
shares a row with another $2$, there is at most one $1$ payoff in either of the
columns that contain one of the $2$s.

The dual column simulation of these two columns is as follows.
\begin{align*}
A' &=\left(
\begin{array}{ c c c !{\vrule width \arrayrulewidth} c !{\vrule width \arrayrulewidth} c !{\vrule width \arrayrulewidth} c }
1 & 0 & 0 & 1 &   & \\
0 & 1 & 1 & 1 &   & \\ \hline
1 & 0 & 0 &   & 1 & \\
0 & 1 & 1 &   & 1 & \\ \hline
0 & 1 & 0 &   &   & \multirow{5}{*}{$A_{2, \dots,n-1}$} \\
0 & 0 & 1 &   &   & \\
2 & 0 & 0 &   &   & \\
0 & 0 & 0 &   &   & \\
\vdots & \vdots & \vdots &   &   & \\
\end{array}
\right)&
B' &=\left(
\begin{array}{ c c c !{\vrule width \arrayrulewidth} c !{\vrule width \arrayrulewidth} c !{\vrule width \arrayrulewidth} c }
0 & 1 & 0 &   &   & \\
1 & 0 & 0 &   &   & \\ \hline
0 & 0 & 1 &   &   & \\
1 & 0 & 0 &   &   & \\ \hline
  &   &   & \multirow{5}{*}{$B_0$}  & \multirow{5}{*}{$B_1$}  & \multirow{5}{*}{$B_{2, \dots,n-1}$} \\
  &   &   &   &   & \\
  &   &   &   &   & \\
  &   &   &   &   & \\
\phantom{\vdots} &  &  &   &   & \\
\end{array}
\right)
\end{align*}

Let $c_1$, $c_2$, and $c_3$ be the probabilities that the column player
assigns to the first three columns of $A'$. The key property is that in every
equilibrium we will have $c_1 = c_2 + c_3$, meaning that the column player can
set $c_2$ and $c_3$ independently of one another, but $c_1$ must always be the
sum of the two other probabilities. So the payoffs of the first two rows of row
block~3 depend only on $c_2$ and $c_3$, respectively, but the payoff of the
third row is always $2 \cdot (c_2 + c_3)$. So we are correctly simulating the
original game in which there was a payoff of $2$ in each of those two columns.
In particular, the two $2$s in the original row of $A$ are now replaced by a
single $2$ in the corresponding row of $A'$.

Applying a dual column simulation to each row that contains two $2$s before
applying the second-type single column simulation ultimately gives us
a win-lose bimatrix game with sparsity~$3$.

\paragraph{\bf Polynomial approximations.}

We have described the reduction above in terms of exact equilibria. When formally proving correctness of the reduction, we show that each of the
steps also works for polynomially small WSNEs. The main difference
between what we presented above and the formal reduction is that, when considering approximate equilibria, it is important to simulate all
columns of a matrix in parallel. That is, rather than applying a first-type
single column simulation to each column sequentially, we perform a
single reduction that simulates each column in parallel.

We ultimately show that each simulation step (of any type) increases $\eps$ by a factor of
$O(n^2)$. We apply four first-type simulation steps (two for each
player), two dual column simulation steps, and two second-type
simulation steps, giving us eight simulation steps in total. Thus, if it is
PPAD-hard to find a $O(1/n^2)$-WSNE of the original game, it is PPAD-hard to
find a $O(1/n^{17})$-WSNE of the final win-lose sparsity $3$ game.






\section{Hardness for $3$-Sparse $\{ 0, 1, 2, 8 \}$-Valued Bimatrix Games} \label{sec:res-bimatrix}

In this section, we prove that the problem of computing an $\eps$-WSNE for inverse-polynomially small $\eps > 0$ in $3$-sparse $\{ 0, 1, 2, 8 \}$-valued bimatrix games is PPAD-hard.
Given \cref{lm:ne2wsne}, this implies that computing an $\eps$-NE is also PPAD-hard for sufficiently small inverse-polynomial $\eps$.

We do a reduction from the \PURE{} problem via the $\eps$-\POLY{} problem. The \PURE{} problem is a PPAD-hard fixed-point computation problem in a circuit called the \emph{pure circuit}~\cite{deligkas2024pure}, formally defined in \cref{sec:pure2poly}.
We use $\eps$-\POLY{} to denote the $\eps$-WSNE computation problem in a restricted class of \textit{polymatrix games} defined below.
The restrictions become useful during our subsequent reduction to $3$-sparse $\{ 0, 1, 2, 8 \}$-valued bimatrix games. \citet{deligkas2024pure} also reduce \PURE{} to the equilibrium computation problem of polymatrix games. Our proof has a similar structure, but we modify the constructed polymatrix game to assist our subsequent reduction. Below we formally define the $\eps$-\POLY{} problem and use it to show hardness of computing equilibrium in $3$-sparse $\{ 0, 1, 2, 8 \}$-valued bimatrix games. In \cref{sec:pure2poly}, we formally define the \PURE{} problem and reduce it to the $\eps$-\POLY{} problem.

\begin{definition}[$\eps$-\POLY{}]
In a general polymatrix game, there are $n$ players with $m_i$ actions for player $i \in [n]$.
Let $a_i \in [m_i]$ be the action of player $i$. The payoff of player $i$ from the sub-game associated with player $j \neq i$ is determined by the matrix $A^{ij}$, where player $i$ gets a payoff of $A^{ij}_{a_i a_j}$ from this subgame if $i$ plays $a_i$ and $j$ plays $a_j$. Given the actions of all agents $(a_1, \ldots, a_n)$, the utility of player $i$ is equal to $\sum_{j \neq i} A^{ij}_{a_i a_j}$.
We make the following additional assumptions to restrict the class of polymatrix games we use in our reductions:
\begin{itemize}
    \item Each player has exactly two actions: $a_i \in \{0, 1\}$ for all $i$.

    \item All the payoffs are from the set $\{0, 1, 2\}$: $A^{ij} \in \{0, 1, 2\}^{2 \times 2}$ for all $i$ and $j \neq i$.

    \item All payoff matrices are either diagonal or anti-diagonal: for every $i$ and $j \neq i$, either $A^{ij}_{01} = A^{ij}_{10} = 0$ or $A^{ij}_{00} = A^{ij}_{11} = 0$.


    \item The game is \textit{bipartite}: We can view the polymatrix game as a directed graph, where the vertices are the players, and there is an edge from player $j$ to player $i$ if $A^{ij} \neq 0$. We assume that the underlying undirected graph of this directed graph is bipartite.

    \item The \textit{in-degree} is at most two: The payoff of any player depends upon at most two other players, i.e., for every player $i$, there exist at most two players $k$ and $k'$ such that $A^{ij} = 0$ for all $j \notin \{ k, k' \}$.

    \item The \textit{out-degree} is at most two: Any player can affect the payoff of at most two other players, i.e., for every player $i$, there exist at most two players $k$ and $k'$ such that $A^{ji} = 0$ for all $j \notin \{ k, k' \}$.


\end{itemize}
The $\eps$-\POLY{} problem is to find an $\eps$-WSNE of a polymatrix game satisfying the assumptions given above, for a given $\eps > 0$.
\end{definition}

\begin{theorem}\label{thm:pure2poly}
The $\eps$-\POLY{} problem is PPAD-hard for all $\eps < 1/2$.
\end{theorem}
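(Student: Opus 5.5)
The plan is to reduce from \PURE{}, following the outline of \citet{deligkas2024pure} but with gadgets chosen to satisfy the restrictions in the definition of $\eps$-\POLY{}. Recall that in \PURE{} every node takes a value in $\{0,1,\bot\}$, and the gates are \NOT{}, a binary gate (I will use \textsc{Or}; \AND{} is symmetric) and \PURIFY{}. The problem is PPAD-hard even when every node is the output of exactly one gate and an input to at most two gates, and I will assume this bounded-fan-out version.

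\textbf{Encoding and gadgets.} Each node $v$ becomes a player $p_v$ with actions $\{0,1\}$. From an $\eps$-WSNE I read off value $0$ or $1$ if $p_v$ plays that action purely, and $\bot$ otherwise. The gadgets use only diagonal or anti-diagonal matrices with entries in $\{0,1,2\}$, and I write $q_u$ for the probability that $p_u$ plays $1$.
\begin{itemize}
\item \NOT{} with input $u$ and output $v$: take $A^{vu}$ anti-diagonal with both entries equal to $1$. If $u$ is pure, then $p_v$'s payoff gap is $1>\eps$, so $p_v$ is forced to the negated value.
\item \textsc{Or} with inputs $u,w$ and output $v$: action $1$ of $p_v$ earns $2$ for each input playing $1$, and action $0$ earns $1$ for each input playing $0$. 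Both of these are diagonal matrices. If both inputs are pure $0$, the payoffs are $0$ versus $2$. If some input is pure $1$, action $1$ beats action $0$ by at least $1$ whatever the other input does. So the \textsc{Or} constraint holds with margin $\ge 1>\eps$.
\item \PURIFY{} with input $u$ and outputs $v_1,v_2$: $p_{v_1}$ earns $2q_u$ for action $1$ and $1-q_u$ for action $0$, giving threshold $1/3$. $p_{v_2}$ earns $q_u$ versus $2(1-q_u)$, giving threshold $2/3$. If $u$ is pure, both outputs copy it with gap $\ge 1$. In general, $p_{v_1}$ can mix only if $|3q_u-1|\le\eps$, which forces $q_u<1/2$. $p_{v_2}$ can mix only if $|3q_u-2|\le\eps$, which forces $q_u>1/2$. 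Since $\eps<1/2$, at most one output is $\bot$, as \PURIFY{} requires.
\end{itemize}

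\textbf{Bipartiteness and degrees.} The direct construction need not be bipartite. I subdivide every wire $u\to g$ (where $g$ is a gate with output $v$) by a fresh player $w$ computing $\NOT(u)$ as above. I then implement $g$ reading $w$ in place of $u$, with the roles of $w$'s two actions swapped. This swap only exchanges diagonal and anti-diagonal matrices, and it is harmless because $\bot$ is preserved in the sense needed below. The game is then bipartite, with original node players on one side and wire players on the other. Every player has in-degree at most $2$, since gates have at most two inputs and wire players have one. Every player has out-degree at most $2$, since nodes have fan-out at most $2$ and wire players have one.

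\textbf{Main obstacle: the $\bot$ semantics of the intermediate \NOT{}.} When $u=\bot$, the wire player $w$ may still play purely, so the gate $g$ sees a definite value where the circuit had $\bot$. I will argue monotonicity: every gate constraint with a $\bot$ input is implied by the constraint obtained by replacing that input with any definite bit. For \textsc{Or}, replacing $\bot$ by a bit only narrows the allowed outputs. For \PURIFY{}, a definite input forces both outputs to that bit, which is in particular an allowed outcome for input $\bot$. Hence the original-node values extracted from an $\eps$-WSNE satisfy every gate of the original circuit, giving a solution of \PURE{}. The remaining checks, that each gadget's payoff gaps exceed $\eps$ and that the reduction is polynomial, are routine.
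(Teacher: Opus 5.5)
Your proof is correct, and your gadgets are essentially the paper's. The \NOT{} matrix is identical, your \textsc{Or} gadget is the mirror image of the paper's \AND{} gadget, and your two \PURIFY{} matrices are exactly the paper's $\mathrm{diag}(1,2)$ and $\mathrm{diag}(2,1)$. You analyse them through the exact mixing thresholds $1/3$ and $2/3$, while the paper splits at $q_u = 1/2$; both need only $\eps<1/2$.

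The genuine difference is how bipartiteness is obtained. The paper invokes the stronger structural form of the \PURE{} hardness result, in which the circuit graph is already bipartite with in- and out-degree at most two. So the polymatrix interaction graph is literally the circuit graph and nothing more needs arguing. You assume only bounded fan-out and manufacture bipartiteness by subdividing each gate input with a \NOT{} wire player read with swapped actions. This costs a constant-factor blow-up in the number of players and obliges you to prove that gate constraints are monotone under refining $\bot$ to a bit, which you do correctly. In exchange, your argument does not rely on the structural part of the cited theorem; self-loops and odd cycles in the circuit are handled automatically. It also makes explicit the monotonicity that the paper uses tacitly when it replaces an \AND{} input $u$ by $\NOT(\NOT(u))$ and asserts this "does not change the behaviour of the circuit".

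One small correction to your degree count: a wire player feeding a \PURIFY{} gate has out-degree two, not one, since both outputs read it. This is still within the bound, and it is essential that this be a single shared player. With a separate wire player per output, $u$ mixing at $q_u=1/2$ would let the two wire players mix at $1/3$ and $2/3$ respectively. Both \PURIFY{} outputs could then be $\bot$, and $u$'s out-degree could reach three.
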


We prove \cref{thm:pure2poly} by doing a reduction from \PURE{}; the formal definition of \PURE{} and the reduction is given in \cref{sec:pure2poly}.
Notice that the in-degree of each player is at most $2$ and all payoffs lie between $0$ and $2$ in the class of polymatrix games described in the definition of the $\eps$-\POLY{} problem. So, the minimum and maximum utility of each player lies between $0$ and $8$. A normalized (and strategically equivalent) version of this polymatrix game scales down all payoffs by $1/4$, which makes all payoffs in the range $[0, 1]$. \cref{thm:pure2poly} implies PPAD-hardness for $\eps$-WSNE for all $\eps < 1/8$ in the normalized game.

Next we move towards proving PPAD-hardness for computing $\eps$-WSNEs for inverse-poly $\eps$ in a restricted subclass of $3$-sparse $\{ 0, 1, 2, 8 \}$-valued bimatrix games. We introduce these restrictions because they become useful in our subsequent reduction to $3$-sparse win-lose bimatrix games in the next section.

\begin{definition}[$\eps$-\RESBI{m}]
\label{def:resbi}
The bimatrix game $(A, B)$ is $3$-sparse and $\{ 0, 1, 2, 8 \}$-valued, where $A$ and $B$ are $m \times m$ matrices. Further, the rows and columns of the payoff matrices satisfy the following additional properties: Every column of $A$ and $B^\intercal$
\begin{itemize}
    \item either has one or two $1$s,
    \item or has one $8$ and at most two entries from $\{ 1, 2 \}$,
\end{itemize}
and all other entries are $0$s. Every row of $A$ and $B^\intercal$
\begin{itemize}
    \item either has at most three entries from $\{ 1, 2 \}$,
    \item or has exactly two $8$s,
\end{itemize}
and all other entries are $0$s. The $\eps$-\RESBI{m} problem is to find an $\eps$-WSNE of a bimatrix game $(A, B)$ satisfying the assumptions given above, for a given $\eps > 0$.
\end{definition}

\begin{theorem}\label{thm:res-bimatrix}
The $\eps$-\RESBI{m} problem is PPAD-hard for all $\eps < 1/48m$.
\end{theorem}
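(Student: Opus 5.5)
We reduce from $\eps'$-\POLY{} with a constant $\eps'<1/2$; this problem is PPAD-hard by \cref{thm:pure2poly}. Let $G$ be such a polymatrix game. Since $G$ is bipartite, we split its players into sides $L$ and $R$. We pad with dummy players, who have two actions and all-zero payoff matrices, so that $|L|=|R|=n$. We then build the $m\times m$ bimatrix game $(A,B)$ with $m=3n+2$ exactly as in \cref{tab:bimatrix:overview}, with $K=8$. The row player's primary actions $2i,2i+1$ are the two actions of player $i\in L$, and the column player's primary actions are those of $R$. In the primary–primary block we set $A_{2i+a,2j+b}=A^{ij}_{ab}$ and $B_{2i+a,2j+b}=A^{ji}_{ba}$. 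The secondary blocks follow the generalized matching-pennies pattern shown in the table.

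\textbf{Step 1: the structural conditions of \cref{def:resbi}.} This is a direct check. Each polymatrix matrix is diagonal or anti-diagonal, and each player has in-degree and out-degree at most two. Hence every primary row or column of $A$ and $B^\intercal$ has at most two entries from $\{1,2\}$ in the top-left block. It gets one more nonzero entry from the secondary blocks: a $1$ in its own matrix, or a $K$ in the other player's matrix. This gives the ``at most three from $\{1,2\}$'' row type and the ``one $8$ and at most two from $\{1,2\}$'' column type. Each secondary row or column contains either two $8$s or one or two $1$s. We need to verify this by reading off the table, and the corner entries and the $1$ in the last block need some care. Sparsity $3$ then follows.

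\textbf{Step 2: every primary pair and every secondary action is played with nonnegligible probability.} Let $(x,y)$ be an $\eps$-WSNE with $\eps<1/48m$. Suppose the column player puts weight below some threshold $\delta$ on the secondary column $j$ that gives row $i$ its red $1$. Then row $i$ earns at most $2(y_{\text{pair}_1}+y_{\text{pair}_2})+\delta$. We compare this with the secondary row carrying $K=8$ on the relevant column pairs, and argue that it beats row $i$ by more than $\eps$. Chaining these implications around the cycle in the table would force all probability to zero, which is a contradiction.

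\textbf{Step 3: near-uniformity.} Since all secondary rows are $\eps$-best responses and each earns $K$ times the mass of one column pair, the column pair masses are equal up to $\eps/K$. Symmetrically, the row pair masses are nearly uniform, each about $1/(n+\text{const})$. Because of the diagonal placement of the $1$/$K$ entries, a player's choice within a primary pair is unaffected by the other player's secondary play. So the payoff difference between actions $2i$ and $2i+1$ equals $\sum_j (A^{ij}_{0\cdot}-A^{ij}_{1\cdot})\cdot y_{\text{pair }j}$. After normalizing by the roughly uniform pair mass $\approx 1/n$, this is the polymatrix payoff difference up to an error of $O(n\eps)$.

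\textbf{Step 4: recovering an equilibrium of $G$.} We define $p_i=x_{2i+1}/(x_{2i}+x_{2i+1})$, and similarly for the column player. If $p_i>0$, action $1$ is an $\eps$-best response in the bimatrix game. Scaling by $n$, it is then an $O(m\eps)$-best response in $G$. With the bound $\eps<1/48m$, this stays below $1/2$, so we obtain an $\eps'$-WSNE of $G$.

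The main obstacle is Step 2. The thresholds must be quantitative and inverse-polynomial, since we work with approximate equilibria rather than exact ones. We also need to check that the dominating secondary row really wins by a margin exceeding $\eps$ using only payoffs in $\{0,1,2,8\}$, and that the boundary secondary actions (the corner $1$s) close the cycle of implications. We would first attempt this with threshold $\delta=\Theta(1/m)$ and track the constants to reach exactly $1/48m$.
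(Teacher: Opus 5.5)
Your plan follows the paper's proof step for step. It uses the same construction with $N=2n+2$ indices, your Step~1 is the check in \cref{app:resbi-check}, and Steps~2--4 are \cref{lm:positive-prob}, \cref{lm:unif-over-indices} and \cref{lm:encode}.

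The place where the argument as you describe it would actually fail is Step~2. You bound the payoff of primary row $i$ by $2(y_{p_1}+y_{p_2})+\delta\le 4y_{p^*}+\delta$, where $p^*$ is the heavier of the two relevant column pairs. You then intend to show that the secondary row carrying $K$ on $p^*$, which earns $8y_{p^*}$, beats row $i$ by more than $\eps$. But the margin is only $4y_{p^*}-\delta$, and this is at most $\eps$ whenever $y_{p^*}\le(\delta+\eps)/4$. For example, if both pairs carry essentially no mass, the $K$-rows earn essentially nothing and cannot dominate row $i$, however you tune the constants.

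This sub-case needs a different comparison. Row $i$ then earns at most $\eps+2\delta$ in total. Meanwhile the row player's best-response value is at least $1/N$: some index carries mass at least $1/N$, and every column index gives some row action a payoff of at least $1$. With $\delta=1/2KN$ and $\eps<1/6KN$, the resulting gap exceeds $\eps$. The same global bound also handles a case your sketch omits: when the index to be eliminated is secondary, it earns less than $K\delta=1/2N$. This two-way case split is exactly how the paper proves \cref{lm:positive-prob}.

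A smaller inaccuracy is in Step~3: the pair masses are not approximately $1/(n+c)$ for a fixed constant $c$. Even in an exact equilibrium they depend on the polymatrix play; one can only show they lie in $[1/(9n+16),\,1/(5n+16)]$. So the only usable normaliser is the lower bound from Step~2. Dividing the bit-gap by $y_{\min}\ge 1/2KN$ and using $|y_j-y_{\min}|\le\eps/K$ (with $|T_i|\le 2$ and $|\Delta^{ij}|\le 2$) gives a polymatrix gap of at most $2KN(\eps+\eps/2)=24N\eps$. Since $m=3n+2>N$, the assumption $\eps<1/48m<1/48N$ makes this less than $1/2$. That is where the constant $48$ comes from.
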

\begin{proof}
We reduce from the $\delta$-\POLY{} problem for $\delta < 1/2$, which is PPAD-hard as proven in \cref{thm:pure2poly}.
Let $G$ denote the input polymatrix game from the $\delta$-\POLY{} problem.
We will construct a suitable bimatrix game $(A, B)$ based on $G$.
Please note that $G$ satisfies all the constraints described in the definition of $\delta$-\POLY{}, and these constraints of $G$ will be used to get the properties of the bimatrix game $(A, B)$ described in the definition of $\eps$-\RESBI{m}.

Note that the input polymatrix game $G$ is bipartite. For simplicity, let us assume that $G$ has $2n$ players, with $n$ players on each side of the bipartite graph. This is without loss of generality: if there is a side with less players, we add a few dummy players that don't affect the original game.\footnote{
For example, on the smaller side we add players $i$ and $i'$ and on the larger side we add player $j$, with all payoff matrices between these players the $2 \times 2$ identity matrix,
and the payoff matrices with other players
$0$.
By repeating this process we can make the number of players on both sides equal.
Notice that this modification will not affect the behaviour of original players, still satisfies the properties of $\delta$-\POLY{}, and \cref{thm:pure2poly} applies.}
In our constructed bimatrix game $(A, B)$, both player will have $m = 3n + 2$ actions, and the row (column) player will simulate the actions of all the players on the left (right) side of the bipartite graph.

Also note that the input polymatrix game $G$ has both in-degree and out-degree at most $2$.
Let $K = 8 > 4 = 2 \cdot 2$. Our constructed bimatrix game $(A, B)$ will be $3$-sparse and $\{ 0, 1, 2, K \}$-valued.

We use the following notation for the input polymatrix game $G$:
We index the players on each side of the bipartite graph using $[n]$. For player $i \in [n]$ on the left side and player $j \in [n]$ on the right side, and action $s \in \{0,1\}$ of player $i$ and action $t \in \{0,1\}$ of player $j$, let the payoff of player $i$ be denoted by $A^{ij}_{st}$ and of player $j$ be denoted by $B^{ij}_{st}$.  We assume $G$ satisfies all the requirements stated in the definition of $\delta$-\POLY{}.

\paragraph{Construction of the output bimatrix game $(A, B)$.}
The actions of the players are as follows:
Each player has the action set $\{ (i, t) \mid i \in [n],\ t \in \{0, 1\} \} \ \bigcup \ \{ i + n \mid i \in [n + 2] \} $, where the first $2n$ actions of type $(i, t)$ are called the \textit{primary} actions and the last $n + 2$ actions are called the \textit{secondary} actions.
We call $i$ in $(i, t)$ a \textit{primary index}, which will correspond to the $i$-th player on one of the sides of $G$ (of the left or the right side of the underlying bipartite graph of $G$ depending upon whether we are looking at the row or the column player of $(A, B)$, respectively).
We call $t$ in $(i, t)$ the \textit{bit}, which will correspond to player $i$'s action in $G$.
Finally, we call an action $i$ among the last $n + 2$ actions a \textit{secondary index}.


We embed the polymatrix game $G$ in the top-left block of $(A, B)$ corresponding to the $2n$ primary actions. We set the payoffs of the secondary actions to ensure that the two players mix approximately uniformly over their primary indices. This almost uniformity over the primary indices ensures that when choosing the bit corresponding to each primary index, the players face approximately the same payoff structure as the polymatrix game (scaled by a positive constant). This construction to enforce (approximate) uniformity is based on the construction of \citet{ghoshUniqueNash}. Their construction also enforces uniformity but introduces negative payoffs and payoffs that are $\Theta(n)$.
We improve their construction to get rid of the negative payoffs and leverage the properties of the input polymatrix game $G$ to ensure
that all the payoffs are in the set $\{ 0, 1, 2, 8 \}$.

Let $N = 2n + 2$, the number of indices. Recall that $[k] = \{ 0, 1, \ldots, k-1 \}$, and $\bmod(\cdot, k) \in [k]$ for any positive integer $k$.
We next construct the payoff matrices $A$ and $B$ of the bimatrix game, where $A_{\alpha, \beta}$ and $B_{\alpha, \beta}$ denote the payoffs of the row and column players, when the row and column players play actions $\alpha$ and $\beta$, respectively.
\begin{itemize}
    \item Both the row and column players play primary actions $(i,s)$ and $(j,t)$, respectively. Note that $i, j \in [n]$ and $s, t \in [2]$.
    \[
        A_{(i,s),(j,t)} = A^{ij}_{st}, \qquad B_{(i,s),(j,t)} = B^{ij}_{st}.
    \]



    \item The row player plays a primary action $(i, s)$ and the column player plays a secondary action $j$. Note that $(i, s) \in [n] \times [2]$ and $j \in [N] \setminus [n]$.
    \begin{align*}
        A_{(i,s),j} &= \begin{cases}
            1, &\text{if $\bmod(j - i, N) = n + 1$},\\
            0, &\text{otherwise,}
        \end{cases} \quad
        B_{(i,s),j} = \begin{cases}
            K, &\text{if $\bmod(j - i, N) = n + 2$},\\
            0, &\text{otherwise.}
        \end{cases}
    \end{align*}

    \item The row player plays a secondary action $i$ and the column player plays a primary action $(j, t)$. Note that $i \in [N] \setminus [n]$ and $(j, t) \in [n] \times [2]$.
    \begin{align*}
        A_{i,(j,t)} &= \begin{cases}
            K, &\text{if $\bmod(j - i, N) = n + 1$},\\
            0, &\text{otherwise,}
        \end{cases} \quad
        B_{i,(j,t)} = \begin{cases}
            1, &\text{if $\bmod(j - i, N) = n + 2$},\\
            0, &\text{otherwise.}
        \end{cases}
    \end{align*}

    \item Both the row and column players play secondary actions $i$ and $j$, respectively. Note that $i, j \in [N] \setminus [n]$.
    \begin{align*}
        A_{i,j} &= \begin{cases}
            1, &\text{if $\bmod(j - i, N) = n + 1$},\\
            0, &\text{otherwise,}
        \end{cases} \quad
        B_{i,j} = \begin{cases}
            1, &\text{if $\bmod(j - i, N) = n + 2$},\\
            0, &\text{otherwise.}
        \end{cases}
    \end{align*}
\end{itemize}
In other words, when both players play primary actions, then the payoffs encode the polymatrix game $G$'s payoffs. But when either player plays a secondary action, then the row player gets a positive payoff in $\{ 1, K \}$ if and only if $\bmod(j - i, N) = n + 1$ and the column player gets a positive payoff in $\{ 1, K \}$ if and only if $\bmod(j - i, N) = n + 2$, where $i$ and $j$ are the indices played by the two players, otherwise the payoffs are zero. \cref{tab:bimatrix} shows the payoff matrices of the constructed bimatrix game $(A, B)$ for $n = 3$.
It is not difficult to check the constructed bimatrix game $(A, B)$ satisfies the constraints of $\eps$-\RESBI{m}; formally shown in \cref{app:resbi-check}.

\begin{table}[]
    \centering
    \renewcommand{\arraystretch}{1.5}
    \begin{tabular}{| c : c | c : c | c : c | c | c | c | c | c | }
        \hline
        \red{$A^{00}_{00}$} \blue{$B^{00}_{00}$} & \red{$A^{00}_{01}$} \blue{$B^{00}_{01}$} &
        \red{$A^{01}_{00}$} \blue{$B^{01}_{00}$} & \red{$A^{01}_{01}$} \blue{$B^{01}_{01}$} &
        \red{$A^{02}_{00}$} \blue{$B^{02}_{00}$} & \red{$A^{02}_{01}$} \blue{$B^{02}_{01}$} &
        & \red{$1$} & \blue{$K$} & &
        \\
        \hdashline
        \red{$A^{00}_{10}$} \blue{$B^{00}_{10}$} & \red{$A^{00}_{11}$} \blue{$B^{00}_{11}$} &
        \red{$A^{01}_{10}$} \blue{$B^{01}_{10}$} & \red{$A^{01}_{11}$} \blue{$B^{01}_{11}$} &
        \red{$A^{02}_{10}$} \blue{$B^{02}_{10}$} & \red{$A^{02}_{11}$} \blue{$B^{02}_{11}$} &
        & \red{$1$} & \blue{$K$} & &
        \\

        \hline
        \red{$A^{10}_{00}$} \blue{$B^{10}_{00}$} & \red{$A^{10}_{01}$} \blue{$B^{10}_{01}$} &
        \red{$A^{11}_{00}$} \blue{$B^{11}_{00}$} & \red{$A^{11}_{01}$} \blue{$B^{11}_{01}$} &
        \red{$A^{12}_{00}$} \blue{$B^{12}_{00}$} & \red{$A^{12}_{01}$} \blue{$B^{12}_{01}$} &
        & & \red{$1$} & \blue{$K$} &
        \\
        \hdashline
        \red{$A^{10}_{10}$} \blue{$B^{10}_{10}$} & \red{$A^{10}_{11}$} \blue{$B^{10}_{11}$} &
        \red{$A^{11}_{10}$} \blue{$B^{11}_{10}$} & \red{$A^{11}_{11}$} \blue{$B^{11}_{11}$} &
        \red{$A^{12}_{10}$} \blue{$B^{12}_{10}$} & \red{$A^{12}_{11}$} \blue{$B^{12}_{11}$} &
        & & \red{$1$} & \blue{$K$} &
        \\

        \hline
        \red{$A^{20}_{00}$} \blue{$B^{20}_{00}$} & \red{$A^{20}_{01}$} \blue{$B^{20}_{01}$} &
        \red{$A^{21}_{00}$} \blue{$B^{21}_{00}$} & \red{$A^{21}_{01}$} \blue{$B^{21}_{01}$} &
        \red{$A^{22}_{00}$} \blue{$B^{22}_{00}$} & \red{$A^{22}_{01}$} \blue{$B^{22}_{01}$} &
        & & & \red{$1$} & \blue{$K$}
        \\
        \hdashline
        \red{$A^{20}_{10}$} \blue{$B^{20}_{10}$} & \red{$A^{20}_{11}$} \blue{$B^{20}_{11}$} &
        \red{$A^{21}_{10}$} \blue{$B^{21}_{10}$} & \red{$A^{21}_{11}$} \blue{$B^{21}_{11}$} &
        \red{$A^{22}_{10}$} \blue{$B^{22}_{10}$} & \red{$A^{22}_{11}$} \blue{$B^{22}_{11}$} &
        & & & \red{$1$} & \blue{$K$}
        \\

        \hline
        \blue{$1$} & \blue{$1$} & & & & & & & & & \red{$1$}
        \\
        \hline
        \red{$K$} & \red{$K$} & \blue{$1$} & \blue{$1$} & & & & & & &
        \\
        \hline
        & & \red{$K$} & \red{$K$} & \blue{$1$} & \blue{$1$} & & & & &
        \\
        \hline
        & & & & \red{$K$} & \red{$K$} & \blue{$1$} & & & &
        \\
        \hline
        & & & & & & \red{$1$} & \blue{$1$} & & &
        \\
        \hline
    \end{tabular}
    \caption{Constructed $\{0, 1, 2, K\}$-valued bimatrix game for $n = 3$. The \red{red} (\blue{blue}) values represent the row (column) player's payoffs. The blanks are zero payoffs.}
    \label{tab:bimatrix}
\end{table}

\paragraph{Analysis.}
Pick any $\delta < 1/2$. Let $\eps = \delta/3KN < 1/6KN$.
Let $(x, y)$ denote an $\eps$-WSNE of the bimatrix game $(A, B)$, where $x$ and $y$ are indexed as follows:
\begin{itemize}
    \item Let $x_{is}$ be the probability that the row player plays primary action $(i, s)$. Let $x_i = x_{i0} + x_{i1}$ and $p_i = x_{i1} / x_i$ (arbitrary if $x_i = 0$).
    Similarly, let $y_{jt}$ be the probability that the column player plays primary action $(j, t)$, $y_j = y_{j0} + y_{j1}$, and $q_j = y_{j1} / y_j$.
    We shall prove that the $x_i$'s and $y_j$'s are all $\Theta(1/n)$ and almost equal across $i \in [n]$ and $j \in [n]$, and $(p, q)$ is a $\delta$-WSNE of the polymatrix game $G$.

    \item Let $x_i$ and $y_j$ be the probabilities that the row and column players, respectively, play secondary actions $i, j \in [N] \setminus [n]$. Note that $N = 2n + 2$, the number of indices.
\end{itemize}

We next present three lemmas that help us prove our theorem.
The first lemma shows that the two players play each of their indices with sufficiently large probability. The second one proves that the two players randomize almost uniformly over their primary indices. And the third lemma shows that $(p,q)$ is an approximate equilibrium of the polymatrix game. The proofs of the lemmas are provided in \cref{app:positive-prob-proof}, \cref{app:unif-over-indices-proof}, and \cref{app:encode-proof}, respectively; the technical overview in \cref{sec:to} has a high-level sketch of the arguments.

\begin{lemma}\label{lm:positive-prob}
$x_i, y_i \ge 1/2KN$ for all $i \in [N]$.
\end{lemma}


\begin{lemma}\label{lm:unif-over-indices}
$|x_i - x_j| \le \eps/K$ and $|y_i - y_j| \le \eps/K$ for all $i,j \in [n]$.
\end{lemma}


\begin{lemma}\label{lm:encode}
$(p,q)$ is a $\delta$-WSNE of the polymatrix game $G$.
\end{lemma}

Note that we assumed $\delta < 1/2$ and $\eps = \delta / 3 K N$.
\cref{lm:encode} says that if $(x, y)$ is an $\eps$-WSNE of the bimatrix game $(A, B)$ then $(p, q)$ is a $\delta$-WSNE of the polymatrix game $G$.
Computing $\delta$-WSNE of $G$ is PPAD-hard for any $\delta < 1/2$.
So, computing $\eps$-WSNE of $(A, B)$ is PPAD-hard for any $\eps = \delta/3KN < 1/6KN = 1/48N$.
As $m = 3n + 2 > 2n + 2 = N$, we get our required result.
\end{proof}

\section{Reduction from $\{0, 1, 2, 8\}$-Valued Bimatrix to $\{ 0, 1\}$-Valued Bimatrix}

\subsection{Type One Single Column Simulations}


As mentioned in the technical overview, we will apply type one single column
simulations twice. The first application will be to a game that satisfies
Definition~\ref{def:resbi}.
The second application will be to a game that satisfies the following
properties.
\begin{definition}[Stage 1 game]
\label{def:stage1}
A game $(A, B)$ is a stage 1 game if all of the following hold.
Every column of $A$ and $B^\intercal$ contains either
(i) At most three non-zero payoffs in the set $\{1, 2\}$ with at most one
$2$, or
(ii) A single $4$ payoff along with at most three $1$s.
Every row of $A$ and $B^\intercal$ contains either
(i) At most three $1$ payoffs, or
(ii) A single $2$ along with at most one $1$, or
(iii) Exactly two $4$ payoffs.
\end{definition}

Our presentation will be generic over these two types of input game. We will
use $K$ to denote either $8$ or $4$.
We assume, without loss of generality, that the columns of $A$ that do not
contain a value $K$ come after the other columns of $A$, and we use $k$ to denote
the index of the first column of $A$ that does not contain a value $K$.

Let $(S, T)$ be a bimatrix game defined in the following way.
\begin{align*}
S&= \begin{pmatrix}
2 & 0 \\
0 & 1
\end{pmatrix}
&
T&= \begin{pmatrix}
0 & 1 \\
1 & 0
\end{pmatrix}
\end{align*}
We define the function $\encode$ that maps columns of length $n$ from $A$ to
matrices of size $2 \times n$ in the following way. In what follows, without loss of
generality, we will show the input vectors with their non-zero values
appearing first, and with those values sorted in ascending order
\begin{align*}
\encode\begin{pmatrix}
K \\
0 \\
\vdots
\end{pmatrix} &=
\begin{pmatrix}
0 & K/2 \\
0 & 0 \\
\vdots & \vdots
\end{pmatrix} &
\encode\begin{pmatrix}
1 \\
K \\
0 \\
\vdots
\end{pmatrix} &=
\begin{pmatrix}
1 & 0 \\
0 & K/2 \\
0 & 0 \\
\vdots & \vdots
\end{pmatrix}
&
\encode\begin{pmatrix}
1 \\
1 \\
K \\
0 \\
\vdots
\end{pmatrix} &=
\begin{pmatrix}
1 & 0 \\
1 & 0 \\
0 & K/2 \\
0 & 0 \\
\vdots & \vdots
\end{pmatrix} &
\encode\begin{pmatrix}
2 \\
K \\
0 \\
\vdots
\end{pmatrix} &=
\begin{pmatrix}
0 & 1 \\
0 & K/2 \\
0 & 0 \\
\vdots & \vdots
\end{pmatrix} \\
\encode\begin{pmatrix}
1 \\
2 \\
K \\
0 \\
\vdots
\end{pmatrix} &=
\begin{pmatrix}
1 & 0 \\
0 & 1 \\
0 & K/2 \\
0 & 0 \\
\vdots & \vdots
\end{pmatrix}
&
\encode\begin{pmatrix}
2 \\
2 \\
K \\
0 \\
\vdots
\end{pmatrix} &=
\begin{pmatrix}
0 & 1 \\
0 & 1 \\
0 & K/2 \\
0 & 0 \\
\vdots & \vdots
\end{pmatrix} &
\encode\begin{pmatrix}
1 \\
1 \\
1 \\
K \\
0 \\
\vdots
\end{pmatrix} &=
\begin{pmatrix}
1 & 0 \\
1 & 0 \\
1 & 0 \\
0 & K/2 \\
0 & 0 \\
\vdots & \vdots
\end{pmatrix}
\end{align*}
Observe that this definition captures all possibilities allowed by
Definitions~\ref{def:resbi} and~\ref{def:stage1}.

We define the bimatrix game $(A', B')$ in the following way.

\begin{equation*}
A' = \begin{pmatrix}
S & 1 & 0 & 0 & \dots & 0 & 0 & 0 \\
0 & 0 & S & 1 & \dots & 0 & 0 & 0 \\
\vdots & \vdots & \vdots & \vdots & \ddots & \vdots & \vdots & \vdots \\
0 & 0 & 0 & 0 & \dots & S & 1 & 0 \\
\encode(A_0) & 0 & \encode(A_1) & 0 & \dots & \encode(A_{k-1}) & 0 & A_{k, \dots,
n-1}
\end{pmatrix}
\end{equation*}

\begin{equation*}
B' = \begin{pmatrix}
T & 0 & 0 & 0 & \dots & 0 & 0 & 0 \\
0 & 0 & T & 0 & \dots & 0 & 0 & 0 \\
\vdots & \vdots & \vdots & \vdots & \ddots & \vdots & \vdots & \vdots \\
0 & 0 & 0 & 0 & \dots & T & 0 & 0 \\
0 & B_0 & 0 & B_1 & \dots & 0 & B_{k-1} & B_{k, \dots, n-1}
\end{pmatrix}
\end{equation*}

We divide the rows and columns of $A'$ and
$B'$ into blocks.
\begin{itemize}
\item For each column $j < k$ of $A$ we define $\cb_1^j = \{3j, 3j + 1\}$
and $\cb_2^j = \{ 3j + 2 \}$ as the first and second column blocks for
column $j$.

\item For each column $j < k$ of $A$ we define $\rb^j = \{2j, 2j + 1\}$ to be the row block for column $j$.

\item We also define $\rb_e = \{2k, 2k + 1, \dots 2k + n -1\}$ to be the
\emph{encoding} row block.
\end{itemize}

We present a formal definition of the matrices used in this reduction in Appendix~\ref{app:matrices}.

\paragraph{\bf Translating back.}

Let $(x^*, y^*)$ be an $\eps$-WSNE of $(A', B')$. We construct the strategy
profile $(x, y)$ for $(A, B)$ in the following way.
Given a vector $v \in \reals^n$, we say that $v' \in \reals^n$ is the
\emph{re-normalization} of $v$ if $v'_i = v_i / \sum_{j = 1}^n v_j$ for all $i$,
meaning that $v'$ sums to $1$. Clearly the re-normalization operation requires
that $\sum_{j = 1}^n v_j \ne 0$.

To construct the strategy $x$ we first construct $x'$ so that $x'_i = x^*_{2k +
i}$ for all $i$, and then re-normalize $x'$ to produce $x$. In other words,
$x$ takes the probabilities that $x^*$ assigns to $\rb_e$, and then
renormalizes them into a strategy. We show in Lemma~\ref{lem:rbelower}
that the re-normalization used here is well defined.

To construct the strategy $y$, we use the following procedure.
First construct $y'$ so that
\begin{equation*}
y'_j = \begin{cases}
y^*_{3j} & \text{if $j < k$ and $P(\cb_1^j) \ge 11 \eps$,} \\
0 & \text{if $j < k$ and $P(\cb_1^j) < 11 \eps$,} \\
y^*_{3k + j} & \text{otherwise.}
\end{cases}
\end{equation*}
Then re-normalize $y'$ to produce $y$.
Lemma~\ref{lem:clowerfinal} shows that the re-normalization used here
is well defined.
The proof of the following theorem is shown in Appendix~\ref{app:scs1proof}.
\begin{theorem}
\label{thm:scs1}
If $4 \le K \le 8$, if $\eps < 1/(240 \cdot n^2)$, and if $(x^*, y^*)$ is an
$\eps$-WSNE of $(A', B')$, then both $x$ and $y$ are well-defined, and $(x, y)$
is an $O(n^2 \cdot \eps)$-WSNE of $(A, B)$.
\end{theorem}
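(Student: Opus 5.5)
We follow the exact-equilibrium sketch of the technical overview, making each step quantitative. Throughout, write $P(\cdot)$ for the probability that $y^*$ (or $x^*$) assigns to a set of columns (rows). For each $j<k$, let $c^j_1, c^j_2$ be the probabilities of the two columns of $\cb_1^j$ and $d^j$ the probability of $\cb_2^j$. The argument has four steps, carried out in order.

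\textbf{Step 1: local structure inside each gadget.} Because $(x^*,y^*)$ is an $\eps$-WSNE, we get:
\begin{itemize}
\item[(a)] If some row of $\rb^j$ has positive probability, it is an $\eps$-best response. The two rows of $\rb^j$ earn $2c^j_1+d^j$ and $c^j_2+d^j$, and the column gadget $T$ forces mixing. Hence, whenever both rows are used, $|2c^j_1-c^j_2|\le\eps$.
\item[(b)] If $P(\rb^j)$ is small, then columns in $\cb_1^j$ give the column player payoff at most $P(\rb^j)$. Since every column of $B'$ is at most $3$-sparse with entries bounded, the column player has some alternative worth $\Omega(1/n)$. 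Thus $\cb_1^j$ gets zero probability unless $P(\rb^j)$ is at least about $\eps$.
\end{itemize}

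\textbf{Step 2: the lower bounds (Lemmas~\ref{lem:rbelower} and~\ref{lem:clowerfinal}).}
\begin{itemize}
\item[(a)] We first show $P(\rb_e)$ is bounded below by a constant over $n$. If it were tiny, every column in $\cb_2^j$ and in the block $B_{k,\dots}$ would be nearly worthless. The column player would then concentrate on the $\cb_1^j$ blocks. Against such $y^*$ with $d^j=0$, the row $\rb_e$ row carrying the $K/2$ entry earns $(K/2)c^j_2\approx Kc^j_1$, which beats the rows of $\rb^j$ (worth $\le 2c^j_1\approx c^j_2$) since $K\ge4$. So the row player moves to $\rb_e$, a contradiction.
\item[(b)] The same comparison gives the version of overview point 4: if $P(\cb_1^j)\ge 11\eps$, then $d^j$ is bounded below proportionally to $c^j_1$. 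Otherwise the $K/2$ row strictly dominates $\rb^j$ by more than $\eps$, and then (b) of Step 1 kills $\cb_1^j$.
\item[(c)] Finally, the total mass kept by $y'$ is bounded below. The column player cannot put almost everything on $\cb_1$ blocks with small $\cb_2$, by overview point 5. The threshold $11\eps$ is chosen so that any discarded gadget contributes only $O(\eps)$ mass. This needs the hypothesis $\eps<1/(240n^2)$ so that the $\Omega(1/n)$ quantities dominate the $O(\eps)$ errors summed over $n$ gadgets.
\end{itemize}

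\textbf{Step 3: row player.} For each row $r$ of $A$, its payoff against $y$ equals the payoff of the corresponding row of $\rb_e$ against $y^*$, divided by the normalizer $Z=\sum_j y'_j$, up to errors. The errors come from two sources:
\begin{itemize}
\item $|(K/2)c^j_2-Kc^j_1|\le K\eps/2$ per gadget, by Step 1(a);
\item dropped gadgets with $P(\cb_1^j)<11\eps$.
\end{itemize}
Each row of $A$ is $3$-sparse, so each row picks up at most $O(\eps)$ error per nonzero entry. Dividing by $Z=\Omega(1/n)$ makes every row of $\rb_e$ used by $x^*$ an $O(n\eps)$-best response among rows of $A$. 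Also, rows of $\rb_e$ are compared only with each other, and the extra rows of $A'$ are irrelevant.

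\textbf{Step 4: column player.} Column payoffs against $x$ are the $\cb_2^j$ (or tail) payoffs against $x^*$ divided by $P(\rb_e)=\Omega(1/n)$. A column $j<k$ in $\supp(y)$ has $P(\cb_1^j)\ge11\eps$, so by Step 2(b) $d^j>0$. Hence $\cb_2^j$ is an $\eps$-best response in $(A',B')$, and $j$ is an $O(n\eps)$-best response in $(A,B)$. Combining the two normalizations and the per-gadget errors yields the $O(n^2\eps)$ bound.

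The main obstacle is Step 2, specifically 2(b). There, approximate indifference in $\rb^j$ holds only when both rows of $\rb^j$ are actually played. One has to case-split on $\supp(x^*)\cap\rb^j$, and on whether $c^j_1$ or $c^j_2$ vanishes, to extract the quantitative implication that $P(\cb_1^j)\ge 11\eps$ forces $d^j$ to be large. This is where the constant $11$ and the factor $240$ get fixed.
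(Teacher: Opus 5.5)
Your proposal is correct and follows the paper's proof essentially lemma for lemma: Step~1 is Lemmas~\ref{lem:scs1} and~\ref{lem:colprob}, Step~2 is Lemmas~\ref{lem:cb2lower}--\ref{lem:clowerfinal}, and Steps~3--4 are Lemmas~\ref{lem:scscol} and~\ref{lem:scsrow}. The case split you flag as the main obstacle is a one-liner, as in Lemma~\ref{lem:colprob}: if one row of $\rb^j$ is unplayed, its $T$-partner column in $\cb_1^j$ earns $0$ and so is unplayed, which gives $|2c^j_1-c^j_2|\le\eps$ whenever $P(\rb^j)>0$.

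A few minor points. The $\Omega(1/n)$ alternative in Step~1(b) comes from every row of $B'$ having an entry $\ge 1$ with at most $3n$ columns, not from column sparsity. In Step~2(c), the fact you need from overview point~5 is that $y^*$ cannot put nearly all its mass on the $\cb_2^j$ blocks: if the $\cb_1$-plus-tail mass were below $1/(24n)$, every row of $\rb_e$ would earn at most $1/(6n)<1/(3n)-\eps$ and be unplayed, contradicting Step~2(a). Finally, your use of the row $3$-sparsity of $A$ in Step~3 gives the row player an $O(n\eps)$ error, which is better than the paper's $O(n^2\eps)$ obtained by summing per-gadget errors over all $n$ gadgets.
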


\paragraph{\bf The payoff structure.}

We now consider how the reduction affects the payoffs that appear in the game
$(A', B')$. The following lemma considers the matrix $A'$.
The following two lemmas are shown in Appendix~\ref{app:payoffs}.
\begin{lemma}
\label{lem:scs1ap}
If $(A, B)$ satisfies Definition~\ref{def:resbi}, then $A'$ satisfies the
requirements of a stage 1 game.
\end{lemma}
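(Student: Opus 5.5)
We prove Lemma~\ref{lem:scs1ap} by a direct case check. The rows and columns of $A'$ fall into a small number of structural types, and for each type we verify that it matches one of the patterns in Definition~\ref{def:stage1}. Here $K=8$, so every $K/2$ entry in an encoded column equals $4$. We use the block description of $A'$: the gadget rows $\rb^j$, the encoding rows $\rb_e$, the column blocks $\cb_1^j$ and $\cb_2^j$ for $j<k$, and the untouched columns $A_k,\dots,A_{n-1}$.

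\textbf{Columns.} We treat three kinds of column.
\begin{itemize}
\item A column in $\cb_2^j$ has exactly two non-zero entries, namely the two $1$s of the $1$-column next to $S$, and nothing in $\rb_e$. It is of type (i).
\item Each column of $\cb_1^j$ carries one entry of $S$ (a $2$ or a $1$) plus the corresponding column of $\encode(A_j)$. Since $A_j$ contains an $8$, Definition~\ref{def:resbi} says $A_j$ has one $8$ and at most two entries from $\{1,2\}$. Going through the table defining $\encode$:
  \begin{itemize}
  \item The first column of $\encode(A_j)$ contains only $1$s, at most two of them. Together with the $2$ from $S$ this gives at most three non-zeros in $\{1,2\}$ with a single $2$, which is type (i).
  \item The second column of $\encode(A_j)$ contains one $4$, plus $1$s replacing the original $2$s, at most two of them. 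Together with the $1$ from $S$ this gives one $4$ and at most three $1$s, which is type (ii).
  \end{itemize}
\item A column $A_j$ with $j\ge k$ contains no $8$, so by Definition~\ref{def:resbi} it has one or two $1$s. It is of type (i).
\end{itemize}

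\textbf{Rows.} We treat two kinds of row.
\begin{itemize}
\item The rows of $\rb^j$ are $(2,0,1)$ and $(0,1,1)$ and are zero elsewhere. They are of types (ii) and (i) respectively.
\item An encoding row is a row $r$ of $A$, rewritten by $\encode$: each $8$ becomes a $4$, each $2$ in an $8$-column becomes a $1$, and each $1$ stays a $1$, placed in $\cb_1^j$. The entries in columns $\ge k$ are unchanged. By Definition~\ref{def:resbi}, $r$ either has exactly two $8$s or has at most three entries in $\{1,2\}$.
  \begin{itemize}
  \item If $r$ has two $8$s, the encoded row has exactly two $4$s, which is type (iii).
  \item Otherwise the encoded row has at most three entries in $\{1,2\}$. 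Any remaining $2$ must lie in a column $j\ge k$. But those columns contain only $1$s, so the encoded row contains only $1$s, which is type (i).
  \end{itemize}
\end{itemize}

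\textbf{Main obstacle.} The only delicate point is making sure no encoding row is left with a $2$ together with extra entries. We must check that a $2$ in $A$ can only occur in an $8$-column. The reason is that a column without an $8$ consists solely of one or two $1$s, so every $2$ sits in an $8$-column and $\encode$ halves it to a $1$. With this observation the remaining checks are routine bookkeeping against the $\encode$ table. The count "at most three $1$s" in rows is preserved because encoding never increases the number of non-zero entries in a row.
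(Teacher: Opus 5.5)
Your proof is correct and takes essentially the same route as the paper's. Like the paper, you do a direct case check of the columns $3j$, $3j+1$, $3j+2$ and $j \ge k$, and of the rows in $\rb^j$ and $\rb_e$. It hinges on the same key observation: every $2$ in $A$ lies in a column containing an $8$, because $8$-free columns contain only one or two $1$s, so $\encode$ turns each such $2$ into a $1$. Your use of the ``at most two entries from $\{1,2\}$'' clause of Definition~\ref{def:resbi} to bound the $1$s in each encoded column matches the paper's remark that a column with three $1$s and a $K$ is not permitted.
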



\begin{lemma}
\label{lem:scs1bp}
The matrix $(B')^\intercal$ differs from $B^\intercal$ only by adding rows and columns
that contain one~$1$.
\end{lemma}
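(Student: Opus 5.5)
The plan is to read off the structure of $B'$ from its block definition and compare it with $B$ directly. Write $B'$ as a block matrix. Its rows split into the row blocks $\rb^j$ for $j<k$ followed by the encoding block $\rb_e$. Its columns split, for each $j<k$, into $\cb_1^j$ (two columns) and $\cb_2^j$ (one column), followed by the columns indexed $3k,\dots$ that carry $B_{k,\dots,n-1}$.

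The first step is to locate the copy of $B$. Restricting $B'$ to the rows of $\rb_e$ and to the columns $\cb_2^0,\dots,\cb_2^{k-1}$ together with the trailing columns gives exactly $B_0, B_1, \dots, B_{k-1}, B_{k,\dots,n-1}$. This is $B$ itself, with its columns spread out and in the original order. Moreover, every other entry in those rows and in those columns is $0$. In the rows of $\rb_e$, the entries under the columns $\cb_1^j$ are $0$ by definition of $B'$. In the columns $\cb_2^j$ and in the trailing columns, the entries in the rows of $\rb^{j'}$ are $0$, because the only nonzero entries in those rows are the copies of $T$, and these sit in $\cb_1^{j'}$. Transposing, $(B')^\intercal$ contains $B^\intercal$ as a submatrix, and the rows and columns of $(B')^\intercal$ that meet this submatrix are zero outside it.

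The second step handles what is added. Relative to $B^\intercal$, the matrix $(B')^\intercal$ has new rows indexed by $\cb_1^j$ and new columns indexed by $\rb^j$. Column $3j$ of $B'$ restricted to the rows of $\rb^j$ is $(0,1)^\intercal$, and column $3j+1$ is $(1,0)^\intercal$. Every other entry of these two columns is $0$: $T$ is placed only in block row $j$, and the $\rb_e$ part of these columns is $0$. So each new row of $(B')^\intercal$ contains exactly one $1$. Symmetrically, rows $2j$ and $2j+1$ of $B'$ have nonzero entries only in the copy of $T$ in $\cb_1^j$, which contributes one $1$ to each. So each new column of $(B')^\intercal$ contains exactly one $1$.

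Finally, the new rows and new columns meet only in the $2\times 2$ blocks $T^\intercal$, and they meet the old $B^\intercal$ part nowhere. Hence no old row or column gains a nonzero entry, which is the claim of the lemma. No step here is really hard; the only care needed is the index bookkeeping, namely that the zero column placed after each $S$ in $A'$ corresponds to $B_j$ in $B'$, so the column interleaving for $B$ is as described.
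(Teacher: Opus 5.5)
Your proof is correct and takes the same route as the paper: both inspect the block structure of $B'$ and observe that every row and column is either a zero-padded row or column of $B$, or one whose only nonzero entry is a single $1$ from a copy of $T$. Your version spells out the index bookkeeping more fully than the paper's two-line argument, in particular the check that the new rows and columns meet the embedded copy of $B^\intercal$ only in zeros.
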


We now consider the outcome of applying a first type single column simulation
to a stage 1 game. We will prove that the result satisfies the following
definition.

\begin{definition}[Stage 2 game]
\label{def:stage2}
A game $(A, B)$ is a stage 2 game if it satisfies the following properties.
Every column of $A$ and $B^\intercal$ contains
at most four non-zero payoffs in the set $\{1, 2\}$ with at most one $2$.
Every row of $A$ and $B^\intercal$ contains either
(i) At most three $1$ payoffs, or
(ii) A single $2$ along with at most one $1$, or
(iii) Exactly two $2$ payoffs.
Furthermore, if a row contains exactly two $2$ payoffs, then those $2$s share a
column with exactly one $1$.
\end{definition}

The following lemma is shown in Appendix~\ref{app:payoffs}.
There is no need to prove a separate lemma for the matrix $B'$, because Lemma~\ref{lem:scs1bp} already covers this case.

\begin{lemma}
\label{lem:scs1app}
If $(A, B)$ is a stage 1 game, then $A'$ satisfies the requirements of a stage
2 game.
\end{lemma}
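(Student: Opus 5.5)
We verify the stage 2 conditions directly from the block structure of $A'$, with $K=4$, so that $\encode$ places the value $K/2=2$ in its second column. We classify the columns of $A'$ and then the rows.

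\emph{Columns.} There are four kinds of column in $A'$.
\begin{itemize}
\item The first column of $\cb_1^j$ has the entry $2$ from $S$, together with the first column of $\encode(A_j)$, which consists of the $1$s of $A_j$. These are at most three $1$s by the stage 1 column condition (ii). The total is therefore at most four nonzeros in $\{1,2\}$, with exactly one $2$.
\item The second column of $\cb_1^j$ has the entry $1$ from $S$, the $K/2=2$, and the entries of $A_j$ equal to $2$ turned into $1$. Since the stage 1 column (ii) contains no $2$s besides the $4$, this column holds a $1$ and a $2$, plus possibly some $1$s. We need to check against the $\encode$ table that at most four nonzeros arise.
\item The column $\cb_2^j$ contains two $1$s, from the row block.
\item Columns with index $\ge k$ are unchanged stage 1 type (i) columns, which have at most three entries in $\{1,2\}$ with at most one $2$.
\end{itemize}
All four kinds meet the stage 2 column condition.

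\emph{Rows.} The rows of $\rb^j$ are $(2,0,1)$ and $(0,1,1)$, which are of types (ii) and (i) respectively. For the rows of $\rb_e$, take a row $r$ of $A$.
\begin{itemize}
\item If $r$ is of type (i) or (ii), its $1$s and $2$s lie in non-$K$ columns, or are encoded as $1$ (first column) or as $2\mapsto 1$ (second column). The number of nonzeros therefore stays the same or shrinks in value, so type (i) stays type (i). For type (ii), the $2$ either stays a $2$ (in a column $\ge k$) or becomes a $1$; in both cases the row is of type (i) or (ii).
\item If $r$ is of type (iii), with two $4$s, then the $4$s live in distinct columns $j,j'<k$, and each becomes a $2$ in the second column of $\cb_1^{j}$ and of $\cb_1^{j'}$. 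So the row has exactly two $2$s.
\end{itemize}
It remains to check the extra condition for type (iii) rows: each such $2$ must share its column with exactly one $1$. The second column of $\cb_1^j$ contains the $1$ from $S$ (in $\rb^j$) and the images of the $2$s of $A_j$. Since the stage 1 column containing a $4$ has no $2$s, those images are absent, so the column contains exactly one $1$.

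The main obstacle is this final bookkeeping: making sure the column containing the $4$ in a stage 1 game has no $2$ entries, so that the $K/2$ column of $\encode$ contains exactly one $1$ (the one from $S$). This is exactly what stage 1 column condition (ii) guarantees. We should also double-check that $\encode$ in the stage 1 case only uses the cases without $2$s, so that the first column of $\cb_1^j$ gets at most three $1$s, which gives at most four nonzeros in total.
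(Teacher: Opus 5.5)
Your proposal is correct and follows essentially the same column-by-column and row-by-row check as the paper. The key fact in both is that a stage 1 column containing a $4$ has no $2$s. That fact shows the second column of $\cb_1^j$ is exactly the $1$ from $S$ plus the $2$ from $\encode$, and it also gives the extra condition for rows with two $2$s. The points you leave hedged (``plus possibly some $1$s'', ``double-check'') are already settled by that fact together with the at-most-three-$1$s bound of stage 1 column type (ii). Your treatment of a $2$ being encoded as a $1$ in $\rb_e$ rows covers a case that cannot actually occur, which is harmless.
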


\subsection{Dual Column Simulations}

The dual column simulation takes as input a stage 2 game. Let $I$ be the set of
rows of $A$ that contain exactly two $2$s. We assume, without loss of
generality, that each row in $I$ appears before each row not in $I$. We use $k$
to denote the first row of $A$ that is not in $I$. We also assume that each row
$i \in I$ has its two $2$s in columns $2i$ and $2i + 1$. These properties can
easily be ensured by reordering the rows and columns of $A$.

We will use the following matrices in the reduction.

\begin{align*}
S &= \begin{pmatrix}
1 & 0 & 0 \\
0 & 1 & 1 \\
\end{pmatrix}
&T_1 &= \begin{pmatrix}
0 & 1 & 0 \\
1 & 0 & 0 \\
\end{pmatrix}
&T_2 &= \begin{pmatrix}
0 & 0 & 1 \\
1 & 0 & 0
\end{pmatrix}
\end{align*}

We define the function $\encode$ that maps pairs of columns of length $n$ from $A$ to
matrices of size $3 \times n$ as follows. By our assumptions,
there is only one type of column pair to consider.
\begin{equation*}
\encode \begin{pmatrix}
1 & 0 \\
0 & 1 \\
2 & 2 \\
0 & 0 \\
\vdots & \vdots
\end{pmatrix}
=
\begin{pmatrix}
0 & 1 & 0 \\
0 & 0 & 1 \\
2 & 0 & 0 \\
0 & 0 & 0 \\
\vdots & \vdots & \vdots
\end{pmatrix}
\end{equation*}
Finally we define the output of the reduction to be the game $(A', B')$ as
follows.

\setcounter{MaxMatrixCols}{15}
\begin{equation*}
A' = \begin{pmatrix}
S & 1 & 0 & 0 & 0 & 0 & \dots & 0 & 0 & 0 & 0 \\
S & 0 & 1 & 0 & 0 & 0 & \dots & 0 & 0 & 0 & 0 \\
0 & 0 & 0 & S & 1 & 0 & \dots & 0 & 0 & 0 & 0 \\
0 & 0 & 0 & S & 0 & 1 & \dots & 0 & 0 & 0 & 0 \\
\vdots & \vdots & \vdots & \vdots & \vdots & \vdots & \ddots & \vdots & \vdots
& \vdots & \vdots \\
0 & 0 & 0 & 0 & 0 & 0 & \dots & S & 1 & 0 & 0 \\
0 & 0 & 0 & 0 & 0 & 0 & \dots & S & 0 & 1 & 0 \\
\encode(A_{0, 1}) & 0 & 0 & \encode(A_{2, 3}) & 0 & 0 & \dots &
\encode(A_{2(k-1), 2(k-1) + 1}) & 0 & 0 & A_{2k, \dots, n-1}
\end{pmatrix}
\end{equation*}

\begin{equation*}
B' = \begin{pmatrix}
T_1 & 0 & 0 & 0 & 0 & 0 & \dots & 0 & 0 & 0 & 0 \\
T_2 & 0 & 0 & 0 & 0 & 0 & \dots & 0 & 0 & 0 & 0 \\
0 & 0 & 0 & T_1 & 0 & 0 & \dots & 0 & 0 & 0 & 0 \\
0 & 0 & 0 & T_2 & 0 & 0 & \dots & 0 & 0 & 0 & 0 \\
\vdots & \vdots & \vdots & \vdots & \vdots & \vdots & \ddots & \vdots & \vdots
& \vdots & \vdots \\
0 & 0 & 0 & 0 & 0 & 0 & \dots & T_1 & 0 & 0 & 0 \\
0 & 0 & 0 & 0 & 0 & 0 & \dots & T_2 & 0 & 0 & 0 \\
0 & B_0 & B_1 & 0 & B_2 & B_3 & \dots & 0 & B_{2(k-1)} & B_{2(k-1) + 1} &
B_{2k, \dots, n-1}
\end{pmatrix}
\end{equation*}

To define these matrices formally, we first define the following column and row
blocks.
\begin{itemize}
\item For each row $i < k$, we define row block $\rb_1^i$ to contain rows $4i$
and $4i + 1$, row block $\rb_2^i$ to contain rows $4i + 2$ and $4i + 3$.

\item We define the \emph{encoding} row block $\rb_e$ to contain rows $4k$
through $4k + n-1$.

\item For each row $i < k$, we define column block $\cb_1^i$ to contain columns
$5i$ through $5i + 2$, column block $\cb_2^i$ to contain the column $5i + 3$,
and column block $\cb_3^i$ to contain the column $5i + 4$.
\end{itemize}
We give a formal definition of the matrices above in
Appendix~\ref{app:matrices}.

\paragraph{\bf Translating back.}

Let $(x^*, y^*)$ be an $\eps$-WSNE of $(A', B')$. We construct a strategy
profile $(x, y)$ for $(A, B)$ in the following way.
To construct $x$, we first construct $x'$ so that $x'_i = x^*_{4k + (i - k)}$,
and then we re-normalize $x'$ to produce $x$. In other words, the strategy $x$
plays a re-normalized version of the strategy that $x^*$ plays over the
encoding row block $\rb_e$.  The validity of the re-normalization that we use
here will later be
shown in Lemma~\ref{lem:dcsxlower}.

To construct $y$, we first construct $y'$ as follows.
\begin{equation*}
y'_j = \begin{cases}
y^*_{5\lfloor j/2 \rfloor +1} & \text{if $\lfloor j/2 \rfloor < k$ and $j$ is
even,} \\
y^*_{5 \lfloor j/2 \rfloor +2} & \text{if $\lfloor j/2 \rfloor < k$ and $j$ is
odd,} \\
y^*_{5k + (j - 2k)} & \text{otherwise.}
\end{cases}
\end{equation*}
Next we construct $y''$ from $y'$ as follows.
\begin{equation*}
y''_j = \begin{cases}
0 & \text{if $\lfloor j/2 \rfloor < k$ and $y'_j < 3 \eps$,} \\
y'_j & \text{otherwise.}
\end{cases}
\end{equation*}
Finally we re-normalize $y''$ to produce $y$. The validity of the
re-normalization that we use here will later be shown in
Lemma~\ref{lem:dcsyplower}.
The following theorem will be shown in Appendix~\ref{app:dcsproof}.
\begin{theorem}
\label{thm:dcs}
If $\eps < 1/(600 n^2)$ and $(x^*, y^*)$ is an $\eps$-WSNE of $(A', B')$, then
both $x$ and $y$ are well defined, and
$(x, y)$ is an $O(n^2 \cdot \eps)$-WSNE of $(A, B)$.
\end{theorem}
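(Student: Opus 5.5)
The plan is to follow the template of the proof of Theorem~\ref{thm:scs1}. First I establish a handful of structural lemmas about any $\eps$-WSNE $(x^*, y^*)$ of $(A', B')$. Then I show that the normalizers used to build $x$ and $y$ are bounded below by $\Omega(1/n)$. Finally I compare payoffs in $(A,B)$ against $(x,y)$ with payoffs in $(A',B')$ against $(x^*,y^*)$. Each gadget contributes $O(\eps)$ error, there are at most $n$ gadgets, and renormalization costs another factor $O(n)$.

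Fix a gadget $i<k$ and write $c_1,c_2,c_3$ for $y^*$ on $\cb_1^i$, $d_2,d_3$ for $y^*$ on $\cb_2^i,\cb_3^i$, and $r_a,r_b$ (resp.\ $r'_a,r'_b$) for $x^*$ on the two rows of $\rb_1^i$ (resp.\ $\rb_2^i$). The row payoffs are
\[
r_a \mapsto c_1+d_2,\quad r_b\mapsto c_2+c_3+d_2,\quad r'_a\mapsto c_1+d_3,\quad r'_b\mapsto c_2+c_3+d_3 .
\]
The column payoffs are
\[
c_1\mapsto r_b+r'_b,\quad c_2\mapsto r_a,\quad c_3\mapsto r'_a,
\]
and $d_2,d_3$ collect the payoffs of $B_{2i},B_{2i+1}$ on $\rb_e$.

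\textbf{Gadget lemmas.} These are the analogues of points 1--6 of the overview, made approximate.
\begin{enumerate}
\item[(a)] If a column of $\cb_1^i$ is played, the corresponding rows of $\rb_1^i\cup\rb_2^i$ must have positive mass. Otherwise that column pays $0$, while some column outside the gadget pays $\Omega(1/n)$.
\item[(b)] Whenever both rows of $\rb_1^i$ (or of $\rb_2^i$) are played, the WSNE condition gives $|c_1-(c_2+c_3)|\le\eps$.
\item[(c)] Using (a) and (b) in a cyclic argument, any column with $c_2\ge 3\eps$ or $c_3\ge3\eps$ forces $r_a$ or $r'_a$, hence forces $c_1$, hence forces $r_b,r'_b$. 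Therefore $c_1=c_2+c_3\pm O(\eps)$ holds whenever the gadget has non-negligible mass.
\item[(d)] If $c_2\ge3\eps$, then $d_2$ is positive and $\cb_2^i$ is an $O(\eps)$-best response for the column player. The reason is that if $d_2$ were negligible, the encoding row of $A$ holding the $2$ (which now sees $2c_1\approx 2(c_2+c_3)$) together with the row with the $1$ in column $2i$ would beat $r_a,r_b$. Then $\rb_1^i$ gets no mass, contradicting (a).
\end{enumerate}
The same argument with $c_3$ gives that $\cb_3^i$ is an $O(\eps)$-best response.

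\textbf{Lower bounds.} Next I prove the lower bounds (Lemmas~\ref{lem:dcsxlower} and~\ref{lem:dcsyplower}). The column player cannot put all mass in the $\cb_1$ blocks, since the row player would then play gadget rows, making $c_1$ and $c_2,c_3$ unprofitable. The row player must place $\Omega(1/n)$ total mass on $\rb_e$: either some $\cb_2/\cb_3$ column or some column of $A_{2k,\dots}$ is played, and those columns pay only via $\rb_e$. Throughout, the assumption $\eps<1/(600n^2)$ keeps the thresholds $3\eps$ well below these masses.

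\textbf{Translation.} For the row player, the payoff of an encoding row against $y^*$ differs from its $A$-payoff against $y''$ only through the substitution of $2c_1$ by $2(c_2+c_3)$ and the zeroing of entries below $3\eps$. Each gadget changes this by $O(\eps)$, and a row meets at most one gadget with a $2$ plus a bounded number of others, so the error is $O(\eps)$ before normalization and $O(n\eps)$ after normalization. Multiplying by the normalization factor $O(n)$ gives $O(n^2\eps)$. For the column player, columns $2i,2i+1$ with $y''_j>0$ correspond to $c_2$ or $c_3\ge3\eps$. By (d), $\cb_2^i$ and $\cb_3^i$ are then near-best responses, and their $B'$ payoffs equal the $B$ payoffs scaled by the $\rb_e$ mass. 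The remaining columns carry over directly.

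I expect step (d) to be the main obstacle. It requires bounding the gap between the encoding rows and the gadget rows quantitatively, using the stage~2 structure (each $2$ shares a column with exactly one $1$), and tracking the error so that it stays $O(\eps)$ rather than accumulating across gadgets.
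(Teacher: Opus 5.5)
Your overall route is the paper's. Your (a), (b)/(c) and (d) are Lemmas~\ref{lem:dcs1}, \ref{lem:dcs2} and~\ref{lem:dcs3}, and your translation paragraph corresponds to Lemmas~\ref{lem:dcsywsne} and~\ref{lem:dcsxwsne}.

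There is a small inaccuracy in (c). Lemma (a) only gives $r_b+r'_b>0$, not that both are played, so (b) as you state it (both rows of one pair) need not apply. This is harmless: $r_a$ or $r'_a$ in the support gives $c_1\ge c_2+c_3-\eps$, and $r_b$ or $r'_b$ gives $c_1\le c_2+c_3+\eps$. The paper's Lemma~\ref{lem:dcs2} gets the bound unconditionally. A violation in either direction removes both rows of one type from the support, and (a) then forces the larger side to $0$, a contradiction.

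The genuine gap is the lower-bound paragraph. This is what makes $x$ and $y$ well defined, and it conflates two different scenarios.

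\emph{First scenario: $P(\rb_e)$ is tiny.} Then all column mass lies on the $\cb_1$ blocks, and here your mechanism is backwards. The $\cb_1$ columns pay the column player only on gadget rows (via $T_1,T_2$), so the row player playing gadget rows would make them profitable, not unprofitable. What actually happens is that with $d_2=d_3=0$ the encoding row holding the $2$ earns at least $2c_1$, while every gadget row earns about $c_1$. So the gadget rows are abandoned. Rigorously, this goes through your (d):
\begin{itemize}
\item some gadget has $P(\cb_1^i)\ge 1/n>25\eps$;
\item if $c_2,c_3<3\eps$, then $c_1>19\eps$, yet (a) puts mass on row $4i+1$ or $4i+3$, giving $c_1\le c_2+c_3+\eps<7\eps$;
\item hence (d) forces $d_2>0$ or $d_3>0$, contradicting that all mass is on $\cb_1$.
\end{itemize}

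\emph{Second scenario: the normalizer of $y$ is tiny.} This case is missing from your plan. ``Not all mass on $\cb_1$'' says nothing about the mass on columns $5i+1$, $5i+2$ and $j\ge 5k$. You must rule out that almost all of $y^*$ sits on $\cb_2\cup\cb_3$. In that case the $\rb_e$ rows earn only $O(1/n)$ and lose to gadget rows. The columns of $\cb_2,\cb_3$, which pay only on $\rb_e$, then become worthless, a contradiction. Reducing to that case takes two steps:
\begin{itemize}
\item use $P(\cb_1^i)\le 2(c_2+c_3)+\eps$, summed over up to $n$ gadgets (Lemma~\ref{lem:dcsyplower});
\item account for a loss of at most $3n\eps$ from zeroing columns below $3\eps$ (Lemma~\ref{lem:dcsypplower}).
\end{itemize}
These two $O(n\eps)$ losses are exactly where $\eps<1/(600n^2)$ is used. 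Without this step, neither the well-definedness of $y$ nor the $O(n)$ normalization factor in your row-player bound is justified.
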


\paragraph{\bf The payoff structure.}

We now consider how the reduction affects the payoffs that appear in the game
$(A', B')$. The following definition captures the desired output of the
reduction.

\begin{definition}[Stage 3 game]
\label{def:stage3}
A game $(A, B)$ is a stage 3 game if
every column of $A$ and $B^\intercal$ contains
at most four non-zero payoffs in the set $\{1, 2\}$ with at most one $2$, and
if
every row of $A$ and $B^\intercal$ contains either
(i) At most three $1$ payoffs, or
(ii) A single $2$ along with at most one $1$.
\end{definition}

The following pair of lemmas is shown in Appendix~\ref{app:payoffs}.

\begin{lemma}
\label{lem:dcsapayoffs}
If $(A, B)$ is a stage 2 game, then $A'$ satisfies the requirements of a stage
3 game.
\end{lemma}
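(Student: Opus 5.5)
This is a direct combinatorial check of $A'$: we go through each type of row and column in the block structure and compare it with Definition~\ref{def:stage3}. The key facts we use are the shapes of $S$ and $\encode$, and the stage 2 properties of $A$. For each $i<k$, the rows of $\rb_1^i \cup \rb_2^i$ have the form $(S \mid e)$, where $e$ is a single $1$ in $\cb_2^i$ or $\cb_3^i$. So row $4i$ and row $4i+2$ each contain two $1$s, and rows $4i+1$ and $4i+3$ each contain three $1$s. All of these satisfy condition (i) of a stage 3 row.

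\textbf{Rows of $\rb_e$.}
Consider first a row $r$ of $A$ that lies in $I$. This row has its two $2$s in columns $2r, 2r+1$, and by the definition of $\encode$ these become a single $2$ in the first column of $\cb_1^r$. The row has no other non-zero entries, since a row with two $2$s falls under case (iii) of Definition~\ref{def:stage2} and so contains nothing else. It therefore satisfies condition (ii) of a stage 3 row.

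Now consider a row $r \notin I$. Each of its non-zero entries that lies in a column $2i$ or $2i+1$ with $i<k$ is a $1$, because stage 2 columns contain at most one $2$, and in these columns that $2$ sits in row $i \in I$. Under $\encode$, that $1$ moves to the second or third column of $\cb_1^i$. The other entries of the row stay where they are, in $A_{2k,\dots,n-1}$. So the multiset of non-zero values in the row is unchanged, and the row still satisfies case (i) or (ii) of stage 2, which are the same as cases (i) and (ii) of stage 3.

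The one step needing care is justifying that the column pair $(A_{2i},A_{2i+1})$ has exactly the displayed shape. This uses the final clause of Definition~\ref{def:stage2}: each of the two $2$s shares its column with exactly one $1$. It is also where the assumption that the $\encode$ case shown is the only one gets used.

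\textbf{Columns.}
Column $5i$ has a $1$ in each of rows $4i$ and $4i+2$, plus the single $2$ coming from row $i \in I$. That is at most three non-zero entries with one $2$. Column $5i+1$ has a $1$ in each of rows $4i+1$ and $4i+3$, plus the $1$ inherited from $A_{2i}$, giving three $1$s. Column $5i+2$ behaves the same way. Columns $5i+3$ and $5i+4$ each contain a single $1$. Finally, every column of $A_{2k,\dots,n-1}$ is copied unchanged from $A$, so it inherits the stage 2 column bound: at most four non-zero entries in $\{1,2\}$, with at most one $2$.

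Since every row and every column of $A'$ now meets the stage 3 conditions, $A'$ satisfies the requirements of a stage 3 game.
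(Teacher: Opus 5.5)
Your proof is correct and uses the same approach as the paper, a direct row-by-row and column-by-column check of $A'$. It is slightly more complete than the paper's version, since you also check the gadget rows $4i,\dots,4i+3$ and explain why a row outside $I$ can only have $1$s in the encoded columns. One harmless slip: columns $5i+3$ and $5i+4$ each contain two $1$s, not one, coming from the $2\times 1$ all-ones blocks against $\rb_1^i$ and $\rb_2^i$ (as your own description of the rows implies); two $1$s still satisfy the stage 3 column bound.
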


\begin{lemma}
\label{lem:dcsbpayoffs}
$(B')^\intercal$ differs from $B$ only by adding rows and columns that
have at most two~$1$s.
\end{lemma}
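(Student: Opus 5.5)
The plan is a direct inspection of the block structure of $B'$ defined above. First I fix how the statement's comparison is to be read, since $(B')^\intercal$ is being compared with $B$ itself. I regard each row (resp.\ column) of $B$ as a \emph{line} of the game, indexed by an action of the row (resp.\ column) player. I then show two things: every such original line reappears in $(B')^\intercal$ (transposed, and padded only with zeros); and every line of $(B')^\intercal$ that is not of this form contains at most two non-zero entries, all of them equal to $1$. With this reading, ``differs from $B$ only by adding rows and columns'' means that the set of lines of $(B')^\intercal$ is the set of lines of $B$ (with zeros inserted) together with some new sparse lines. Note that transposition merely exchanges the roles of rows and columns, so the multiset of lines and their contents are unchanged.

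\textbf{Step 1: the old lines are preserved.} The rows of the encoding block $\rb_e$ in $B'$ are exactly the rows of $B$, with zeros inserted in the columns of $\cb_1^i$. Here I check that $B'$ places $0$ in $\rb_e \times \cb_1^i$, which I read off from the bottom block row of $B'$. The columns $\cb_2^i$ and $\cb_3^i$ carry $B_{2i}$ and $B_{2i+1}$ on $\rb_e$. The final columns carry $B_{2k,\dots,n-1}$ on $\rb_e$. All of these columns are $0$ on every row of $\rb_1^i \cup \rb_2^i$. Hence every column of $B$ appears as a column of $B'$, padded with zeros. Transposing, every line of $B$ reappears as a line of $(B')^\intercal$ with only zeros added.

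\textbf{Step 2: the new lines are sparse.} The only remaining lines of $B'$ come from the gadget.
\begin{itemize}
\item \emph{Gadget rows.} These are the rows of $\rb_1^i$ and $\rb_2^i$. Their only non-zero entries lie in the block $T_1$ or $T_2$, and each row of $T_1$ and of $T_2$ contains exactly one $1$. So each such row has exactly one $1$.
\item \emph{Gadget columns.} These are the three columns of $\cb_1^i$. They vanish on $\rb_e$ and on the gadget rows of every other index $i' \ne i$. On $\rb_1^i \cup \rb_2^i$ they read $T_1$ stacked on $T_2$.
\begin{itemize}
\item Column $5i$ has $1$s in rows $4i+1$ and $4i+3$, so it has two $1$s.
\item Column $5i+1$ has a single $1$, coming from $T_1$.
\item Column $5i+2$ has a single $1$, coming from $T_2$.
\end{itemize}
\end{itemize}
Transposing these new lines turns the added rows into added columns and vice versa, and each added line still has at most two $1$s.

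\textbf{Main obstacle.} I expect no real obstacle; this is bookkeeping. The one point needing care is to confirm from the formal matrix definitions in Appendix~\ref{app:matrices} that the encoding block of $B'$ is indeed zero on $\cb_1^i$. This matters because, in contrast with $A'$, the encoded information in $B'$ is placed entirely in $\cb_2^i$ and $\cb_3^i$. It is precisely this that prevents any original line of $B$ from acquiring new non-zero entries.
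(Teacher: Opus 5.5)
Your proposal is correct and follows essentially the same route as the paper, which likewise inspects $B'$ directly: it observes that the gadget rows each carry a single $1$ from $T_1$ or $T_2$, and that every other line is a line of $B$ padded with zeros. Your column count is slightly sharper than the paper's, since only column $5i$ carries two $1$s while columns $5i+1$ and $5i+2$ carry one each (the paper says each gadget column has exactly two), and your line-based reading sensibly handles the statement's comparison of $(B')^\intercal$ with $B$ rather than with $B^\intercal$.
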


\subsection{Type Two Single Column Simulations}

A type two single column simulation takes as input a stage 3 game $(A, B)$.
We assume without loss of
generality that the columns of $A$ are ordered so that each column that does
not contain at most three $1$s (and no $2$s)
comes before each column that does contain at most three $1$s (and no $2$s), and we use
$k$ to denote the first column of $A$ that does contain at most three $1$s.

The reduction is almost identical to a type one single column simulation, but with
the $S$ and $T$ matrices changed, and the $\encode$ function redefined.  We define $S$ and $T$ in the following way.
\begin{align*}
S &= \begin{pmatrix}
1 & 0 & 0 \\
0 & 1 & 0 \\
0 & 0 & 1
\end{pmatrix}
&
T &= \begin{pmatrix}
0 & 1 & 0 \\
0 & 0 & 1 \\
1 & 0 & 0
\end{pmatrix}
\end{align*}
The $\encode$ function is defined as follows, assuming without loss of
generality that the input to the function has non-zero payoffs
before the zero payoffs, and the former are listed in increasing order.

\begin{align*}
\encode \begin{pmatrix}
2 \\
0 \\
\vdots
\end{pmatrix} &=
\begin{pmatrix}
0 & 1 & 1 \\
0 & 0 & 0 \\
\vdots & \vdots & \vdots \\
\end{pmatrix}
&
\encode \begin{pmatrix}
1 \\
2 \\
0 \\
\vdots
\end{pmatrix} &=
\begin{pmatrix}
1 & 0 & 0 \\
0 & 1 & 1 \\
0 & 0 & 0 \\
\vdots & \vdots & \vdots \\
\end{pmatrix}
&
\encode \begin{pmatrix}
1 \\
1 \\
2 \\
0 \\
\vdots
\end{pmatrix} &=
\begin{pmatrix}
1 & 0 & 0 \\
1 & 0 & 0 \\
0 & 1 & 1 \\
0 & 0 & 0 \\
\vdots & \vdots & \vdots \\
\end{pmatrix} \\
\encode \begin{pmatrix}
1 \\
1 \\
1 \\
2 \\
0 \\
\vdots
\end{pmatrix} &=
\begin{pmatrix}
1 & 0 & 0 \\
1 & 0 & 0 \\
0 & 1 & 0 \\
0 & 1 & 1 \\
0 & 0 & 0 \\
\vdots & \vdots & \vdots \\
\end{pmatrix}
&
\encode \begin{pmatrix}
1 \\
1 \\
1 \\
1 \\
0 \\
\vdots
\end{pmatrix} &=
\begin{pmatrix}
1 & 0 & 0 \\
1 & 0 & 0 \\
0 & 1 & 0 \\
0 & 0 & 1 \\
0 & 0 & 0 \\
\vdots & \vdots & \vdots \\
\end{pmatrix}
\end{align*}
Other than these changes, and the corresponding changes needed to $\cb_1^j$ and
$\rb^j$ to include three columns or rows instead of two, the definition of the
reduction is identical to the definition of a type one single column
simulation.

\paragraph{\bf Translating back.}

Given an $\eps$-WSNE $(x^*, y^*)$ of $(A', B')$ we define the strategy profile
$(x, y)$ for $(A, B)$ in the same way as we did for first type simulations.
The only difference is that the threshold at which a column's probability is
zeroed is $18 \eps$ for this reduction, whereas it was $11 \eps$ in the previous
one.
The proof of the following theorem is similar to that of
Theorem~\ref{thm:scs1}, and is given in Appendix~\ref{app:scs2}.
\begin{theorem}
\label{thm:scs2}
If $\eps < 1/(96 \cdot n^2)$, and if
$(x^*, y^*)$ is an $\eps$-WSNE of $(A', B')$, then both $x$ and $y$ are
well-defined, and $(x, y)$ is an $O(n^2 \cdot \eps)$-WSNE of $(A, B)$.
\end{theorem}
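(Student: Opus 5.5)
We follow the structure of the proof of Theorem~\ref{thm:scs1}, replacing the $2\times 2$ gadget $(S,T)$ by the $3\times 3$ gadget: $S$ is the identity, and $T$ is a cyclic permutation. For each simulated column $j<k$ we write $c_1,c_2,c_3$ for the probabilities $y^*$ assigns to the three columns of $\cb_1^j$, and $d$ for the probability on $\cb_2^j$. The analogue of the key property is near-uniformity, $|c_a-c_b| = O(\eps)$ whenever $P(\cb_1^j)\ge 18\eps$. It comes from the row player's approximate indifference across the three rows of $\rb^j$. Every row of $\rb^j$ earns $c_a + d$, so any row played must be an $\eps$-best response. Also, the column player's payoff on $\cb_1^j$ comes only from $\rb^j$ via $T$, so column $a$ gives $x^*_{\sigma(a)}$. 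Hence if $\cb_1^j$ has non-negligible mass, some row of $\rb^j$ is played, and the cycle structure of $T$ forces every row of $\rb^j$ to be played. Otherwise a column of $\cb_1^j$ gets payoff zero and is dropped, which in turn kills a row, and so on around the cycle. With all three rows of $\rb^j$ supported, the $\eps$-WSNE condition gives $|c_a-c_b|\le\eps$.

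The encodings are designed so that, when $c_1=c_2=c_3=c$, a row of $\rb_e$ receives $c$ for each simulated $1$ and $2c$ for the simulated $2$ (the two $1$s in columns two and three). So the payoff from $\encode(A_j)$ equals $c\cdot A_j$ up to an additive $O(\eps)$ error per row. Because rows are $3$-sparse, this error is $O(\eps)$ per row in total.

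Next we prove the support lemmas corresponding to Lemmas~\ref{lem:rbelower} and~\ref{lem:clowerfinal}, in the same order as before.
\begin{enumerate}
\item[(a)] If $P(\cb_1^j)\ge 18\eps$, then $d$ is bounded below. Otherwise $c_a+d$ is too small compared with the encoding row carrying the $2$ payoff (or a $1$), and the row player would abandon $\rb^j$, contradicting the cycle argument above. Here we need a row of $\rb_e$ that gives at least $2c$ against $\cb_1^j$ while the rows of $\rb^j$ give $c+d$. We use the simulated-$2$ row when one exists; for all-ones columns we reuse the ``column dropped'' argument.
\item[(b)] The column player does not put almost all mass on the $\cb_2$ blocks, since against such a strategy $\rb^j$ is a best response and columns in $\cb_2$ earn nothing from it.
\item[(c)] The mass on $\rb_e$ is bounded below, giving the well-definedness of $x$. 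The argument is as before: positive mass on $\cb_2^j$ needs $\rb_e$ support, and otherwise the column player sits on the unsimulated columns, where only $\rb_e$ pays.
\end{enumerate}
These lower bounds are $\Omega(1/n)$. They are the source of the $O(n^2)$ blow-up when errors are divided by the renormalization constants.

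Finally we verify the WSNE conditions for $(x,y)$.

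For the row player, the payoff of row $i$ of $A$ against $y$ equals the payoff of the corresponding row of $\rb_e$ against $y^*$, divided by the normalizer, up to $O(\eps)$ error. This uses near-uniformity, and for columns zeroed at threshold $18\eps$ the contribution is at most $O(\eps)$. So $\eps$-best responses transfer with error $O(n\eps)$.

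For the column player, column $j<k$ of $B$ against $x$ is exactly $\cb_2^j$ of $B'$ against $x^*$, rescaled. Unsimulated columns are unchanged. If $y_j>0$, then $P(\cb_1^j)\ge 18\eps$, so by (a) $\cb_2^j$ is played and hence is an $\eps$-best response.

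The main obstacle is step (a) together with the uniformity bound. We must show the thresholds (here $18\eps$) and the constant $1/(96n^2)$ make every case of $\encode$ go through. The delicate case is the column of four $1$s, which has no $2$. There we plan to argue directly via the column-drop/cycle contradiction rather than via a dominating encoding row.
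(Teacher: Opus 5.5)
\textbf{There is a genuine gap: the four-$1$s case you defer cannot be closed.} For every simulated column that contains a $2$, your outline follows the paper's route and works as in the paper. That route is near-uniformity on $\cb_1^j$, then a dominating row of $\rb_e$ forcing $P(\cb_2^j)>0$, then lower bounds on $P(\rb_e)$ and on the retained column mass, then transfer through $\Omega(1/n)$ normalizers. In those cases $\encode$ always has a row with $1$s in the last two columns of $\cb_1^j$, earning about $2P(\cb_1^j)/3$, while the rows of $\rb^j$ earn about $P(\cb_1^j)/3+d$.

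For a simulated column of four $1$s, neither the column-drop/cycle argument nor any other argument can work, because step (a) is false there. Suppose column $j<k$ of $A$ has four $1$s, and let $x^*$ be uniform on $\rb^j$ and $y^*$ uniform on $\cb_1^j$.
\begin{itemize}
\item Since $S$ is the identity and each row of $\encode(1,1,1,1)$ contains a single $1$, every row of $A'$ has at most one $1$ inside $\cb_1^j$. So no row earns more than $1/3$, and the rows of $\rb^j$ earn exactly $1/3$.
\item $B'$ restricted to the rows $\rb^j$ is $T$ inside $\cb_1^j$ and zero elsewhere. So each column of $\cb_1^j$ earns $1/3$, and every other column earns $0$.
\end{itemize}
This is an exact Nash equilibrium of $(A',B')$ with $d=0$ and $P(\rb_e)=0$, so not even $x$ is well-defined. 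Your cycle argument only shows that all three rows of $\rb^j$ lie in the support, and this equilibrium satisfies that. Step (c) inherits the problem: its ``otherwise'' case ignores the possibility that the column player puts all its mass on $\cb_1$ blocks, which is exactly where (a) is needed.

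\textbf{The paper's proof has the same blind spot, and the fix is to restrict the hypothesis.} Lemma~\ref{lem:cb2lower2} asserts that $\encode$ always produces a row of $\rb_e$ with $1$s in columns $4j+1$ and $4j+2$. That holds for every case except $\encode(1,1,1,1)$. The repair is not a cleverer argument: assume no simulated column consists of four $1$s. This costs nothing for the final reduction. In the stage 3 games the pipeline actually produces, the only columns of $A$ or $B^\intercal$ with four non-zero entries are $(1,1,1,2)$ columns. These come from the second type-one simulation: the column carrying the $2$ of $S$ also receives up to three $1$s from $\encode$. Every other column produced along the way, including those added by the dual simulation and by the simulations applied to the other player, has at most three non-zero entries. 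Under that restriction, your sketch goes through, as does the paper's.
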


\paragraph{\bf The payoff structure.}

The following pair of lemmas is shown in
Appendix~\ref{app:payoffs}.

\begin{lemma}
\label{lem:scs2bp}
Each row and column of $A'$ contains at most three $1$s.
\end{lemma}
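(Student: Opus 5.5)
The lemma is a counting check on the block structure of $A'$, by analogy with the type one construction. $A'$ consists of three kinds of rows and columns: the rows of the blocks $\rb^j$ for $j<k$ (the diagonal copies of $S$ plus the extra column $\cb_2^j$), the encoding rows $\rb_e$, and the columns of the $\cb_1^j$, $\cb_2^j$ blocks together with the untouched columns $A_{k,\dots,n-1}$. I will bound the number of $1$s, and confirm that no entry exceeds $1$, separately for each kind. The inputs are the stage 3 properties of $(A,B)$ from Definition~\ref{def:stage3} and the explicit table defining $\encode$.

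\textbf{Rows of $\rb^j$.} Each such row contains one row of $S=I_3$, so a single $1$. In the type one construction, every row of $\rb^j$ also carries a $1$ in the column $\cb_2^j$. That gives two $1$s per row, and nothing else is nonzero there. Likewise, each column of $\cb_1^j$ receives exactly one $1$ from $S$ within the row blocks.

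\textbf{Column $\cb_2^j$.} This column has three $1$s, one for each row of $\rb^j$, and zeros in $\rb_e$, since the $0$ entries are placed beside $\encode(A_j)$.

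\textbf{Columns of $\cb_1^j$.} Such a column also collects the entries of $\encode(A_j)$ in $\rb_e$. Inspecting the five cases of $\encode$, every output column has at most two $1$s, and all output entries lie in $\{0,1\}$. Together with the one $1$ from $S$, each column has at most three $1$s.

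\textbf{Untouched columns $A_k,\dots,A_{n-1}$.} By the choice of $k$, each of these has at most three $1$s and no $2$. They are zero above $\rb_e$, so they satisfy the bound as they stand.

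\textbf{Encoding rows.} This is the step that needs care. Fix a row $r$ of $A$ and write $a$ for the number of $1$s and $b\in\{0,1\}$ for the number of $2$s in $r$; by Definition~\ref{def:stage3}, either $(a,b)$ satisfies $a\le 3$ with $b=0$, or $b=1$ with $a\le 1$. The row of $A'$ indexed by $r$ is assembled from the $\encode(A_j)$ blocks for $j<k$ and from the untouched columns.

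A $1$ of $r$ lying in an untouched column stays a single $1$. A $1$ of $r$ lying in a simulated column $A_j$ (one that contains a $2$, or more than three $1$s) becomes exactly one $1$: in every case of the $\encode$ table, a $1$ entry maps to a row with a single $1$, in column $0$, $1$ or $2$ of the block. The only case needing a second look is the four-$1$s column, where the rows map to $e_0,e_0,e_1,e_2$, still one $1$ each. A $2$ of $r$ necessarily lies in a simulated column, and it becomes the row $(0,1,1)$, that is, two $1$s.

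So the row has $a+2b$ ones. In case (i) this is at most $3$, and in case (ii) at most $1+2=3$. No $2$s survive, because every entry of every $\encode$ output lies in $\{0,1\}$.

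The main obstacle is making this bookkeeping airtight. It relies on the fact that no column of a stage 3 matrix mixes a $2$ with three $1$s beyond the listed patterns; that is exactly what the listed $\encode$ cases cover, via the stage 3 column condition of at most four nonzeros with at most one $2$. It also relies on the fact that each block contributes to a given row independently of the others, which holds because distinct $j$ occupy disjoint column sets.
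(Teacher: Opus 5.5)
Your proposal is correct and matches the paper's proof. Both count the $1$s case by case. For columns: one $1$ from $S$ plus at most two from the $\encode$ output, three $1$s in the $\cb_2^j$ column, and at most three $1$s in each untouched column. For rows: two $1$s in each row of $\rb^j$. For encoding rows, $\encode$ preserves each $1$ and splits a $2$ into two $1$s, so the stage~3 row condition gives at most three. Your extra check that every $\encode$ output entry lies in $\{0,1\}$, so that $A'$ is genuinely win-lose, is a small addition the paper leaves implicit.
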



\begin{lemma}
\label{lem:scs2ap}
$(B')^\intercal$ differs from $B$ only in the addition of rows
and columns that contain one $1$.
\end{lemma}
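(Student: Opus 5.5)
The statement is Lemma~\ref{lem:scs2ap}: $(B')^\intercal$ differs from $B$ only by adding rows and columns that each contain a single $1$. I will prove it by reading off the block structure of $B'$ in the type two single column simulation. That structure is the same as in the type one simulation, with $T$ now the $3\times 3$ cyclic permutation matrix. Row block $\rb^j$ for $j<k$ consists of three rows whose only non-zero entries are a copy of $T$ placed in $\cb_1^j$. The encoding row block $\rb_e$ has zeros in every column of $\cb_1^j$. In the single column $\cb_2^j$ it carries the original column $B_j$, and in the trailing columns it carries $B_{k,\dots,n-1}$. I will state this using the formal matrix definitions in Appendix~\ref{app:matrices}, adapted to three-column blocks.

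\textbf{Step 1: the original game survives intact.} Delete all rows of $\rb^j$ and all columns of $\cb_1^j$, for every $j<k$. What remains is the submatrix on rows $\rb_e$ and columns $\{\cb_2^j\}_{j<k}$ together with the trailing columns. I will check that this submatrix is exactly $B$: the columns appear in their original order, and neither entries nor rows have been altered.

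\textbf{Step 2: the new lines each have one $1$.} Each new row, in $\rb^j$, equals a row of $T$ padded with zeros, so it contains exactly one $1$. Each new column, in $\cb_1^j$, consists of a column of $T$ on the rows of $\rb^j$ and zeros everywhere else, including on $\rb_e$, since $B'$ has zero blocks beneath the $T$'s. So each such column also contains exactly one $1$.

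\textbf{Step 3: the overlap adds nothing.} The entries lying in the added rows but in retained columns are zero, since the row block $\rb^j$ is zero outside $\cb_1^j$. The same holds for entries in added columns but retained rows. So adding these lines does not change any entry of an original row or column.

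Transposing all of this gives the statement for $(B')^\intercal$. The only point that needs care is Step 3: confirming that under the redefined $\encode$ the $A'$-side changes never leak into $B'$. By construction, $\encode$ modifies only $A'$, and $B'$ places $B_j$ in $\cb_2^j$ and zeros in $\cb_1^j$.
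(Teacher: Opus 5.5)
Your proposal is correct and takes the same route as the paper. The paper's proof simply observes that every row and column of $B'$ is either a zero-padded row or column of $B$, or a zero-padded row or column of the permutation matrix $T$, and so contains a single $1$; your Steps~1 and~3 make explicit the bookkeeping it leaves implicit, namely that the retained submatrix on the encoding rows is exactly $B$ in its original column order and that the cross blocks of $B'$ are zero.
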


\subsection{Reducing to a $\{ 0, 1 \}$-Valued Bimatrix Game}

We can now apply our column simulation reductions to turn the bimatrix game
from Section~\ref{sec:res-bimatrix} to a $\{ 0, 1 \}$-valued bimatrix game, while carefully ensuring that
the resulting game is 3-sparse.
We start with the game $(A, B)$, which is an instance of $\eps$-\RESBI{m}.

We apply a type-one single column simulation to $(A_1, B_1)$. We then
swap the players and apply a type-one single column simulation to the column
player, and then swap the players back again.
Lemmas~\ref{lem:scs1ap} and~\ref{lem:scs1bp} imply that the resulting game
is a stage 1 game.
We then apply another type-one single column simulation in the same way, and
Lemmas~\ref{lem:scs1app} and~\ref{lem:scs1bp} imply that the resulting game is a
stage 2 game. We apply a dual column simulation in the same way, and
Lemmas~\ref{lem:dcsapayoffs} and~\ref{lem:dcsbpayoffs} imply that
the resulting game is a stage 3 game. Finally, we apply a type-two single
column simulation in the same way, and
Lemmas~\ref{lem:scs2bp} and~\ref{lem:scs2ap} ensure that the resulting game is
a $3$-sparse win-lose game.





Theorems~\ref{thm:scs1},~\ref{thm:scs2}, and~\ref{thm:dcs} imply that each
application of a column simulation reduction blows up $\eps$ by a factor of
$O(n^2)$. Our four steps above each use two simulations, giving eight
simulations in total, so the total blow-up in $\eps$ is $O(n^{16})$. Thus, if
one can find an $\eps$-WSNE of $(A_5, B_5)$, then one can
recover an $O(n^{16} \cdot \eps)$-WSNE of $(A, B)$.
Since we have shown that it is PPAD-hard to find a $O(1/n)$-WSNE of $(A, B)$,
it follows that it is PPAD-hard to find a $O(1/n^{17})$-WSNE of $(A', B')$.
So we have shown the following theorem.

\begin{theorem}
It is PPAD-hard to find an $\eps$-WSNE for $\eps \in O(1/\poly(n))$ for a
zero-one bimatrix game that is 3-sparse.
\end{theorem}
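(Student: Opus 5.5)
The plan is to chain the reductions already established and track two things: the payoff and sparsity structure, and the accumulated loss in $\eps$. We start from \cref{thm:res-bimatrix}. It says that for an instance $(A,B)$ of $\eps$-\RESBI{m}, it is PPAD-hard to find a $\delta$-WSNE for some $\delta = \Theta(1/m)$.

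\textbf{Sequence of reductions.} We build eight games in order, alternating between the two players. For each simulation, we apply it to the row player's matrix $A$. We then transpose, swapping $(A,B)$ to $(B^\intercal, A^\intercal)$, apply the same simulation to what was $B^\intercal$, and transpose back. The order is: type one twice, then dual column, then type two.

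\textbf{Payoff structure after each stage.} Lemmas~\ref{lem:scs1ap} and~\ref{lem:scs1bp}, together with the symmetry of \cref{def:resbi} in $A$ and $B^\intercal$, show that after the first pair of simulations both $A$ and $B^\intercal$ meet \cref{def:stage1}.

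One point needs care here. When we simulate the second player, the matrix already simulated only gains rows and columns containing a single $1$. We must check that these additions keep the stage definitions intact. A new column with one $1$ is of type (i). A new row with one $1$ is of type (i). The bound on the number of $1$s in existing rows and columns is not increased, because the added entries sit in new rows and columns.

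Lemmas~\ref{lem:scs1app} and~\ref{lem:scs1bp} then give \cref{def:stage2}. Lemmas~\ref{lem:dcsapayoffs} and~\ref{lem:dcsbpayoffs} give \cref{def:stage3}; the rows and columns added there have at most two $1$s, which is still of type (i). Finally, Lemmas~\ref{lem:scs2bp} and~\ref{lem:scs2ap} show that every row and column of the final $A$ and $B^\intercal$ has at most three entries, all equal to $1$. So the final game is a $3$-sparse win-lose game.

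\textbf{Error accounting.} Let the final game have size $n_8$. Every simulation enlarges the game by at most a constant factor: each column is replaced by at most $5$ columns, and rows grow similarly. Hence all intermediate sizes are $\Theta(m)$, and in particular polynomial in $m$.

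Suppose we are given an $\eps_8$-WSNE of the final game, with $\eps_8 = c/n_8^{17}$ for a small constant $c$. We translate it back one stage at a time, using Theorems~\ref{thm:scs1}, \ref{thm:dcs}, and~\ref{thm:scs2} for the respective steps. Each translation multiplies the error by $O(n^2)$, where $n$ is the size at that stage, which is $O(m)$. Each translation is computable in polynomial time, since it only takes restrictions, zeroes out entries below a threshold, and renormalizes.

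Each theorem also has a precondition, of the form $\eps_i < 1/(c' n_i^2)$ with $c' \le 600$. Before translating back from stage $i$, the error is at most $c\cdot C^{8-i}\, m^{2(8-i)}/m^{17}$. This is $o(1/m^2)$, so the precondition holds for all sufficiently large $m$ once $c$ is chosen small enough. After all eight steps we obtain an $O(m^{16}\eps_8) = O(c/m)$-WSNE of $(A,B)$. Choosing $c$ small makes this below $1/(48m)$, so by \cref{thm:res-bimatrix} the final problem is PPAD-hard.

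\textbf{Main obstacle.} The delicate step is the stage-preservation argument under player swapping. The lemmas are stated for a single player's matrix, so we must confirm two things:

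1. The simulation applied to the column player does not break the stage property already established for $A$, and vice versa.
2. The stage definitions are closed under adding sparse rows and columns with a single $1$ (or at most two $1$s for the dual simulation).

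We would verify this case by case against Definitions~\ref{def:stage1}--\ref{def:stage3}. We would also note that the reordering assumptions used by each simulation are harmless, because they only permute strategies.
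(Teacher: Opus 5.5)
Your proposal is correct and follows the paper's route. Like the paper, you apply the four pairs of simulations (type one twice, then dual, then type two, each applied to both players via transposition), track the stage definitions with the payoff-structure lemmas, and compose the $O(n^2)$ error blow-ups of Theorems~\ref{thm:scs1}, \ref{thm:dcs}, and~\ref{thm:scs2} against the $\Theta(1/m)$ hardness of Theorem~\ref{thm:res-bimatrix}. You also check two things the paper leaves implicit: that the stage definitions survive the rows and columns with one or two $1$s added on the other player's side, and that each theorem's precondition on $\eps$ holds at every step of the translation back.
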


Applying Lemma~\ref{lm:ne2wsne} then immediately gives us a lower bound for
$\eps$-NE as well.

\begin{corollary}
It is PPAD-hard to find an $\eps$-NE for $\eps \in O(1/\poly(n))$ for a
zero-one bimatrix game that is 3-sparse.
\end{corollary}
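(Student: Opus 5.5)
The corollary follows by combining the preceding theorem with Lemma~\ref{lm:ne2wsne}. The plan is a direct reduction. Suppose we had a polynomial-time algorithm $\mathcal{A}$ that, for some polynomial $p$, finds an $\eps$-NE with $\eps = 1/p(n)$ in any 3-sparse zero-one bimatrix game of size $n$. We will turn it into a polynomial-time algorithm for the $\eps'$-WSNE problem for 3-sparse win-lose games with $\eps' = \sqrt{8/p(n)}$, which is still inverse polynomial.

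First we check that Lemma~\ref{lm:ne2wsne} applies. A win-lose game has all entries of $A$ and $B$ in $\{0,1\} \subseteq [0,1]$, so the normalization hypothesis holds with no rescaling. Rescaling would be a concern, because it would change the approximation parameter.

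Given an instance $(A,B)$, we proceed in two steps:
\begin{enumerate}
\item Run $\mathcal{A}$ to obtain an $(\eps'^2/8)$-NE; note that $\eps'^2/8 = 1/p(n)$.
\item Apply the polynomial-time procedure of Lemma~\ref{lm:ne2wsne} to obtain an $\eps'$-WSNE of $(A,B)$.
\end{enumerate}
The game is unchanged throughout, so the output is a WSNE of the same 3-sparse win-lose game.

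It remains to match the quantifiers. The preceding theorem gives hardness of $\eps'$-WSNE for some specific inverse polynomial, namely $\eps' = c/n^{17}$ for a suitable constant $c$. We therefore choose $p$ with $1/p(n) \le \eps'^2/8 = c^2/(8n^{34})$. Then the $\eps$-NE problem with $\eps = 1/p(n)$, and hence with any smaller $\eps$, admits a polynomial-time reduction from a PPAD-hard problem.

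The only step needing care is this quantifier bookkeeping. The squaring loss in Lemma~\ref{lm:ne2wsne} turns an inverse polynomial into another inverse polynomial, so the claim about $\eps \in O(1/\poly(n))$ is preserved. No other obstacle is expected.
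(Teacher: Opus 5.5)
Your proposal is correct and matches the paper's argument: the paper gets the corollary immediately by combining the preceding WSNE-hardness theorem with Lemma~\ref{lm:ne2wsne}, which applies directly because win-lose payoffs already lie in $[0,1]$. Your explicit choice of $p$ with $\sqrt{8/p(n)} \le c/n^{17}$ simply spells out the quantifier bookkeeping that the paper leaves implicit.
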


\subsection*{Acknowledgments}
A.G.\ completed this work while a postdoctoral researcher at the University of Oxford and was supported by EPSRC Grant EP/X040461/1 ``Optimisation for Game Theory and Machine Learning''.
E.B., J.F., and R.S.\ were supported by EPSRC Grant EP/W014750/1 ``New Techniques for Resolving Boundary Problems in Total Search''.
We also thank Alexandros Hollender for drawing our attention to the problem addressed in this work.

\bibliography{ref}

@article{deligkas2024pure,
  title={{Pure-Circuit}: Tight Inapproximability for {PPAD}},
  author={Deligkas, Argyrios and Fearnley, John and Hollender, Alexandros and Melissourgos, Themistoklis},
  journal={Journal of the ACM},
  volume={71},
  number={5},
  articleno={31},
  numpages={48},
  year={2024},
  doi={10.1145/3678166}
}

@inproceedings{abbott2005complexity,
  title={On the complexity of two-player win-lose games},
  author={Abbott, Tim and Kane, Daniel and Valiant, Paul},
  booktitle={Proceedings of the 46th Annual {IEEE} Symposium on Foundations of Computer Science ({FOCS} 2005)},
  pages={113--122},
  year={2005},
  publisher={IEEE Computer Society},
  address={Los Alamitos, CA, USA},
  doi={10.1109/SFCS.2005.59}
}

@inproceedings{chen2007approximation,
  title={The approximation complexity of win-lose games},
  author={Chen, Xi and Teng, Shang-Hua and Valiant, Paul},
  booktitle={Proceedings of the 18th Annual {ACM-SIAM} Symposium on Discrete Algorithms ({SODA} 2007)},
  pages={159--168},
  year={2007},
  publisher={Society for Industrial and Applied Mathematics},
  address={Philadelphia, PA, USA}
}

@inproceedings{liu2018approximation,
  title={On the approximation of {N}ash equilibria in sparse win-lose games},
  author={Liu, Zhengyang and Sheng, Ying},
  booktitle={Proceedings of the 32nd {AAAI} Conference on Artificial Intelligence ({AAAI} 2018)},
  pages={1154--1160},
  year={2018},
  publisher={{AAAI} Press},
  address={Palo Alto, CA, USA},
  doi={10.1609/aaai.v32i1.11439}
}

@inproceedings{LiuLD21,
  author       = {Zhengyang Liu and
                  Jiawei Li and
                  Xiaotie Deng},
  title        = {On the Approximation of {N}ash Equilibria in Sparse Win-Lose Multi-player
                  Games},
  booktitle    = {Proceedings of the 35th {AAAI} Conference on Artificial Intelligence ({AAAI} 2021)},
  pages        = {5557--5565},
  year         = {2021},
  publisher    = {{AAAI} Press},
  address      = {Palo Alto, CA, USA},
  doi          = {10.1609/aaai.v35i6.16699},
}

@inproceedings{ChenDT06,
  author       = {Xi Chen and
                  Xiaotie Deng and
                  Shang{-}Hua Teng},
  editor       = {Paul G. Spirakis and
                  Marios Mavronicolas and
                  Spyros C. Kontogiannis},
  title        = {Sparse Games Are Hard},
  booktitle    = {Proceedings of the 2nd International Workshop on Internet and Network Economics ({WINE} 2006)},
  pages        = {262--273},
  year         = {2006},
  publisher    = {Springer},
  address      = {Berlin, Heidelberg},
  series       = {Lecture Notes in Computer Science},
  volume       = {4286},
  doi          = {10.1007/11944874_24},
}

@inproceedings{CodenottiLR06,
  author       = {Bruno Codenotti and
                  Mauro Leoncini and
                  Giovanni Resta},
  editor       = {Yossi Azar and
                  Thomas Erlebach},
  title        = {Efficient Computation of {N}ash Equilibria for Very Sparse Win-Lose
                  Bimatrix Games},
  booktitle    = {Proceedings of the 14th Annual European Symposium on Algorithms ({ESA} 2006)},
  pages        = {232--243},
  year         = {2006},
  publisher    = {Springer},
  address      = {Berlin, Heidelberg},
  series       = {Lecture Notes in Computer Science},
  volume       = {4168},
  doi          = {10.1007/11841036_23},
}

@article{Papadimitriou2023,
  title={Public goods games in directed networks},
  author={Christos Papadimitriou and Binghui Peng},
  journal={Games and Economic Behavior},
  volume={139},
  pages={161--179},
  year={2023},
  doi={10.1016/j.geb.2023.02.002}
}

@inproceedings{Datta2011,
  author={Datta, Samir and Krishnamurthy, Nagarajan},
  editor={Ogihara, Mitsunori and Tarui, Jun},
  title={Some Tractable Win-Lose Games},
  booktitle={Proceedings of the 8th Annual Conference on Theory and Applications of Models of Computation ({TAMC} 2011)},
  series={Lecture Notes in Computer Science},
  volume={6648},
  pages={365--376},
  year={2011},
  publisher={Springer},
  address={Berlin, Heidelberg},
  doi={10.1007/978-3-642-20877-5_36}
}

@incollection{Goldberg2011,
  title={A survey of {PPAD}-completeness for computing {N}ash equilibria},
  author={Goldberg, Paul W.},
  editor={Chapman, Robin},
  booktitle={Surveys in Combinatorics 2011},
  series={London Mathematical Society Lecture Note Series},
  volume={392},
  pages={51--82},
  year={2011},
  publisher={Cambridge University Press},
  address={Cambridge},
  doi={10.1017/CBO9781139004114.003}
}

@misc{Collevecchio2025,
  title={Finding a {N}ash equilibrium of a random win-lose game in expected polynomial time},
  author={Andrea Collevecchio and G{\'a}bor Lugosi and Adrian Vetta and Rui-Ray Zhang},
  year={2025},
  eprint={2510.12846},
  archivePrefix={arXiv},
  primaryClass={cs.GT},
  url={https://arxiv.org/abs/2510.12846}
}

@inproceedings{Hermelin2011,
  author={Hermelin, Danny and Huang, Chien-Chung and Kratsch, Stefan and Wahlstr{\"o}m, Magnus},
  editor={Kolman, Petr and Kratochv{\'i}l, Jan},
  title={Parameterized Two-Player {N}ash Equilibrium},
  booktitle={Proceedings of the 37th International Workshop on Graph-Theoretic Concepts in Computer Science ({WG} 2011)},
  series={Lecture Notes in Computer Science},
  volume={6986},
  pages={215--226},
  year={2011},
  publisher={Springer},
  address={Berlin, Heidelberg},
  doi={10.1007/978-3-642-25870-1_20}
}

@inproceedings{Barman2015,
  title={Approximating {N}ash Equilibria and Dense Bipartite Subgraphs via an Approximate Version of Carath{\'e}odory's Theorem},
  author={Siddharth Barman},
  booktitle={Proceedings of the 47th Annual {ACM} Symposium on Theory of Computing ({STOC} 2015)},
  pages={361--369},
  year={2015},
  publisher={Association for Computing Machinery},
  address={New York, NY, USA},
  doi={10.1145/2746539.2746566}
}

@inproceedings{Kontogiannis2007,
  author={Kontogiannis, Spyros C. and Spirakis, Paul G.},
  editor={Ku{\v{c}}era, Lud{\v{e}}k and Ku{\v{c}}era, Anton{\'i}n},
  title={Well Supported Approximate Equilibria in Bimatrix Games: A Graph Theoretic Approach},
  booktitle={Proceedings of the 32nd International Symposium on Mathematical Foundations of Computer Science ({MFCS} 2007)},
  series={Lecture Notes in Computer Science},
  volume={4708},
  pages={596--608},
  year={2007},
  publisher={Springer},
  address={Berlin, Heidelberg},
  doi={10.1007/978-3-540-74456-6_53}
}

@inproceedings{Bilo2023,
  author={Bil{\`o}, Vittorio and Hansen, Kristoffer Arnsfelt and Mavronicolas, Marios},
  editor={Deligkas, Argyrios and Filos-Ratsikas, Aris},
  title={Computational Complexity of Decision Problems About {N}ash Equilibria in Win-Lose Multi-player Games},
  booktitle={Proceedings of the 16th International Symposium on Algorithmic Game Theory ({SAGT} 2023)},
  series={Lecture Notes in Computer Science},
  volume={14238},
  pages={40--57},
  year={2023},
  publisher={Springer},
  address={Cham, Switzerland},
  doi={10.1007/978-3-031-43254-5_3}
}

@article{Addario007,
  title={A Polynomial Time Algorithm for Finding {N}ash Equilibria in Planar Win-Lose Games},
  author={Addario-Berry, Louigi and Olver, Neil and Vetta, Adrian},
  journal={Journal of Graph Algorithms and Applications},
  volume={11},
  number={1},
  pages={309--319},
  year={2007},
  doi={10.7155/jgaa.00147}
}

@article{DaskalakisGP09,
  author       = {Constantinos Daskalakis and
                  Paul W. Goldberg and
                  Christos H. Papadimitriou},
  title        = {The Complexity of Computing a {N}ash Equilibrium},
  journal      = {{SIAM} Journal on Computing},
  volume       = {39},
  number       = {1},
  pages        = {195--259},
  year         = {2009},
  doi          = {10.1137/070699652},
}

@article{ChenDT09,
  author       = {Xi Chen and
                  Xiaotie Deng and
                  Shang{-}Hua Teng},
  title        = {Settling the complexity of computing two-player {N}ash equilibria},
  journal      = {Journal of the ACM},
  volume       = {56},
  number       = {3},
  articleno    = {14},
  numpages     = {57},
  year         = {2009},
  doi          = {10.1145/1516512.1516516},
}

@article{Brandt2009,
  title={Ranking games},
  author={Felix Brandt and Felix Fischer and Paul Harrenstein and Yoav Shoham},
  journal={Artificial Intelligence},
  volume={173},
  number={2},
  pages={221--239},
  year={2009},
  doi={10.1016/j.artint.2008.10.008}
}

@inproceedings{ghoshUniqueNash,
  title={The Complexity of Computing a Unique {N}ash Equilibrium},
  author={Ghosh, Abheek and Goldberg, Paul W. and Hollender, Alexandros},
  booktitle={Proceedings of the 67th Annual {IEEE} Symposium on Foundations of Computer Science ({FOCS} 2026)},
  year={2026},
  publisher={IEEE Computer Society},
  address={Los Alamitos, CA, USA},
  note={To appear}
}

@inproceedings{Rubinstein16,
  author       = {Aviad Rubinstein},
  title        = {Settling the Complexity of Computing Approximate Two-Player {N}ash Equilibria},
  booktitle    = {Proceedings of the 57th Annual {IEEE} Symposium on Foundations of Computer Science ({FOCS} 2016)},
  pages        = {258--265},
  year         = {2016},
  publisher    = {IEEE Computer Society},
  address      = {Los Alamitos, CA, USA},
  doi          = {10.1109/FOCS.2016.35},
}

@inproceedings{LiptonMM03,
  author       = {Richard J. Lipton and
                  Evangelos Markakis and
                  Aranyak Mehta},
  title        = {Playing large games using simple strategies},
  booktitle    = {Proceedings of the 4th {ACM} Conference on Electronic Commerce ({EC} 2003)},
  pages        = {36--41},
  year         = {2003},
  publisher    = {Association for Computing Machinery},
  address      = {New York, NY, USA},
  doi          = {10.1145/779928.779933},
}

@inproceedings{DaskalakisP09,
  author={Daskalakis, Constantinos and Papadimitriou, Christos H.},
  title={On Oblivious {PTAS}'s for {N}ash Equilibrium},
  booktitle={Proceedings of the 41st Annual {ACM} Symposium on Theory of Computing ({STOC} 2009)},
  pages={75--84},
  year={2009},
  publisher={Association for Computing Machinery},
  address={New York, NY, USA},
  doi={10.1145/1536414.1536427}
}

\appendix
\section{Proof of \cref{thm:pure2poly} -- \PURE{} to $\eps$-\POLY{} Reduction}\label{sec:pure2poly}

We want to prove that the $\eps$-\POLY{} problem, i.e., the problem of computing an $\eps$-WSNE for the class of polymatrix games described in the definition of $\eps$-\POLY{} is PPAD-hard, for all $\eps < 1/2$.
We do this using a reduction from the PPAD-hard problem \PURE{}. We use a variant of \PURE{} with \NOT{}, \AND{}, and \PURIFY{} gates, which is also PPAD-hard. Our reduction adapts a similar reduction by \citet{deligkas2024pure}.  Next we formally define the \PURE{} problem.

\begin{definition}[\PURE{}]
We are given an arithmetic circuit that may have cycles. The circuit has \NOT{}, \AND{}, and \PURIFY{} gates, described below. The input/output variables of each gate take values in the interval $[0,1]$. Each variable must be the output of exactly one gate.
We can also view the circuit as a directed graph with the vertices corresponding to the variables and the directed edges corresponding to the dependency of one variable on another variable through a gate (directed edges point from the inputs to the outputs of the gates).
Next we describe the behaviour of the three gates:
\begin{itemize}
    \itemwithtable{
        \NOT{}: Takes an input, say $u$, and gives an output, say $v$. If $u \in \{0, 1\}$, then $v = 1 - u$. If $u \in (0, 1)$, then $v$ is allowed to take any value in $[0, 1]$.
    }{
        \begin{tabular}{c || c}
            $u$ & $v$ \\
            \hline
            $0$ & $1$ \\
            $1$ & $0$ \\
            $(0,1)$ & $[0, 1]$ \\
        \end{tabular}
    }

    \itemwithtable{
        \AND{}: Takes two inputs, say $u$ and $v$, and gives an output, say $w$. If either input $u$ or $v$ is $0$, then the output $w = 0$. If both inputs $u$ and $v$ are $1$, then the output $w = 1$. Otherwise, the output $w$ is allowed to take any value in $[0, 1]$.
    }{
        \begin{tabular}{c | c || c}
            $u$ & $v$ & $w$ \\
            \hline
            $0$ & $[0, 1]$ & $0$ \\
            $[0, 1]$ & $0$ & $0$ \\
            $1$ & $1$ & $1$ \\
            \multicolumn{2}{c || }{otherwise} & $[0, 1]$ \\
        \end{tabular}
    }

    \itemwithtable{
        \PURIFY{}: Takes one input, say $u$, and gives two outputs, say $v$ and $w$. If the input $u \in \{0, 1\}$, then the outputs $v = w = u$ are the same as the input. If the input $u \in (0, 1)$, then at least one of the output variables must take a value in $\{0, 1\}$, the other output can take any value in $[0, 1]$.
    }{
        \begin{tabular}{c || >{\centering\arraybackslash}p{1cm} | >{\centering\arraybackslash}p{1cm}}
            $u$ & $v$ & $w$ \\
            \hline
            $0$ & $0$ & $0$ \\
            $1$ & $1$ & $1$ \\
            \multirow{2}{*}{$(0, 1)$} & \multicolumn{2}{c}{\multirow{2}{2.8cm}{At least one output in $\{0, 1\}$}} \\
            & \multicolumn{2}{c}{} \\
        \end{tabular}
    }
\end{itemize}
The \PURE{} problem is to find an assignment of all the variables in the range $[0,1]$ such that the variables are consistent with the gates in the circuit. This computational problem is known to be PPAD-hard~\cite{deligkas2024pure}.
Further, this hardness result holds even if we assume that the graph is bipartite and all vertices have an in-degree and out-degree of at most two and degree (= in-degree + out-degree) of at most three~\cite{deligkas2024pure}.
We slightly overload notation and identify both a vertex (variable) and its value by $v$.
Let us denote the set of vertices as $V$.
\end{definition}

\paragraph{Reduction.} Given an instance of the \PURE{} problem, we construct a polymatrix game with $|V|$ players satisfying the requirements stated in the definition of the $\eps$-\POLY{} problem.
Each player in the polymatrix game has two actions $\{ 0, 1 \}$ and simulates a variable from the \PURE{}. In fact, the mixed strategy of any given player will directly correspond to the corresponding variable in the \PURE{}. Therefore, overloading notation,
we identify the players of the polymatrix game as elements from $V$,
and denote by $v \in V$ all of the following: (i) the variable $v$ in the \PURE{}; (ii) the corresponding player of the polymatrix game; and (iii) the probability with which this player plays action $1$.

The payoff matrices of the game will simulate the gates. The definition of \PURE{} requires that each variable $v$ is the output of exactly one gate $g$. An important property of our reduction is that the player representing $v$ only receives non-zero payoffs from the inputs to the gate $g$, and receives zero payoffs from any other gate that use $v$ as an input. This means that we can reason about the equilibrium condition of each of the gates independently, without worrying about where the values computed by those gates are used.

\NOT{}. The player $v$, who represents the output variable, will
play the following bimatrix game against player $u$, who represents the input variable.
\[
    A^{vu} =
        \begin{pmatrix}
            0 & 1\\
            1 & 0
        \end{pmatrix},
    \qquad
    A^{uv} = 0.
\]
We claim that this gate will work for all $\eps < 1$.
\begin{itemize}
    \item If $u$ plays action $0$ as a pure strategy, then playing $1$ gives a payoff of $1$ to $v$ and playing $0$ gives a payoff $0$ to $v$. So in any $\eps$-WSNE with $\eps < 1$, only action $1$ can be an $\eps$-best-response for $v$, implying that $v$ must play action $1$ as a pure strategy, as required by the constraints of the \NOT{} gate.

    \item Using identical reasoning, if $u$ plays action $1$ as a pure strategy, then $v$ must play action $0$ as a pure strategy in any $\eps$-WSNE.

    \item If $u$ plays a strictly mixed strategy, then the \NOT{} gate places no constraints on the output, so we need not prove anything about $v$’s strategy.
\end{itemize}

\AND{}. The player $w$, who represents the output variable, will play the following bimatrix games against $u$ and $v$, who represent the input variables.
\[
    A^{wu} = A^{wv} =
        \begin{pmatrix}
            2 & 0\\
            0 & 1
        \end{pmatrix},
    \qquad
    A^{uw} = A^{vw} = 0.
\]
We claim that this gate will work for all $\eps < 1$.
\begin{itemize}
    \item If both $u$ and $v$ play action $1$ as a pure strategy, then the payoff to $w$ for playing action $0$ is $0$, and the payoff to $w$ for playing action $1$ is $1 + 1 = 2$. So if $\eps < 2$, then in all $\eps$-WSNEs, the only $\eps$-best-response for $w$ is action $1$, as required by the constraints of the \AND{} gate.

    \item If at least one of $u$ or $v$ plays action $0$ as a pure strategy, then the payoff to $w$ for playing action $0$ is at least $2$, while the payoff to $w$ for playing action $1$ is at most $1$. So if $\eps < 1$, then in all $\eps$-WSNEs the only $\eps$-best-response for $w$ is action $0$, as required by the constraints of the \AND{} gate.

    \item The \AND{} gate places no other constraints on the variable $w$, so we can ignore all other cases, e.g., the case where both $u$ and $v$ play strictly mixed strategies.
\end{itemize}

\PURIFY{}. The players $v$ and $w$, who represent the output variables, play the following games against $u$, who represents the input variable.
\[
    A^{vu} =
        \begin{pmatrix}
            1 & 0\\
            0 & 2
        \end{pmatrix},
    \qquad
    A^{wu} =
        \begin{pmatrix}
            2 & 0\\
            0 & 1
        \end{pmatrix},
    \qquad
    A^{uv} = A^{uw} = 0.
\]
We claim that this gate works for any $\eps < 1/2$. We first consider player $v$.
\begin{itemize}
    \item If $u$ plays action $0$ as a pure strategy, then action $0$ gives payoff $1$ to $v$, while action $1$ gives payoff $0$, so in any $\eps$-WSNE with $\eps < 1$, we have that action $0$ is the only $\eps$-best-response for $v$.

    \item If $u$ places at least $1/2$ probability on action $1$, then action $0$ gives at most $1/2$ payoff to $v$, while action $1$ gives at least $1$ payoff to $v$. So in any $\eps$-WSNE with $\eps < 1/2$, we have that action $1$ is the only $\eps$-best-response for $v$.
\end{itemize}
Symmetrically, for player $w$ we have the following:
\begin{itemize}
    \item If $u$ places at most $1/2$ probability on action $1$, then action $0$ gives at least $1$ payoff to $w$, while action $1$ gives at most $1/2$ payoff to $w$. So in any $\eps$-WSNE with $\eps < 1/2$, we have that action $0$ is the only $\eps$-best-response for $w$.

    \item If $u$ plays action $1$ as a pure strategy, then action $1$ gives payoff $1$ to $w$, while action $0$ gives payoff $0$, so in any $\eps$-WSNE with $\eps < 1$ we have that action $1$ is the only $\eps$-best-response for $w$.
\end{itemize}
From these properties, we can verify that the constraints imposed by the \PURIFY{} gate are enforced correctly.
\begin{itemize}
    \item If $u$ plays action $0$ as a pure strategy, then both $v$ and $w$ play action $0$ as a pure strategy.
    \item If $u$ plays action $1$ as a pure strategy, then both $v$ and $w$ play action $1$ as a pure strategy.

    \item No matter how $u$ mixes, at least one of $v$ or $w$ plays a pure strategy, with $w$ playing pure strategy $0$ whenever $u$ places at most $1/2$ probability on action $1$, and $v$ playing pure strategy $1$ whenever $u$ places at least $1/2$ probability on action $1$.
\end{itemize}

As we have argued above, each of the gate constraints from the \PURE{} instance are enforced correctly in any $\eps$-WSNE of the polymatrix game with $\eps < 1/2$. Thus, given such an $\eps$-WSNE, we get a satisfying assignment to the \PURE{} instance. The interaction graph of the polymatrix game is exactly the interaction graph of the \PURE{} instance, so we get bipartiteness, and we get in-degree, out-degree, and degree of at most $2$, $2$, and $3$, respectively. Moreover, by construction, the game has two strategies for each player and all the payoffs matrices are either zero or among one of these three matrices: $\begin{pmatrix} 0 & 1 \\ 1 & 0 \end{pmatrix}$, $\begin{pmatrix} 1 & 0 \\ 0 & 2 \end{pmatrix}$, or $\begin{pmatrix} 2 & 0 \\ 0 & 1 \end{pmatrix}$. Finally, if the out-degree of a player $u$ is two, then we could have one of the following two scenarios:
\begin{itemize}
    \item If $u$ is an input to a \PURIFY{} gate, then the two outgoing non-zero payoff matrices are $\begin{pmatrix} 1 & 0 \\ 0 & 2 \end{pmatrix}$ and $\begin{pmatrix} 2 & 0 \\ 0 & 1 \end{pmatrix}$, which satisfies the requirements of $\eps$-\POLY{}.

    \item Otherwise, $u$ is an input to two gates that could be any combination of \NOT{} and \AND{} gates. If $u$ is an input to an \AND{} gate, e.g., $w = \AND(u, v)$ for some $v$ and $w$, then we pass $u$ through a sequence of two \NOT{} gates before putting it into the \AND{} gate: $w = \AND(\NOT(\NOT(u)), v)$. Notice that this does not change the behaviour of the circuit. By doing this we ensure that if $u$ has an out-degree of two, then $u$ cannot be an input to an \AND{} gate. Hence, the only case other than the \PURIFY{} case discussed in the previous point is that $u$ is an input to two \NOT{} gates and the two outgoing non-zero payoff matrices are both $\begin{pmatrix} 0 & 1 \\ 1 & 0 \end{pmatrix}$, which satisfies the requirements of $\eps$-\POLY{}.
\end{itemize}
Hence, the constructed game satisfies all the requirements of $\eps$-\POLY{}.

\section{Proofs of Lemmas from \cref{thm:res-bimatrix}}

\subsection{Checking the Properties of the Constructed $\{ 0, 1, 2, 8 \}$-Valued Bimatrix Game}
\label{app:resbi-check}
We check that the constructed bimatrix game $(A, B)$ satisfies the requirements of $\eps$-\RESBI{m}:
The input polymatrix game $G$ has in-degree and out-degree of two, all payoffs are in the set $\{ 0, 1, 2 \}$, and the payoff matrices are diagonal or anti-diagonal. Therefore, in the top-left primary-primary block of size $2n \times 2n$ of the matrix $A$, every row and column has at most two non-zero payoffs and the non-zero payoffs are in the set $\{ 1, 2 \}$.
For the other three blocks of $A$ that include secondary row or column indices, by construction:
\begin{itemize}
    \item the primary columns get a $K$ from the secondary rows, so these columns total have one $K$ and at most two payoffs form $\{ 1, 2 \}$;
    \item the secondary columns either have two $1$s from the primary rows or have one $1$ from the secondary rows, no other non-zero payoffs;
    \item the primary rows get a $1$ from the secondary columns, so these rows total have at most three payoffs form $\{ 1, 2 \}$;
    \item the secondary rows either have two $K$s from the primary columns or have one $1$ from the secondary columns, no other non-zero payoffs.
\end{itemize}
Overall, the above properties show that the rows and columns of $A$ satisfy the requirements $\eps$-\RESBI{m}. A similar argument applies for $B^\intercal$.

\subsection{Proof of \cref{lm:positive-prob}} \label{app:positive-prob-proof}
Define two bijections $f,g : [N] \to [N]$ by
\[
    f(i) = \bmod(i + n + 2, N),
    \qquad
    g(j) = \bmod(j + n + 1, N).
\]

We first show that if $x_i < 1/2KN$ for some $i \in [N]$, then $y_{f(i)} = 0 < 1/2KN$.
Fix $i$ with $x_i < 1/2KN$ and consider the column index $f(i)$. We have the following two cases:

\smallskip
\noindent\textbf{Case 1: $f(i) \in [N] \setminus [n]$ (a secondary column index).}
By construction, the column player's payoff for playing secondary action $f(i)$ is positive only when the row player plays index $i$ -- it is $K$ if $i \in [n]$ and $1$ if $i \in [N] \setminus [n]$. So, if $x_i < 1/2KN$, the column player gets less than $K \cdot (1/2KN) = 1/2N$ expected utility for playing $f(i)$.

On the other hand, some row index has probability at least $1/N$ under $x$.
Against any fixed row index, the column player has a pure action with payoff at least $1$.
Thus the best-response payoff of the column player is at least $1/N$.
Since $\eps < 1/6KN < 1/2N$, an action with payoff $1/2N$ cannot be an $\eps$-best-response, hence it cannot appear in the support of $y$, and therefore $y_{f(i)} = 0$.

\smallskip
\noindent\textbf{Case 2: $f(i) \in [n]$ (a primary column index).}
Let $j = f(i) \in [n]$.
If the column player plays a pure action $(j, t)$ for some $t \in \{0,1\}$, then because $x_i < 1/2KN$, the column player's expected utility receives a contribution of less than $1/2KN$ from the unique secondary row index that pays $1$ against column index $j$ (this is exactly the index $i$ satisfying $\bmod(j-i, N) = n + 2$).
So the expected utility of action $(j,t)$ must come mainly from the top-left primary-primary block.

Now, in the polymatrix instance, the in-degree of the player corresponding to $j$ is at most two.
By the definition of $\delta$-\POLY{}, the (at most) two non-zero incoming payoff matrices for this player are either diagonal or anti-diagonal matrices with payoffs in $\{ 0, 1, 2 \}$.
Let the set of primary row indices with non-zero payoffs corresponding to the column index $j$ be $S$. We know that $|S| \le 2$. Also let $k^* = \argmax_{k \in S} x_k$.
The column player's utility for playing index $j$ is less than
\begin{equation}\label{eq:positive-prob:1}
    \frac{1}{2KN} + 2 \sum_{k \in S} x_k \le \frac{1}{2KN} + 2 |S| \max_{k \in S} x_k \le \frac{1}{2KN} +4 x_{k^*}.
\end{equation}
Consider the secondary column action $f(k^*) = k^* + n + 2$ corresponding to the primary row index $k^*$. Its expected utility is exactly $K x_{k^*} = 8 x_{k^*}$. Hence the difference in utility between $f(k^*)$ and $(j,t)$ is at least $8 x_{k^*} - 4 x_{k^*} - 1/2KN = 4 x_{k^*} - 1/2KN$.
If $4 x_{k^*} - 1/2KN > \eps$, then $(j,t)$ cannot be an $\eps$-best-response, so it cannot be in the support.
If $4 x_{k^*} - 1/2KN \le \eps$, then the utility for playing $(j,t)$ is at most $4 x_{k^*} + 1/2KN \le \eps + 1/KN$ by \eqref{eq:positive-prob:1}. But, as in Case 1, the column player's best-response payoff is at least $1/N$.
Since $\eps < 1/6KN$, we have $1/N - (\eps + 1/KN) > \eps$, and therefore $(j,t)$ again cannot be an $\eps$-best-response.
Thus no pure action in index $j$ can appear in the support, which implies $y_{f(i)}=0$.

\smallskip
We have proved that $x_i < 1/2KN \implies y_{f(i)} = 0 < 1/2KN$.
By symmetry (swapping the roles of the two players and using the definition of $g$ instead of $f$), we also get $y_j < 1/2KN \implies x_{g(j)} = 0 < 1/2KN$.

Finally, note that $g(f(i)) = \bmod(i+1, N)$ for every $i \in [N]$.
Therefore, if any $x_i < 1/2KN$ then $x_{i+1} < 1/2KN$, then $x_{i+2 } < 1/2KN$, and so on, which implies $\sum_{i \in [N]} x_i < 1/2K < 1$, a contradiction.
Hence $x_i \ge 1/2KN$ for all $i \in [N]$, and then also $y_j \ge 1/2KN$ for all $j \in [N]$ by using $y_j < 1/2KN \implies x_{g(j)} < 1/2KN$.

\subsection{Proof of \cref{lm:unif-over-indices}} \label{app:unif-over-indices-proof}
By \cref{lm:positive-prob}, every secondary action is in the support.
Fix $i \in [n]$ and consider the secondary column action $i+n+2$.
By construction, its expected utility is exactly $K x_i$ (it only pays $K$ against primary index $i$ and has no other positive contributions).
Since all supported actions of the column player must be $\eps$-best-responses, the payoffs of $i+n+2$ and $j+n+2$ can differ by at most $\eps$ for any $i,j \in [n]$.
Thus $|K x_i - K x_j| \le \eps$, i.e., $|x_i-x_j| \le \eps/K$.

The proof for $|y_i-y_j|$ is symmetric using the secondary row actions $i+n+1$ and $j+n+1$, whose expected utilities are exactly $Ky_i$ and $Ky_j$.

\subsection{Proof of \cref{lm:encode}} \label{app:encode-proof}
We prove the claim for the players on the left side of $G$ (corresponding to the row player’s primary indices); the proof for the right side is symmetric.

We relate the payoff gaps between the two actions $(i, 0)$ and $(i, 1)$ for any given primary index $i$ in $(A,B)$ with the payoff gaps between the actions $0$ and $1$ for player $i$ in $G$.

Fix $i \in [n]$. For $s \in \{0, 1\}$, denote the expected utility of player $i$ in the polymatrix game $G$ when she plays $s$ against $q$ by
\[
    u_i(s, q) :=\sum_{j \in [n]} \big( (1-q_j)\, A^{ij}_{s0} + q_j\, A^{ij}_{s1} \big).
\]
By the in-degree condition of $\delta$-\POLY{}, there are at most two indices $j$ for which $A^{ij} \neq 0$; let
\[
    T_i :=\{ j \in [n] \mid A^{ij} \neq 0 \},
    \qquad \text{so } |T_i| \le 2.
\]
For $j \in [n]$, define
\[
    \Delta^{ij} :=\big((1-q_j)\,A^{ij}_{0 0} + q_j\,A^{ij}_{0 1}\big) \ -\ \big((1-q_j)\,A^{ij}_{1 0} + q_j\,A^{ij}_{1 1}\big).
\]
Since all entries of $A^{ij}$ lie in $\{0,1,2\}$, we have $|\Delta^{ij}| \le 2$ for every $j$.

Let $u_A((i, s), y)$ denote the expected utility of the row player for playing primary action $(i, s)$ against mixed strategy $y$ of the secondary player.
Consider the two row actions $(i,0)$ and $(i,1)$ in the bimatrix game.
Their payoffs against secondary column actions are identical (the secondary payoffs depend only on the index $i$ and not on the bit),
so the payoff difference comes only from the primary-primary block.
Therefore,
\begin{equation}\label{eq:encode:gap-bimatrix}
    u_A((i,0), y) - u_A((i,1), y)
    = \sum_{j \in [n]} y_j \, \Delta^{ij}
    = \sum_{j \in T_i} y_j \, \Delta^{ij}.
\end{equation}
On the other hand, the corresponding payoff gap in the polymatrix game is
\begin{equation}\label{eq:encode:gap-poly}
    u_i(0, q) - u_i(1, q)
    = \sum_{j \in T_i} \Delta^{ij}.
\end{equation}
Let $y_{\min} := \min_{j \in [n]} y_j$.
By \cref{lm:positive-prob}, $y_{\min} \ge 1/2KN$.
By \cref{lm:unif-over-indices}, we have
$|y_j - y_{j'}| \le \eps/K$ for all $j, j' \in [n]$, therefore
$0 \le y_j - y_{\min} \le \eps/K$ for all $j \in [n]$.
Using \eqref{eq:encode:gap-bimatrix} and $|\Delta^{ij}|\le 2$, we get
\begin{align*}
    u_A((i,0), y) - u_A((i,1), y)
    &= \sum_{j \in T_i} y_j \, \Delta^{ij} \\
    &= y_{\min} \sum_{j \in T_i} \Delta^{ij} - \sum_{j \in T_i} (y_j-y_{\min}) \, \Delta^{ij} \\
    &\ge \frac{1}{2 K N} \sum_{j \in T_i} \Delta^{ij} - \sum_{j \in T_i} (y_j-y_{\min})\,|\Delta^{ij}| \\
    &\ge \frac{1}{2 K N} \sum_{j \in T_i} \Delta^{ij} - |T_i| \cdot \frac{\eps}{K} \cdot 2 \\
    &\ge \frac{1}{2 K N} (u_i(0, q) - u_i(1, q)) - \frac{\eps}{2},
\end{align*}
where the last inequality is by \eqref{eq:encode:gap-poly} and $|T_i| \le 2$.
Say $x_{i1} > 0$, then $(i, 1)$ must be an $\eps$-best-response for the row player, so
\begin{align*}
    & u_A((i,1), y) \ge u_A((i,0), y) - \eps \implies u_A((i,0), y) - u_A((i,1), y) \le \eps \\
    &\implies \frac{1}{2 K N} (u_i(0, q) - u_i(1, q)) - \frac{\eps}{2} \le \eps \implies u_i(0, q) - u_i(1, q) \le 3 K N \eps = \delta.
\end{align*}
Therefore, action $1$ is a $\delta$-best-response for the polymatrix player $i$, and $p_i = x_{i1}/x_i > 0$ is justified. A similar argument applies when $x_{i0} > 0  \iff 1 - p_i > 0$ (and for the right side players of the polymatrix game). Therefore, $(p,q)$ is a $\delta$-WSNE of the polymatrix game $G$ for $\delta = 3 K N \eps$.

\section{Formal Definitions for the Simulation Matrices}
\label{app:matrices}

\paragraph{\bf Formal definition of the matrices used in a single
column simulations.}
Then the matrices $A'$ and $B'$ can be defined in the following way.
\begin{itemize}
\item For each column $j < k$ of $A$, the block of $A'$ defined by $\cb_1^j$ and
$\rb^j$ is equal to $S$, while the corresponding block of $B'$ is equal to $T$.

\item For each column $j < k$ of $A$, the block of $A'$ defined by $\cb_2^j$ and
$\rb^j$ is equal to a $2 \times 1$ vector filled with $1$s.

\item For each column $j < k$ of $A$, the block of $A'$ defined by $\cb_1^j$ and
$\rb_e$ is equal to $\encode(A_j)$.

\item For each column $j < k$ of $A$, the block of $B'$ defined by $\cb_2^j$ and
$\rb_e$ is equal to $B_j$.

\item For each column $j \ge k$ of $A$, the block of $A'$ defined by $\rb_e$
and column $3k + (j - k)$ is equal to $A_{j}$, and the corresponding
block of $B'$ is equal to $B_{j}$.

\item All other entries of both matrices are equal to zero.

\end{itemize}

\paragraph{\bf Formal definition of the matrices used in a dual column
simulation}
\begin{itemize}
\item For each row $i < k$, the block of $A'$ defined by $\rb_1^i$ and
$\cb_1^i$ is defined to be $S$, and the corresponding block of $B'$ is defined
to be $T_1$.
\item For each row $i < k$, the block of $A'$ defined by $\rb_2^i$ and
$\cb_1^i$ is defined to be $S$, and the corresponding block of $B'$ is defined
to be $T_2$.
\item For each row $i < k$, the block of $A'$ defined by $\rb_1^i$ and
$\cb_2^i$ is defined to be an all $1$s matrix. Likewise, the block of $A'$
defined by $\rb_2^i$ and $\cb_3^i$ is defined to be an all $1$s matrix.

\item For each row $i < k$, the block of $A'$ defined by $\rb_e$ and $\cb_1^i$
is the output of applying the $\encode$ function to columns $2i$ and $2i + 1$
of $A$.

\item For each row $i < k$, the block of $B'$ defined by $\rb_e$ and $\cb_2^i$
is defined to be equal to $B_{2i}$, and the
block of $B'$ defined by $\rb_e$ and $\cb_3^i$ is defined to be equal to $B_{2i
+ 1}$.

\item For each row $i \ge k$, the block of $A'$ defined by $\rb_e$ and column
$5k + i$ is the $2k+i$th column of $A$, while the corresponding block of
$B'$ is the $2k+i$th column of $B$.
\end{itemize}

\section{Proof of Theorem~\ref{thm:scs1}}
\label{app:scs1proof}

Throughout this section we assume that
$(x^{*}, y^{*})$ is an $\eps$-WSNE of $(A', B')$. For each set of columns
$C$, we use $P(C) = \sum_{j \in C} y^*_j$ to give the total amount of
probability placed on those columns by $y^*$, and likewise for each set of rows
$R$ we use $P(R) = \sum_{j \in R} x^*_j$ to give the total amount of
probability given to those rows by $x^*$.

We start by showing some basic properties about $(x^*, y^*)$. The following
lemma states that if the column player plays column block $\cb_1^j$ for some $j
< k$, then the row player must play the corresponding row block.

\begin{lemma}
\label{lem:scs1}
If $\eps < 1/(3n)$, then for each $j < k$, if $P(\cb_1^j) > 0$ holds in $(x^{*}, y^{*})$, then $P(\rb^j) > 0$ also holds.
\end{lemma}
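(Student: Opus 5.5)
We argue by contraposition. Suppose $P(\rb^j) = 0$ for some $j < k$; we will show that every column in $\cb_1^j$ is strictly worse than some other column for the column player by more than $\eps$. The WSNE condition then forces $y^*_{3j} = y^*_{3j+1} = 0$, that is, $P(\cb_1^j) = 0$.

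First we read off the column player's payoffs to the two columns of $\cb_1^j$ from the formal definition in Appendix~\ref{app:matrices}. In $B'$, columns $3j$ and $3j+1$ are zero outside the block $\rb^j \times \cb_1^j$, where they equal $T$. The column $B_j$ is placed in column $\cb_2^j$, not in $\cb_1^j$, so the encoding row block contributes nothing to these two columns. Hence if $P(\rb^j) = 0$, the expected payoff of each column in $\cb_1^j$ against $x^*$ is exactly $0$.

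Next we lower-bound the column player's best-response payoff. Every row of $B'$ contains a $1$:
\begin{itemize}
\item each row $2j'$ or $2j'+1$ of $\rb^{j'}$ has a $1$ from $T$;
\item each row of $\rb_e$ comes from a row of $B$, and by Definitions~\ref{def:resbi} and~\ref{def:stage1} every row of $B^\intercal$'s columns, i.e.\ every row of $B$, is non-zero with entries at least $1$.
\end{itemize}
We need to double-check the second point: every column of $A$ and $B^\intercal$ has a nonzero entry in both definitions.

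The total number of rows of $A'$ is $2k + n \le 3n$. So some row $r$ has $x^*_r \ge 1/(3n)$. The column containing a $1$ in row $r$ then yields payoff at least $1/(3n) > \eps$. Since the columns of $\cb_1^j$ yield $0 < 1/(3n) - \eps$, neither is an $\eps$-best response, so both receive zero probability. This contradicts $P(\cb_1^j) > 0$.

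The main obstacle is the second step: verifying carefully, from the stated definitions, that every row of $B$ (equivalently every column of $B^\intercal$) contains a positive entry. For an input satisfying Definition~\ref{def:resbi} or~\ref{def:stage1}, every column of $B^\intercal$ has at least a $1$ or a $K$. The only other thing to confirm is that the counting of rows (at most $3n$) matches the bound $\eps < 1/(3n)$ used in the statement.
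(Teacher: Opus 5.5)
Your proof is correct and is essentially the paper's argument. If $P(\rb^j)=0$, both columns of $\cb_1^j$ pay the column player exactly $0$. Meanwhile, pigeonhole over the $2k+n\le 3n$ rows, together with a $1$ in every row of $B'$, gives some column worth at least $1/(3n)>\eps$. You rightly use \emph{rows} of $B'$ here, where the paper's proof of this lemma says ``columns''.

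The one point to tighten is your check that every row of $B$ is non-zero. Definition~\ref{def:stage1}(i) literally allows an all-zero column of $B^\intercal$, so for stage~1 inputs this fact does not follow from the definition. It should instead be taken from the construction, which is what the paper's ``by construction'' implicitly relies on. It is an invariant of the simulations: the new rows of $B'$ come from $T$. After the player swap, the rows of the new $B$ are columns of the old $A'$. Those are non-zero because every column of $S$ and of the all-ones block is non-zero, and the remaining columns are columns of $A$, which are non-zero by Definition~\ref{def:resbi} at the start.
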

\begin{proof}
Suppose for the sake of contradiction that this is not the case, meaning
that $P(\rb^j) = 0$. This implies that the column player receives payoff $0$
for all columns in $\cb_1^j$, since all other rows in $\cb_1^j$ have zero
payoff.
Observe that every column of $B'$ contains at least one payoff that is greater
than or equal to $1$ by construction, so there must exist a column for which
the column player receives payoff at least $1/(3n)$.
Since $\eps < 1/(3n)$ by assumption, we have that the column player cannot
place probability on any column in $\cb_1^j$, which provides our contradiction.
\end{proof}

Observe that the unique Nash equilibrium of $(S, T)$ has the column player
playing $(1/3, 2/3)$ over the two columns.
The next lemma states that if the row player plays row block $\rb_j$ for some
$j < k$, then the column player must play a strategy that is within $\eps$ of
a scaled-down version of this strategy in the corresponding column block.

\begin{lemma}
\label{lem:colprob}
If $\eps < 1/(3n)$, and if $P(\rb_j) > 0$, then we have $y^*_{3j} \in [1/3
\cdot P(\cb_1^j) - \eps, 1/3 \cdot P(\cb_1^j) + \eps]$.
\end{lemma}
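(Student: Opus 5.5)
Since $P(\rb^j) > 0$, the $\eps$-WSNE condition says that at least one of the two rows $2j, 2j+1$ is an $\eps$-best-response of the row player against $y^*$. The plan is to show that each of these rows is worth roughly the same against the rest of $y^*$. Once that is done, being an $\eps$-best-response pins down the split of $P(\cb_1^j)$ between columns $3j$ and $3j+1$.

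\textbf{Step 1: the payoffs of the two rows.} In $A'$, row $2j$ has exactly three non-zero entries: a $2$ in column $3j$, a $0$ in column $3j+1$, and a $1$ in column $3j+2$. Row $2j+1$ has a $0$ in column $3j$, a $1$ in column $3j+1$, and a $1$ in column $3j+2$. All other entries of both rows are zero. So the two expected payoffs are
\[
u(2j) = 2y^*_{3j} + y^*_{3j+2}, \qquad u(2j+1) = y^*_{3j+1} + y^*_{3j+2}.
\]
The common term $y^*_{3j+2}$ cancels in the difference:
\[
u(2j) - u(2j+1) = 2y^*_{3j} - y^*_{3j+1} = 3y^*_{3j} - P(\cb_1^j).
\]

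\textbf{Step 2: one row cannot strictly dominate the other.} Suppose $3y^*_{3j} - P(\cb_1^j) > \eps$. Then row $2j+1$ is not an $\eps$-best-response, so $x^*_{2j+1} = 0$, and since $P(\rb^j) > 0$ the row player must play row $2j$. Now look at the column player in the block $T$. Column $3j$ pays only against row $2j+1$, so it earns $0$. I will argue, as in Lemma~\ref{lem:scs1}, that some column of $B'$ pays at least $1/(3n)$, so column $3j$ is not an $\eps$-best-response and $y^*_{3j} = 0$. But then $3y^*_{3j} - P(\cb_1^j) \le 0$, a contradiction.

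The case $3y^*_{3j} - P(\cb_1^j) < -\eps$ is symmetric. Row $2j$ drops out, so only row $2j+1$ is played within $\rb^j$. Column $3j+1$ pays only against row $2j$, so it earns $0$ and $y^*_{3j+1} = 0$. Then $3y^*_{3j} - P(\cb_1^j) = 2y^*_{3j} \ge 0$, again a contradiction.

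\textbf{Step 3: conclusion.} Together, Steps 1 and 2 give $|3y^*_{3j} - P(\cb_1^j)| \le \eps$. Dividing by $3$ yields $|y^*_{3j} - P(\cb_1^j)/3| \le \eps/3$, which is stronger than the claimed $\pm\eps$ window.

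\textbf{Main obstacle.} The only step needing care is the lower bound on the column player's best-response payoff used in Step 2. Every column of $B'$ has at least one entry $\ge 1$, but that alone does not give a column paying $1/(3n)$: it needs some row holding enough probability that has a payoff of at least $1$ somewhere. I intend to reuse the argument of Lemma~\ref{lem:scs1} directly. If that argument turns out to require more, I will instead use that some row has probability at least $1/(\text{number of rows})$ and that every row of $B'$ contains a $1$. One also has to check that every row of $B'$ contains a nonzero entry; rows of $\rb_e$ inherit this from $B$.
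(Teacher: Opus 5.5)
Your proof is correct and follows the paper's argument: if $y^*_{3j}$ strays from $P(\cb_1^j)/3$, one row of $\rb^j$ loses to the other by more than $\eps$ and is unplayed. The column of $T$ that pays only against that row ($3j+1$ or $3j$) then earns zero against a best response worth at least $1/(3n)$, so it is unplayed, which contradicts the assumed deviation. Working directly with the gap $3y^*_{3j}-P(\cb_1^j)$ gives you the sharper window $\pm\eps/3$, and your row-based justification of the $1/(3n)$ bound (at most $3n$ rows, each containing a $1$) is exactly what the paper's proof of this lemma uses.
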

\begin{proof}
First suppose for the sake of contradiction that $y^*_{3j} < 1/3 \cdot P(\cb_1^j) - \eps$. Then the payoff
to the row player of row $2j$ is at most $2/3 \cdot P(\cb_1^j) - 2\eps +
y^*_{3j + 2}$. On the other hand, we have that $y^*_{3j + 1} =
P(\cb_1^j) - y^*_{3j}$, and therefore
\begin{align*}
y^*_{3j + 1} &> P(\cb_1^j) - (1/3 \cdot P(\cb_1^j) - \eps) \\
&= 2/3 \cdot P(\cb_1^j) + \eps.
\end{align*}
Thus the payoff to the row player of row $2j + 1$ is at least
$2/3 \cdot P(\cb_1^j) + \eps + y^*_{3j + 2}$. So the difference in payoff between rows $2j + 1$ and $2j$ is at least
\begin{align*}
& 2/3 \cdot P(\cb_1^j) + \eps + y^*_{3j + 2} - 2/3 \cdot P(\cb_1^j) - 2\eps -
y^*_{3j + 2} \\
& = 3 \eps,
\end{align*}
meaning that the row player cannot play row $2j$ in an $\eps$-WSNE. But this
means that
the column player must receive payoff zero for column $3j + 1$. Since every
row of $B'$ contains a payoff entry that is at least $1$, there must exist a
column for which the column player receives payoff at least $1/(3n)$. Since
$\eps < 1 / (3n)$, this means that the column player must place zero
probability on column $3j + 1$, which contradicts the assumption that
$y^*_{3j} < 1/3 \cdot P(\cb_1^j) - \eps$.

The fact that $y^*_{3j} \le 1/3 \cdot P(\cb_1^j) + \eps$ can be proved in a
symmetric manner.
\end{proof}

Recall that we exclude all column blocks that are played with probability less
than $11 \eps$ when we construct $y$ from $y^*$. The following lemma states
that, for any column block $\cb_1^j$ that is not excluded in this way, the
column player must also play $\cb^j_2$ as well.

\begin{lemma}
\label{lem:cb2lower}
If $P(\rb^j) > 0$ and $P(\cb^j_1) \ge 11 \eps$ hold in $(x^*, y^*)$ for
some column $j < k$, and if $4 \le K \le 8$ and $\eps < 1/(3n)$, then $P(\cb^j_2) > 0$ also holds.
\end{lemma}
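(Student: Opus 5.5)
Suppose, for contradiction, that $P(\cb^j_2) = 0$, i.e.\ $y^*_{3j+2} = 0$. We will show that some row of the encoding block beats both rows of $\rb^j$ by more than $\eps$. Then neither row of $\rb^j$ is an $\eps$-best response, so $P(\rb^j)=0$, contradicting the hypothesis.

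Let $c = P(\cb^j_1)$. Since $P(\rb^j) > 0$ and $\eps<1/(3n)$, Lemma~\ref{lem:colprob} applies and gives $y^*_{3j} \in [c/3 - \eps, c/3 + \eps]$. Hence $y^*_{3j+1} = c - y^*_{3j} \in [2c/3 - \eps, 2c/3 + \eps]$.

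\textbf{Payoffs of the rows of $\rb^j$.} In $A'$, rows $2j$ and $2j+1$ have non-zero entries only in $\cb^j_1$ (the block $S$) and in $\cb^j_2$ (the column of $1$s). With $y^*_{3j+2}=0$ their payoffs are
\[
2y^*_{3j} \le 2c/3 + 2\eps \quad\text{and}\quad y^*_{3j+1} \le 2c/3 + \eps .
\]
So the best row of $\rb^j$ earns at most $2c/3 + 2\eps$.

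\textbf{A better row in $\rb_e$.} Column $j$ of $A$ (with $j<k$) contains a $K$. Look at the row $r\in\rb_e$ carrying this $K$. In every case of $\encode$, row $r$ has $K/2$ in the second column of $\cb^j_1$, which is column $3j+1$. All its other entries are non-negative, so its payoff is at least
\[
(K/2)\, y^*_{3j+1} \ge (K/2)(2c/3 - \eps) = Kc/3 - K\eps/2 .
\]
Using $K\ge 4$, this is at least $4c/3 - 4\eps$. The gap to the rows of $\rb^j$ is therefore at least
\[
4c/3 - 4\eps - 2c/3 - 2\eps = 2c/3 - 6\eps .
\]
With $c \ge 11\eps$ this is at least $22\eps/3 - 6\eps = 4\eps/3 > \eps$. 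So neither row of $\rb^j$ is an $\eps$-best response, giving the contradiction.

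The main thing to check is the constant bookkeeping: that the $11\eps$ threshold suffices for the $K=4$ case, and that the upper bound $K\le 8$ is not needed here (it is presumably used elsewhere). The step I expect to need the most care is confirming from the definition of $\encode$ that the $K/2$ entry always sits in the second column of $\cb^j_1$, never the first. If it sat in the first, the bound would use $y^*_{3j}\approx c/3$ and give only $Kc/6$, which is too weak for $K=4$. The listed cases all place $K/2$ in the second column, so the argument goes through.
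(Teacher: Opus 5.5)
Your proof is correct and matches the paper's argument: assume $P(\cb^j_2)=0$, use Lemma~\ref{lem:colprob} to bound both rows of $\rb^j$ by $2c/3+2\eps$, and compare them with the $\rb_e$ row that has $K/2$ in column $3j+1$, which gives a gap of at least $2c/3-6\eps>\eps$. Your remark that $K\le 8$ is not needed is also right, since $2c/3-\eps>0$ makes $(K/2)(2c/3-\eps)$ increasing in $K$, so $K=4$ is the worst case; the paper uses $K\le 8$ only to bound the error term $(2+K/2)\eps$ by $6\eps$.
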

\begin{proof}
Let us assume for the sake of contradiction that
$P(\cb^j_2) = 0$.
We can invoke Lemma~\ref{lem:colprob} to conclude that $y^*_{3j} \le 1/3 \cdot
P(\cb_1^j) + \eps$, and therefore $y^*_{3j+1} \ge 2/3 \cdot P(\cb_1^j) - \eps$.
Lemma~\ref{lem:colprob} also implies that
$y^*_{3j+1} \le 2/3 \cdot P(\cb_1^j) + \eps$.

Since $P(\cb^j_2) = 0$ by assumption, the payoff of rows $2j$ and $2j + 1$ can
therefore be upper bounded by
\begin{align*}
(A' \cdot y^*)_{2j} &\le 2 \cdot (1/3 \cdot P(\cb_1^j) + \eps), \\
(A' \cdot y^*)_{2j + 1} &\le 1 \cdot (2/3 \cdot P(\cb_1^j) + \eps).
\end{align*}
Therefore the payoff of both rows is bounded by
$2/3 \cdot P(\cb_1^j) + 2\eps$.

On the other hand, observe that by construction the $\encode$ function always
outputs a value of~$K/2$ in column $3j + 1$, and let $i \in \rb_e$ be the row
that contains this payoff. We have that the payoff of this row is at least
\begin{equation*}
K/2 \cdot (2/3 \cdot P(\cb_1^j) - \eps)
\end{equation*}
So the difference in payoff between row $i$ and either of the rows $2j$ or $2j
+ 1$ is at least
\begin{align*}
& K/2 \cdot (2/3 \cdot P(\cb_1^j) - \eps) - 2/3 \cdot P(\cb_1^j) - 2\eps \\
& = (K/2 - 1) (2/3 \cdot P(\cb_1^j)) - (2 + K/2) \cdot \eps \\
& \ge (K/2 - 1) (2/3 \cdot P(\cb_1^j)) -6  \cdot \eps \\
& \ge 2/3 \cdot P(\cb_1^j) -6  \cdot \eps \\
& > \eps
\end{align*}
where the first inequality uses the assumption that $K \le 8$, the second
inequality uses the assumption that $K \ge 4$, and the final inequality uses
the assumption that $P(\cb_1^j) \ge 11 \eps$.
Hence, the row player cannot place probability on either row $2j$ or $2j + 1$
in an $\eps$-WSNE, contradicting the assumption that $P(\rb^j) > 0$.
\end{proof}

We now turn our attention towards showing that the re-normalization that is
used when defining $x$ is well-defined. The first lemma shows that if the
column player plays $\cb_2^j$ for some $j < k$, then the row player must place
some probability on $\rb_e$, which means that the re-normalization will
succeed.

\begin{lemma}
\label{lem:rbelowera}
If $P(\cb_2^j) > 0$ holds for some column $j < k$ in $(x^*, y^*)$, and if $K
\le 8$ and $\eps < 1/(6n)$, then $P(\rb_e) \ge 1/(48n)$ also holds.
\end{lemma}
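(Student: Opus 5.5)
Since $P(\cb_2^j) > 0$, the column $3j+2$ lies in the support of $y^*$. By the $\eps$-WSNE condition it must then be an $\eps$-best-response for the column player against $x^*$. The plan is to compare its payoff with a lower bound on the column player's best-response payoff, and to convert the resulting inequality into a lower bound on $P(\rb_e)$.

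\textbf{Upper bound on the payoff of column $3j+2$.} By construction, the only nonzero entries of $B'$ in column $3j+2$ lie in $\rb_e$, where the column equals $B_j$. The input games have all payoffs at most $K \le 8$ (in fact at most $2$ in $B$ for the games we consider, but $8$ suffices). Moreover, column $j$ of $B$ has at most three nonzero entries, since the input is $3$-sparse. So the payoff of column $3j+2$ is at most $8 \cdot \sum_{i \in \rb_e,\, B_{ij} \neq 0} x^*_i \le 8 \cdot P(\rb_e)$.

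\textbf{Lower bound on the best-response payoff.} I would argue that the column player always has a column worth at least roughly $1/(6n)$. I would split the rows of $A'$ into the blocks $\rb^{j'}$ for $j'<k$ and the block $\rb_e$.

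Case 1: $P(\rb_e) \ge 1/(48n)$. Then we are already done.

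Case 2: otherwise, the blocks $\rb^{j'}$ carry total probability at least $1 - 1/(48n) \ge 1/2$. There are at most $n$ such blocks, so some $\rb^{j'}$ carries probability at least $1/(2n)$. Within that block, one of the two rows carries probability at least $1/(4n)$. In $T$, each row has a $1$ in some column of $\cb_1^{j'}$. Playing that column therefore gives the column player payoff at least $1/(4n)$.

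\textbf{Combining the bounds.} The $\eps$-best-response condition gives $8 P(\rb_e) \ge 1/(4n) - \eps$. Using $\eps < 1/(6n)$, this yields $P(\rb_e) \ge (1/(4n) - 1/(6n))/8 = 1/(96n)$. This is off by a factor of $2$ from the target, so the constants need tightening, and this is the main obstacle.

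To fix the constants I would use that $B$ has entries of at most $2$ in the games considered, giving the sharper upper bound $2 P(\rb_e)$. If that is unavailable, I would choose the case threshold more carefully. With general $K \le 8$ and threshold $\tau$, Case 2 gives $P(\rb_e) \ge (1-\tau)/(32n) - \eps/8$. I would then balance this against $\tau$, using the best row of the chosen block more cleverly; for instance, the column of $T$ matching the heavier row gives at least half of the block's probability.

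Finally, I would double-check the exact constant $1/(48n)$ against the precise payoffs. The lemma's structure, a case split on $P(\rb_e)$ followed by averaging over the blocks $\rb^{j'}$ and the $T$ structure, is the intended route, and only the constant bookkeeping remains delicate.
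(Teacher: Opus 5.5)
\textbf{There is a genuine gap.} Your argument does not establish the lemma. In Case~2 you only reach $P(\rb_e) \ge 1/(96n)$. That does not contradict the case hypothesis $P(\rb_e) < 1/(48n)$, and you leave the factor of $2$ open.

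The first fix you propose is not available. ``$B$ has entries at most $2$'' is not among the hypotheses, which only give $K \le 8$. It is also false in the first application: there the input is the $\{0,1,2,8\}$-valued game of Definition~\ref{def:resbi}, and the construction of Theorem~\ref{thm:res-bimatrix} gives $B$ columns containing two $8$s. So bounding the payoff of column $3j+2$ by $8\,P(\rb_e)$ is correct and cannot be improved this way. The factor of $2$ is actually lost when you round the mass $1 - 1/(48n)$ on the blocks $\rb^{j'}$ down to $1/2$. Your threshold formula $(1-\tau)/(32n) - \eps/8$ also carries a spurious factor of $2$: the chain you describe gives $(1-\tau)/(16n) - \eps/8$.

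\textbf{Repairing your route.} Drop the rounding and the argument closes. In Case~2, the at most $2k \le 2n$ rows of the blocks $\rb^{j'}$ carry more than $1 - 1/(48n) \ge 47/48$. So the heaviest of them has probability more than $47/(96n)$, and the column of $\cb_1^{j'}$ holding its $T$-entry $1$ pays more than $47/(96n)$. Column $3j+2$ is an $\eps$-best response, so $8\,P(\rb_e) > 47/(96n) - \eps > 31/(96n)$. Hence $P(\rb_e) > 31/(768n) > 1/(48n)$, a contradiction.

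\textbf{The paper's route.} The paper avoids the case split. Every row of $B'$ contains an entry at least $1$, and $B'$ has $2k + n \le 3n$ columns. So the column payoffs against $x^*$ sum to at least $1$, and some column pays at least $1/(3n)$. If $P(\rb_e) < 1/(48n)$, column $3j+2$ pays less than $8/(48n) = 1/(6n)$, giving a gap larger than $1/(6n) > \eps$.

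Your repaired version uses only the $T$-rows, so it does not need every row of $B$ to be nonzero. The paper's averaging is shorter and gives the stronger best-response bound $1/(3n)$ directly.
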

\begin{proof}
Suppose for the sake of contradiction that $P(\rb_e) < 1/(48n)$. Since each row
of $B'$ contains a payoff entry that is at least $1$, there must exist a column
for which the column player receives payoff at least $1/(3n)$. On the other
hand, in column $3j+2$, which is the defining column of $\cb_2^j$, all payoff
entries are zero outside of $\rb_e$, and the maximum payoff inside $\rb_e$ is
$K \le 8$. Thus, if $P(\rb_e) < 1/(48n)$, the payoff of column $3i+2$ is
strictly less than $8 \cdot 1/(48n) = 1/(6n)$. So the difference in payoff
between column $3i+2$ and a best-response column is strictly larger than
$1/(3n) - 1/(6n) = 1/(6n)$. Since $\eps < 1/(6n)$ by assumption, we can
conclude that the column player cannot play column $3i+2$ in an $\eps$-WSNE,
which contradicts the assumption that $P(\cb_2^j) > 0$.
\end{proof}

While the previous lemma dealt only with one case, the next lemma shows that
the re-normalization that produces $x$ also works in all other cases.

\begin{lemma}
\label{lem:rbelower}
If $4 \le K \le 8$ and $\eps < 1/(11n)$, then $P(\rb_e) \ge 1/(48n)$ holds in $x^*$.
\end{lemma}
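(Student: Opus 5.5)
We argue by contradiction: suppose $P(\rb_e) < 1/(48n)$. By Lemma~\ref{lem:rbelowera}, every second column block $\cb_2^j$ with $j<k$ then has zero probability. We next show that the column player also puts little weight on the first blocks, and that it puts no weight on the final columns $3k,\dots$.

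\textbf{Final columns.} Columns with index $3k+(j-k)$ have nonzero $B'$ entries only in $\rb_e$, with entries at most $K\le 8$. So their payoff to the column player is below $8/(48n)=1/(6n)$. As in the proof of Lemma~\ref{lem:rbelowera}, some column earns at least $1/(3n)$, because every row of $B'$ contains an entry $\ge 1$ and some row has probability at least $1/(3n)$ (there are at most $3n$ rows). Since $\eps<1/(6n)$, these final columns are outside the support. Hence all of the column player's probability lies in the blocks $\cb_1^j$, $j<k$.

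\textbf{First blocks.} Since $\sum_j P(\cb_1^j)=1$, some $j$ has $P(\cb_1^j)\ge 1/k \ge 1/n > 11\eps$. This uses $\eps<1/(11n)$, which is exactly why that hypothesis appears. Lemma~\ref{lem:scs1} then gives $P(\rb^j)>0$. Lemma~\ref{lem:cb2lower} then gives $P(\cb_2^j)>0$, because its hypotheses $4\le K\le 8$ and $\eps<1/(3n)$ both hold. This contradicts the earlier conclusion from Lemma~\ref{lem:rbelowera} that $P(\cb_2^j)=0$, so $P(\rb_e)\ge 1/(48n)$.

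The one step needing care is the ``some column earns at least $1/(3n)$'' bound, which must be applied to the actual dimensions of $A'$. Rows number $2k+n\le 3n$, and every row of $B'$ has a $1$ (rows of $\rb^j$ via $T$, rows of $\rb_e$ because $B$'s rows are nonzero, which the input definitions guarantee). Hence some row has mass at least $1/(3n)$, and the column that pays $1$ against it earns at least that much. I expect the remaining steps to be direct invocations of the earlier lemmas, with only a check that the $\eps$ thresholds are compatible: $1/(11n)$ is below both $1/(6n)$ and $1/(3n)$.
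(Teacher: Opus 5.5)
Your proposal is correct and follows the paper's proof essentially step for step. You assume $P(\rb_e) < 1/(48n)$, rule out the final columns by comparing their payoff (below $1/(6n)$) with the best-response payoff (at least $1/(3n)$), use Lemma~\ref{lem:rbelowera} to force $P(\cb_2^j)=0$ for all $j<k$, and then reach a contradiction through a block with $P(\cb_1^j)\ge 1/n > 11\eps$ via Lemmas~\ref{lem:scs1} and~\ref{lem:cb2lower}. Your extra justification that there are at most $3n$ rows and that every row of $B'$ contains a $1$ simply makes explicit a fact the paper uses without comment.
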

\begin{proof}
Suppose for the sake of contradiction that $P(\rb_e) < 1/(48n)$. Then columns $j \ge
3k$ give payoff at most $K /(48n) \le 1/(6n)$
to the column player. Since each row of $B'$ has a payoff
entry that is at least 1, the column player's best-response payoff is at least
$1/(3n)$, and since $\eps < 1/(6n)$ by assumption, we have that the column
player cannot play any column $j \ge 3k$ in an $\eps$-WSNE.

Next note that if there exists a column $j < k$ such that $P(\cb_2^j) > 0$,
then Lemma~\ref{lem:rbelowera} implies that $P(\rb_e) \ge 1/(48n)$, which provides a
contradiction.

So we can now proceed assuming that $P(\cb_2^j) = 0$ for all $j < k$. This
implies that all probability assigned by $y^*$ must be to columns that lie in
$P(\cb_1^j)$ for some $j < k$. Hence, there must exist a column $j$ such that
$P(\cb_1^j) \ge 1/k \ge 1/n$. Since $\eps < 1/(11n)$ by assumption we therefore
have $P(\cb_1^j) > 11 \eps$. Lemma~\ref{lem:scs1} then implies that $P(\rb^j)
> 0$ holds, so we meet the precondition of Lemma~\ref{lem:cb2lower}, which
implies that $P(\cb_2^j) > 0$. But this contradicts the fact that we are in
the case where $P(\cb_2^j) = 0$ for all $j$.
\end{proof}

Next we show that the re-normalization that is used to define $y$ is also well
defined. We begin by ignoring the step that excludes column blocks whenever their
probability is below $11 \eps$. The following lemma states that there is enough
probability on the columns that define $y$ before we exclude those columns.

\begin{lemma}
\label{lem:clower}
If $4 \le K \le 8$ and $\eps < 1/(11n)$ then
$\sum_{j < k} P(\cb_1^j) + \sum_{j \ge k} y^*_{3k + j} \ge 1/(24n)$ holds in~$y^*$.
\end{lemma}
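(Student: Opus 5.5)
Write $Q = \sum_{j<k} P(\cb_1^j) + \sum_{j\ge k} y^*_{3k+j}$. The columns of $A'$ split into three disjoint groups: the $\cb_1^j$ blocks, the $\cb_2^j$ columns, and the columns $3k+(j-k)$ for $j \ge k$. Hence $Q = 1 - \sum_{j<k} P(\cb_2^j)$, and it suffices to show that the total probability on the $\cb_2^j$ columns is at most $1 - 1/(24n)$. I will argue by contradiction: suppose $Q < 1/(24n)$, so that $\sum_{j<k} P(\cb_2^j) > 1 - 1/(24n)$.

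The plan is to exploit the row player's incentives.

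First, the row rows in $\rb^j$ are attractive in this regime. Each row in $\rb^j$ has a $1$ in column $\cb_2^j$. So row $2j+1$, say, earns at least $P(\cb_2^j)$.

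Second, the rows in $\rb_e$ are not. A row in $\rb_e$ has entries only in the $\cb_1$ blocks, with value at most $K/2 \le 4$, and in the columns $j \ge k$, with value at most $8$. Its payoff is therefore at most $8Q < 1/(3n)$.

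Third, I pick one good row block. By averaging there is some $j<k$ with $P(\cb_2^j) \ge (1-1/(24n))/k \ge (1-1/(24n))/n$. For this $j$, row $2j+1$ earns roughly $1/n$, which beats every row of $\rb_e$ by more than $\eps$, since $\eps < 1/(11n)$. So every row of $\rb_e$ is outside the support, i.e.\ $P(\rb_e) = 0$. This contradicts Lemma~\ref{lem:rbelower}, which guarantees $P(\rb_e) \ge 1/(48n)$ under the same hypotheses $4\le K\le 8$ and $\eps<1/(11n)$.

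If this contradiction turns out not to suffice, there is a more direct route. Columns in $\cb_2^j$ pay the column player only through $\rb_e$, via $B_j$. By Lemma~\ref{lem:rbelower}, some column must give the column player payoff at least $1/(3n)$, whereas each $\cb_2^j$ column pays at most $8P(\rb_e)$. This bound can be combined with the row-side argument above.

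The main obstacle is the constant bookkeeping. I need $8Q$ to be strictly less than roughly $1/n - \eps$, and with $Q < 1/(24n)$ we get $8Q < 1/(3n)$. The remaining gap is at least $(1-1/(24n))/n - 1/(3n) - \eps$, and I must check it exceeds $\eps$; this holds comfortably since $\eps < 1/(11n)$. I will also verify that the averaging over $k \le n$ blocks is valid, and treat the degenerate case $k=0$ separately: there are then no $\cb_2$ columns, so $Q = 1$ and the bound is trivial.
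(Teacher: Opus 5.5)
Your argument is correct and is essentially the paper's proof: you assume the sum is below $1/(24n)$, bound every row of $\rb_e$ by a constant multiple of it, exhibit a row that beats them by more than $\eps$, and contradict Lemma~\ref{lem:rbelower}. The only difference is the witness row. The paper uses the generic best-response bound $1/(3n)$ (every column of $A'$ has an entry $\ge 1$), which forces it to use the sharper entry bound $K/2 \le 4$ on $\rb_e$. Your pigeonhole over the $\cb_2^j$ columns gives nearly $1/n$ and so tolerates the crude bound $8$; your fallback paragraph is not needed.
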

\begin{proof}
Suppose for the sake of contradiction that
$\sum_{j < k} P(\cb_1^j) + \sum_{j \ge k} y^*_{3k + j} < 1/(24n)$.
Note that the rows in $\rb_e$ have non-zero payoffs only in columns that are in
$\cb_1^j$ for some $j < k$ or in columns $j \ge 3k$, and moreover, all payoff
entries in these rows are bounded by $K/2$.
This implies that each row $i \in \rb_e$ has payoff at most $K/2 \cdot (1/24n)
\le 1/(6n)$. On the other hand, note that each column of $A'$ contains at least
one payoff entry that is at least 1, so the row player's best-response payoff
must be at least $1/(3n)$. Since $\eps < 1/(6n)$ we can conclude that the row
player cannot play any row in $\rb_e$ in an $\eps$-WSNE, but this directly
contradicts the conclusion of Lemma~\ref{lem:rbelower}.
\end{proof}

Now we take into account the fact that some column blocks are excluded when we
define $y$, and the following lemma states that the amount of probability lost
through excluding those columns is not large enough to cause the re-normalization
step used to define $y$ to fail.

\begin{lemma}
\label{lem:clowerfinal}
Suppose $4 \le K \le 8$ and $\eps < 1/(240 \cdot n^2)$ then
if $C = \{j < k \; : \; P(\cb_1^j) \ge 11 \eps \}$ we have
$\sum_{j \in C} P(\cb_1^j) + \sum_{j \ge k} y^*_{3k + j} \ge 11/(240 n)$ holds in~$y^*$.
\end{lemma}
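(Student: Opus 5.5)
The plan is to start from Lemma~\ref{lem:clower} and subtract the probability lost by discarding the blocks $j<k$ with $P(\cb_1^j) < 11\eps$. Lemma~\ref{lem:clower} gives, under $4 \le K \le 8$ and $\eps < 1/(11n)$ (which follows from $\eps < 1/(240 n^2)$ since $n \ge 1$),
\begin{equation*}
\sum_{j < k} P(\cb_1^j) + \sum_{j \ge k} y^*_{3k+j} \ge \frac{1}{24n}.
\end{equation*}
Splitting the first sum into $j \in C$ and $j \notin C$, the left-hand side of the target inequality equals this quantity minus $\sum_{j<k,\, j \notin C} P(\cb_1^j)$.

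Each discarded block contributes strictly less than $11\eps$, and there are at most $k \le n$ of them. The lost mass is therefore less than $11 n \eps$. Using $\eps < 1/(240 n^2)$, this is less than $11/(240 n)$. Hence
\begin{equation*}
\sum_{j \in C} P(\cb_1^j) + \sum_{j \ge k} y^*_{3k+j} > \frac{1}{24n} - \frac{11}{240n} = \frac{10 - 11}{240n}.
\end{equation*}
This is negative, so the naive count fails. This is the main obstacle: the constants must be reconciled.

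To repair it, I would first reread the bound in Lemma~\ref{lem:clower}. That lemma only needed each row of $\rb_e$ to have payoff below $1/(3n) - \eps$, and with $K/2 \le 4$ this actually gives a lower bound close to $1/(12n)$ rather than $1/(24n)$. Concretely, I would re-derive it as follows. If the relevant mass were $< \mu$, every row of $\rb_e$ would earn at most $4\mu$. The best response earns at least $1/(3n)$, so playing $\rb_e$ (guaranteed by Lemma~\ref{lem:rbelower}) forces $4\mu \ge 1/(3n) - \eps$. This gives a mass of at least $1/(12n) - \eps/4$. Subtracting the loss $11 n\eps < 11/(240n)$ then leaves at least $20/(240n) - 11/(240n) - \eps/4 \ge 11/(240n)$ up to the small $\eps/4$ term, which is negligible since $\eps \ll 1/n$.

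If the exact constant $11/(240n)$ still does not close, I would adjust the argument to use the per-row count of $\rb_e$ more carefully: each row's payoff in column block $\cb_1^j$ involves at most one column weighted by $K/2$. In any case, the structure of the proof is fixed: take the lower bound from Lemma~\ref{lem:clower} (strengthened as needed) and subtract at most $n \cdot 11\eps$ for the excluded blocks, using $\eps < 1/(240n^2)$ to make that loss $O(1/n)$ with a small enough constant.
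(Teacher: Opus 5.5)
You are right that the direct subtraction fails, since $1/(24n) - 11/(240n) < 0$. This is exactly the step the paper's own proof takes, so its arithmetic does not close either.

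Your repair, however, does not close it as written. Let $M = \sum_{j<k} P(\cb_1^j) + \sum_{j \ge k} y^*_{3k+j}$. Your sharpened form of Lemma~\ref{lem:clower} gives $M \ge 1/(12n) - \eps/4 = 20/(240n) - \eps/4$. Discarding the blocks outside $C$ can cost just under $11/(240n)$, which leaves only about $9/(240n)$. The inequality you assert, $20/(240n) - 11/(240n) - \eps/4 \ge 11/(240n)$, amounts to $9 \ge 11$.

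The fallback you mention (each row of $\rb_e$ carries at most one $K/2$ per block) does not help on its own. A row of $\rb_e$ coming from a row of $A$ with two $K$s earns $(K/2)(y^*_{3j+1}+y^*_{3j'+1})$. Nothing you have used prevents all of $M$ from sitting on those two columns, so $(K/2)M$ is the best bound this route yields. What you do establish, roughly $1/(30n)$, would suffice for Lemma~\ref{lem:scscol}, which only needs $P(C) = \Omega(1/n)$. It is not the stated lemma, though.

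The missing ingredient is one more constraint on $y^*$. Either of the following supplies it.

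\emph{Use Lemma~\ref{lem:colprob}.} By Lemma~\ref{lem:scs1}, any block with $P(\cb_1^j) > 0$ has $P(\rb^j) > 0$, so $y^*_{3j+1} \le \tfrac23 P(\cb_1^j) + \eps$. Hence a row of $\rb_e$ with two $K/2$ entries earns at most $\tfrac83 M + 8\eps$, and all other row types earn at most $2M$. Comparing with the best-response payoff $1/(3n)$ gives $M \ge 1/(8n) - 27\eps/8$. After the loss this leaves at least $19/(240n) - 27\eps/8 \ge 11/(240n)$.

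\emph{Sharpen the best-response benchmark.} This is simpler. The remaining mass $1 - M$ lies entirely on the columns $3j+2$, each paying $1$ to both rows of $\rb^j$. So for $k \ge 1$ some row earns at least $(1-M)/n$; for $k = 0$ one has $M = 1$ directly. A supported row of $\rb_e$ exists by Lemma~\ref{lem:rbelower}, and it forces $4M \ge (1-M)/n - \eps$, i.e.\ $M \ge (1-n\eps)/(4n+1)$. That is roughly $1/(5n)$, which comfortably exceeds $11/(240n)$ plus the loss.
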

\begin{proof}
By Lemma~\ref{lem:clower}, we have that
$\sum_{j < k} P(\cb_1^j) + \sum_{j \ge k} y^*_{3k + j} \ge 1/(24n)$.
At most $n \cdot 11 \eps$ probability can be allocated to column blocks
$\cb_1^j$ that satisfy
$P(\cb_1^j) < 11 \eps$.
Since $\eps < 1/(240 \cdot n^2)$, we therefore have that at most $11/(240 n)$
probability is allocated to those columns. Hence, we have
$\sum_{j \in C} P(\cb_1^j) + \sum_{j \ge k} y^*_{3k + j} \ge 11/(240 n)$.
\end{proof}

So we have now shown that both $x$ and $y$ are well-defined. We can now turn
our attention towards showing the remaining claim of Theorem~\ref{thm:scs1}
holds. The following lemma shows that the row player's strategy $x$ satisfies
the requirements of a $O(n \cdot \eps)$-WSNE in $(A, B)$.

\begin{lemma}
\label{lem:scsrow}
If $4 \le K \le 8$ and $\eps < 1/(11n)$, then
$\left|x^\intercal \cdot B_j - x^\intercal \cdot B_{j'} \right| \le 48 \cdot n \cdot \eps.$
for all columns $j,j'$ for which $y_j > 0$ and $y_{j'} > 0$.
\end{lemma}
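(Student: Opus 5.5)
The plan is to reduce the statement to the $\eps$-best-response conditions for the column player in $(A',B')$, and then rescale by the re-normalization factor used to define $x$. Write $Z = P(\rb_e) = \sum_i x^*_{2k+i}$, so that $x = x'/Z$. By Lemma~\ref{lem:rbelower} we have $Z \ge 1/(48n)$. The key observation is that each column $j$ of $B$ with $y_j > 0$ has a \emph{representative} column $c(j)$ in $B'$, and this column is supported by $y^*$. Moreover, the entries of $c(j)$ are zero outside $\rb_e$, and inside $\rb_e$ they coincide with $B_j$. Consequently
\[
(x^{*\intercal} B')_{c(j)} = \sum_i x^*_{2k+i} (B_j)_i = Z \cdot x^\intercal B_j .
\]

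The representatives are chosen as follows.
\begin{itemize}
\item For $j \ge k$, take $c(j) = 3k + (j-k)$ (the column index given in the construction). Then $y_j > 0$ means $y^*_{c(j)} > 0$, so $c(j)$ is in the support of $y^*$, and the column of $B'$ there is exactly $B_j$ placed in $\rb_e$.
\item For $j < k$, we have $y_j > 0$ only if $P(\cb_1^j) \ge 11\eps$ and $y^*_{3j} > 0$. In particular $P(\cb_1^j) > 0$, so Lemma~\ref{lem:scs1} gives $P(\rb^j) > 0$. Lemma~\ref{lem:cb2lower} then gives $P(\cb_2^j) > 0$, i.e.\ $y^*_{3j+2} > 0$. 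Take $c(j) = 3j+2$. By the formal definition in Appendix~\ref{app:matrices}, the block of $B'$ on $\cb_2^j \times \rb_e$ equals $B_j$, and the other entries of that column are zero.
\end{itemize}
The hypotheses needed for these lemmas are $4 \le K \le 8$ and $\eps < 1/(3n)$, both implied by the assumptions of the statement.

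Now take two columns $j, j'$ with $y_j, y_{j'} > 0$. Both $c(j)$ and $c(j')$ lie in the support of $y^*$, and $(x^*,y^*)$ is an $\eps$-WSNE, so each of them is an $\eps$-best response to $x^*$. Hence their payoffs against $x^*$ differ by at most $\eps$:
\[
Z \,\bigl| x^\intercal B_j - x^\intercal B_{j'} \bigr| \le \eps .
\]
Dividing by $Z \ge 1/(48n)$ yields the bound $48 n \eps$ claimed in the statement.

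The main subtlety is the case $j<k$. There one must make sure that the chosen representative is $\cb_2^j$ and not $\cb_1^j$. The columns of $\cb_1^j$ carry $B'$-payoffs only from $T$ in $\rb^j$, so they do not reflect $B_j$ at all. This is exactly why the threshold $11\eps$ is built into the definition of $y$: it lets us invoke Lemma~\ref{lem:cb2lower}. Everything else in the argument is bookkeeping with the block structure of $B'$.
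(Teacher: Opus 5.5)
Your proof is correct and follows the paper's argument essentially verbatim. For each supported column $j$ of $B$ you pick a supported representative column of $B'$ whose payoff against $x^*$ is $P(\rb_e)\cdot x^\intercal B_j$: column $3j+2$ via Lemmas~\ref{lem:scs1} and~\ref{lem:cb2lower} when $j<k$, and the copied column when $j\ge k$. You then divide the resulting $\eps$ gap by $P(\rb_e)\ge 1/(48n)$ from Lemma~\ref{lem:rbelower}, exactly as the paper does. Your bookkeeping is slightly cleaner than the written proof in two places: you get $P(\cb_1^j)>0$ directly from $y^*_{3j}>0$, and you index the copied columns as $3k+(j-k)$, which matches the formal definition in Appendix~\ref{app:matrices}.
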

\begin{proof}
Since $(x^*, y^*)$ is an $\eps$-WSNE of $(A',
B')$, for each pair of columns $j, j'$ with $y^*_j > 0$ and $y^*_{j'} > 0$ we
have that $\left|(x^*)^\intercal \cdot B'_j - (x^*)^\intercal \cdot B'_{j'} \right| \le \eps$.

For each column $j \le k$, observe that we have $y_j > 0$ only if $P(\cb_1^j) >
11 \eps$
by definition, hence we can apply Lemmas~\ref{lem:scs1} and~\ref{lem:cb2lower}
to conclude that $P(\cb_2^j) > 0$. Observe that the payoff of column $3j + 2$,
which is the only column in $\cb_2^j$ is given by $P(\rb_e) \cdot x^\intercal \cdot
B_j$ by construction. Likewise, for any column $j \ge k$, the payoff of column $3k + j$ is
given by $P(\rb_e) \cdot x^\intercal \cdot B_j$ by construction.

So for each pair of columns $j, j'$ for which $y_j > 0$ and $y_{j'} > 0$, we have
$$\left|P(\rb_e) \cdot x^\intercal \cdot B_j - P(\rb_e) \cdot x^\intercal \cdot B_{j'} \right|
\le \eps,$$
which implies
$$\left|x^\intercal \cdot B_j - x^\intercal \cdot B_{j'} \right| \le \eps/P(\rb_e).$$
Since $P(\rb_e) \ge 1/(48n)$ by Lemma~\ref{lem:rbelower}, we get that
$$\left|x^\intercal \cdot B_j - x^\intercal \cdot B_{j'} \right| \le 48 \cdot n \cdot \eps,$$
as required.
\end{proof}

The next lemma states that the column player's strategy $y$ satisfies the
requirements of an $O(n^2 \cdot \eps)$-WSNE in $(A, B)$.

\begin{lemma}
\label{lem:scscol}
If $4 \le K \le 8$ and $\eps < 1/(240 \cdot n^2)$, then
$\left|(A \cdot y)_i - (A \cdot y)_{i'} \right| \le O(n^2 \cdot \eps).$
for all rows $i,i'$ for which $x_i > 0$ and $x_{i'} > 0$.
\end{lemma}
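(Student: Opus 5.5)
The plan is to compare, for each row $i$ of $A$ with $x_i>0$, the payoff $(A\cdot y)_i$ to the payoff of the corresponding row $2k+i$ of $\rb_e$ in $A'$ against $y^*$. If $x_i>0$ then $x^*_{2k+i}>0$, so all such rows are $\eps$-best-responses in $(A',B')$. In particular, any two of them have $A'$-payoffs within $\eps$ of each other. Write $Z=\sum_{j\in C}P(\cb_1^j)/3+\sum_{j\ge k}y^*_{3k+j}$ for the normalising mass, i.e.\ the sum of $y'$. Since $y^*_{3j}\ge P(\cb_1^j)/3-\eps$ for $j \in C$ and Lemma~\ref{lem:clowerfinal} gives $\sum_{j\in C}P(\cb_1^j)+\sum_{j\ge k}y^*_{3k+j}\ge 11/(240n)$, we get $Z=\Omega(1/n)$ once $\eps<1/(240n^2)$, because the error is $O(n\eps)$.

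The key identity is the following. For $j\in C$, Lemmas~\ref{lem:scs1} and~\ref{lem:colprob} give $y^*_{3j}=P(\cb_1^j)/3\pm\eps$ and $y^*_{3j+1}=2P(\cb_1^j)/3\pm\eps$. Now inspect $\encode$. A $1$ in $A_j$ sits in the first column with weight $y^*_{3j}$. A $2$ or a $K$ becomes $1$ or $K/2$ in the second column, with weight $y^*_{3j+1}\approx 2y^*_{3j}$. Hence the contribution of column block $\cb_1^j$ to row $2k+i$ equals $A_{ij}\cdot y^*_{3j}$ up to an additive $O(\eps)$, using $K/2\le 4$ and $|y^*_{3j+1}-2y^*_{3j}|\le 3\eps$. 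For $j\ge k$ the contribution is exactly $A_{ij}y^*_{3k+j}$.

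The excluded blocks $j<k$ with $j\notin C$ satisfy $P(\cb_1^j)<11\eps$. Each such block therefore contributes at most $K/2\cdot 11\eps=O(\eps)$ to row $2k+i$. Row $i$ of $A$ has at most three non-zero entries, so only $O(1)$ blocks contribute at all, and the total discrepancy is $O(\eps)$. Summing gives $(A'y^*)_{2k+i}=Z\cdot(Ay)_i\pm O(\eps)$.

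Combining, $|Z(Ay)_i-Z(Ay)_{i'}|\le \eps+O(\eps)$ for all supported $i,i'$. Dividing by $Z=\Omega(1/n)$ yields $O(n\eps)$, which is within the claimed $O(n^2\eps)$. The main obstacle is the bookkeeping of the per-block error: $y_j$ is defined through $y^*_{3j}$ only, whereas the payoffs use both $y^*_{3j}$ and $y^*_{3j+1}$. So I would first verify, case by case over the $\encode$ table, that the ratio $2\pm O(\eps)$ between these two masses turns each encoded entry into $A_{ij}y^*_{3j}$ with only constant-factor error. If the sparsity bound on rows fails to cover some case, the fallback is to sum the error over all $n$ columns, giving $O(n\eps)$ before normalisation and $O(n^2\eps)$ after, which still matches the statement.
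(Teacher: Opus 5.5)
Your argument is correct, and it gives a sharper bound than the paper's by summing the error differently.

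\textbf{Shared skeleton.} Both proofs compare row $i$ of $A$ with row $2k+i$ of $\rb_e$ in $A'$. Both use Lemma~\ref{lem:colprob} to turn each retained $\encode$ block into $A_{ij}y^*_{3j}$ up to $O(\eps)$. Both then divide by a retained mass that is $\Omega(1/n)$ by Lemma~\ref{lem:clowerfinal}.

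\textbf{The difference.} The paper ignores row structure. It charges an $O(\eps)$ error to each of up to $n$ encoded blocks and up to $88\eps$ to each of up to $n$ zeroed blocks. That gives an $O(n\eps)$ discrepancy before normalisation and $O(n^2\eps)$ after. You observe that the $\encode$ block of column $j$ is non-zero in row $2k+i$ only when $A_{ij}\neq 0$. You also note that rows of games satisfying Definition~\ref{def:resbi} or Definition~\ref{def:stage1} have at most three non-zero entries. So the discrepancy is only $O(\eps)$, and you obtain $O(n\eps)$, which is stronger than the statement. The paper's version is more generic, since it only uses the column structure that $\encode$ needs; yours exploits the row structure to save a factor of $n$.

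\textbf{Your bookkeeping is also more careful.} Your bound $|y^*_{3j+1}-2y^*_{3j}|\le 3\eps$ correctly gives up to $(K/2)\cdot 3\eps=12\eps$ per encoded $K$. The paper's stated $\pm 2\eps$ undercounts this, though harmlessly. You also account for a gap the paper glosses over: Lemma~\ref{lem:clowerfinal} bounds $\sum_{j\in C}P(\cb_1^j)+\sum_{j\ge k}y^*_{3k+j}$, so the mass of $y'$ is only guaranteed to be about a third of this, minus $n\eps$.

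\textbf{One slip.} For your $O(\eps)$ identity, $Z$ must be the actual mass of $y'$, namely $\sum_{j\in C}y^*_{3j}+\sum_{j\ge k}y^*_{3k+j}$. It should not be the expression with $P(\cb_1^j)/3$ that you wrote. With your literal formula, the identity picks up an extra $|Z-\sum_j y'_j|\cdot(Ay)_i=O(n\eps)$ term. That still yields the stated $O(n^2\eps)$, but not your sharper $O(n\eps)$.
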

\begin{proof}
Let $C = \{ 3j \; : \; j < k \text{ and } P(\cb_1^j) \ge 11 \eps \} \cup \{3k +
j \; : \; j \ge k \text{ and } y^*_{3k + j} > 0\}$ be the set of columns that
define the non-zero entries of $y$.
We first show that for each row $i$ in $\rb_e$, we have
$(A' \cdot y^*)_i = P(C) \cdot (A \cdot y)_i \pm 42 \cdot n^2 \cdot \eps$.

For each column $j \le k$, Lemmas~\ref{lem:scs1} and~\ref{lem:colprob} imply
that $y^*_{3j} = 1/3 \cdot P(\cb_1^j) \pm \eps$ and $y^*_{3j+1} = 2/3 \cdot
P(\cb_1^j) \pm \eps$. Therefore, it can be verified by inspecting the
construction of the $\encode$ function that for each row $i \in \rb_e$, the
contribution of columns $3j$ and $3j+1$ to the payoff of the row is $y^*_{3j}
\cdot A_{i,j} \pm 2 \eps$. Also, for every column $j \ge k$, the contribution
of column $j$ to row $i$'s payoff is exactly $y^*_{3k + j} \cdot A_{i,j}$. All
other columns of row $i$ give zero payoff to the row player. So, if we consider
a strategy $\hat{y}$ that is constructed in the same way as $y$, but in which
no column is zeroed due to the $P(\cb_1^j) \ge 5\eps$ constraint, then we have
that $(A' \cdot y^*)_i = P(C) \cdot (A \cdot \hat{y})_i \pm 2 \cdot n \cdot
\eps$.

Next we account for the columns that are zeroed when we move from $\hat{y}$ to
$y$ itself. Here, any such column $j$ has probability at most $11 \cdot \eps$,
Since there are at most $n$ such columns, and since
all payoff entires are bounded by $K \le 8$, the maximum amount of payoff that we can
gain or lose by zeroing these columns is at most $88 \cdot n \cdot \eps$.
From this, we can conclude that
$(A' \cdot y^*)_i = P(C) \cdot (A \cdot y)_i \pm 90 \cdot n \cdot
\eps$.

Since $(x^*, y^*)$ is an $\eps$-WSNE of $(A', B')$, we have that
$\left| (A' \cdot y^*)_i - (A' \cdot y^*)_{i'} \right| \le \eps$ for any pair of
rows $i, i'$ for which $x^*_{i} > 0$ and $x^*_{i'} > 0$.
From this we obtain
$$ \left| P(C) \cdot (A \cdot y)_i - P(C) \cdot (A \cdot y)_{i'}
\right| \le 181 \cdot n \eps,$$
which implies
$$ \left| (A \cdot y)_i - (A \cdot y)_{i'} \right| \le (1/P(C)) \cdot 181 \cdot n \cdot \eps,$$
and since Lemma~\ref{lem:clowerfinal} implies that $P(C) \ge 11/(240n)$ we can
conclude that
$$ \left| (A \cdot y)_i - (A \cdot y)_{i'} \right| \le O(n^2 \cdot \eps),$$
as required.
\end{proof}

The combination of Lemmas~\ref{lem:scsrow} and~\ref{lem:scscol} then implies that
$(x, y)$ is a $O(n^2 \cdot \eps)$-WSNE in $(A, B)$, which concludes the proof
of Theorem~\ref{thm:scs1}.

\section{Proof of Theorem~\ref{thm:scs2}}
\label{app:scs2}

The proof of Theorem~\ref{thm:scs2} is essentially the same as the proof of
Theorem~\ref{thm:scs1}, with only a few minor modifications needed to account
for the new definitions of $S$, $T$, and $\encode$. We shall spend our time
here pointing out those differences, while for the main part referring back to
the proof of Theorem~\ref{thm:scs1}.

\begin{lemma}
\label{lem:scstwo1}
If $\eps < 1/(4n)$, then for each $j < k$, if $P(\cb_1^j) > 0$ holds in $(x^*,
y^*)$, then $P(\rb^j) > 0$ also holds.
\end{lemma}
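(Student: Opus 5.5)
The plan is to mirror the proof of Lemma~\ref{lem:scs1}, adjusting the constants to the new block sizes. I would argue by contradiction: fix $j<k$ with $P(\cb_1^j) > 0$ and suppose $P(\rb^j) = 0$.

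The first step is to show that every column in $\cb_1^j$ then gives the column player payoff zero. By construction (Appendix~\ref{app:matrices}, with $\cb_1^j$ now three columns and $\rb^j$ three rows), the only non-zero entries of $B'$ in the columns of $\cb_1^j$ lie in the block indexed by $\rb^j \times \cb_1^j$, where $B'$ equals the permutation matrix $T$. The block of $B'$ in $\rb_e \times \cb_1^j$ is zero, since the original column $B_j$ is placed in $\cb_2^j$ rather than in $\cb_1^j$. All other row blocks $\rb^{j'}$ with $j' \neq j$ are zero on $\cb_1^j$. Hence $P(\rb^j)=0$ forces $(x^*)^\intercal B'_c = 0$ for every $c \in \cb_1^j$.

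The second step is to lower-bound the column player's best-response payoff. Every row of $B'$ contains an entry that is at least $1$:
\begin{itemize}
\item rows of $\rb^{j'}$ contain a $1$ from $T$;
\item rows of $\rb_e$ contain an entry of some original column of $B$, placed either in $\cb_2^{j'}$ or in a column $\ge$ the simulated blocks, assuming (as the paper implicitly does) that each row of $B$ has a non-zero entry, which holds for the games arising in our chain.
\end{itemize}
The new matrix has at most $4n$ rows: at most $3k \le 3n$ simulation rows plus $n$ encoding rows. So some row $r$ has $x^*_r \ge 1/(4n)$. The column $c$ holding the $\ge 1$ entry in row $r$ then gives the column player payoff at least $1/(4n)$.

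The third step is to conclude. Since $\eps < 1/(4n)$, each column in $\cb_1^j$ (payoff $0$) falls short of the best response by more than $\eps$. So none of these columns can lie in the support of $y^*$ in an $\eps$-WSNE, contradicting $P(\cb_1^j)>0$.

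The only delicate point is the counting in the second step. One must verify that the number of rows of $A'$ is at most $4n$, which is what gives the threshold $1/(4n)$ in place of the $1/(3n)$ of Lemma~\ref{lem:scs1}. One must also check that every row of $B'$ really carries a $1$ under the type-two definitions. Everything else is identical to the type-one case.
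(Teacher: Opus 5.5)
Your proposal is correct and follows the paper's own route. The paper proves this lemma by pointing back to the proof of Lemma~\ref{lem:scs1}: the columns of $\cb_1^j$ pay the column player only through $T$ in $\rb^j$, while some column pays at least $1/(4n)$, and the threshold becomes $1/(4n)$ because $A'$ now has $3k+n \le 4n$ rows. Your explicit row count and your use of ``every row of $B'$ has an entry $\ge 1$'' (the paper's proof of Lemma~\ref{lem:scs1} says ``column'', which is the wrong direction) just make that argument precise.
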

\begin{proof}
The proof of this lemma is the same as the proof of Lemma~\ref{lem:scs1}, with
the constraint $\eps < 1/(3n)$ having to be adjusted to $\eps < 1/(4n)$ to
account for the larger size of $S$ and $T$.
\end{proof}

The following two lemmas depend on the specific payoff entries in the matrices
$S$ and $T$, as well as the payoffs generated by the $\encode$ function, so we
do need to reprove these.

\begin{lemma}
\label{lem:colprob2}
If $\eps < 1/(4n)$, and if $P(\rb_j) > 0$, then we have $y^*_{4j}, y^*_{4j+1},
y^*_{4j+2} \in [1/3 \cdot P(\cb_1^j) - \eps, 1/3 \cdot P(\cb_1^j) + 2\eps]$.
\end{lemma}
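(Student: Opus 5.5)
The plan is to adapt the argument of Lemma~\ref{lem:colprob} to the $3 \times 3$ gadget. First I prove the lower bound $y^*_{4j+c} \ge 1/3 \cdot P(\cb_1^j) - \eps$ for each $c \in \{0,1,2\}$ by contradiction, using a cyclic propagation argument. Then I derive the upper bound $1/3 \cdot P(\cb_1^j) + 2\eps$ directly from the lower bounds. Write $P = P(\cb_1^j)$, and index the three rows of $\rb^j$ and the three columns of $\cb_1^j$ by $c \in \{0,1,2\}$.

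The first step records the payoffs in the gadget. Since $S$ is the identity and the single column of $\cb_2^j$ is all $1$s on $\rb^j$, row $c$ of $\rb^j$ gives the row player $y^*_{4j+c} + y^*_{4j+3}$. By the definition of $T$, column $\bmod(c+1,3)$ of $\cb_1^j$ gives the column player a non-zero payoff (namely $1$) only in row $c$ of $\rb^j$. This uses that $B'$ is zero on $\rb_e \times \cb_1^j$ and on all other row blocks. As in Lemmas~\ref{lem:scs1} and~\ref{lem:scstwo1}, every row of $B'$ has an entry at least $1$, so the column player's best-response payoff is at least $1/(4n)$. Hence, since $\eps < 1/(4n)$, if row $c$ receives zero probability then column $\bmod(c+1,3)$ is not an $\eps$-best response and gets probability $0$.

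The second step is the lower bound. Suppose $y^*_{4j+c} < P/3 - \eps$ for some $c$; this forces $P > 3\eps > 0$. The other two columns carry more than $2P/3 + \eps$ in total, so one of them carries more than $P/3 + \eps/2$. The corresponding row then beats row $c$ by more than $3\eps/2 > \eps$, so $x^*$ puts zero probability on row $c$. By the first step, $y^*_{4j+\bmod(c+1,3)} = 0 < P/3 - \eps$. I then repeat the argument with $\bmod(c+1,3)$ in place of $c$, and once more with $\bmod(c+2,3)$. This forces all three columns of $\cb_1^j$ to have probability zero, so $P = 0$, contradicting $P > 3\eps$.

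The third step is the upper bound. Since the three probabilities sum to $P$ and each is at least $P/3 - \eps$, each is at most $P - 2(P/3 - \eps) = P/3 + 2\eps$. This explains the asymmetric interval in the statement.

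The main obstacle is the propagation step. After the first deviation from the lower bound, I must make sure that each newly zeroed column still satisfies the strict hypothesis $< P/3 - \eps$, which holds because $P > 3\eps$. I must also make sure that column $\bmod(c+1,3)$ really has no other non-zero $B'$ entries, which I would check against the formal definition in Appendix~\ref{app:matrices}. The hypothesis $P(\rb^j) > 0$ is consistent with this contradiction argument, and is only needed to guarantee that the row player's indifference conditions within $\rb^j$ are meaningful.
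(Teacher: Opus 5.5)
Your proof is correct and follows essentially the paper's argument. You assume some column of $\cb_1^j$ gets less than $P(\cb_1^j)/3 - \eps$, show the corresponding row of $\rb^j$ is beaten by more than $3\eps/2$, and use the cyclic structure of $T$ to zero the next column and propagate around the cycle to a contradiction; the paper stops one step earlier, once column $4j$ would have to carry all of $P(\cb_1^j)$. You then get the $+2\eps$ upper bound from the three probabilities summing to $P(\cb_1^j)$, and your argument never actually uses the hypothesis $P(\rb^j) > 0$, so your closing remark about it is unnecessary rather than wrong.
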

\begin{proof}
Suppose for the sake of contradiction that $y^*_{4j} < 1/3 \cdot P(\cb^j_1) - \eps$.
Then the payoff to the row player of row $3j$ is at most
$1/3 \cdot P(\cb^j_1) - \eps + y^*_{4j + 3}$.
On the other hand, there must exist an index $j' \in \{1, 2\}$ such that
\begin{align*}
y^*_{4j + j'} &\ge 1/2 \cdot (2/3 \cdot P(\cb^j_1) + \eps) \\
&= 1/3 \cdot P(\cb^j_1) + \eps/2.
\end{align*}
Therefore the payoff of row $3j + j'$ is at least
$1/3 \cdot P(\cb^j_1) + \eps/2 + y^*_{4j + 3}$. So the difference in payoff
between row $3j + j'$ and row $3j$ is at least
\begin{align*}
1/3 \cdot P(\cb^j_1) + \eps/2 + y^*_{4j + 3} -
1/3 \cdot P(\cb^j_1) + \eps - y^*_{4j + 3} &= 3\eps/2,
\end{align*}
meaning that the row player cannot play row $3j$ in an $\eps$-WSNE. But this
means that the column player must receive payoff $0$ for column $4j + 1$. Since
every row of $B'$ contains at least one payoff that is greater than or equal to
$1$, there must exist a column that has payoff at least $1/(4n)$. Since $\eps <
1/(4n)$, we can conclude that the column player places zero probability on
column $4j + 1$.

This now means that $y^*_{4j + 2} \ge 2/3 \cdot P(\cb^j_1) - \eps$, and using a
similar argument to the one above, we can conclude that the row player cannot
play row $3j + 1$, so the row player must only play row $3j + 2$. But now,
again using a similar argument to the one above regarding the column player, we
can conclude that the column player cannot play column $4j + 2$, meaning that
the column player places probability  $P(\cb^j_1)$ on column $4j$,
contradicting our original assumption.

We can prove that $y^*_{4j+1} \ge 1/3 \cdot P(\cb^j_1) - \eps$ and $y^*_{4j+2}
\ge 1/3 \cdot P(\cb^j_1) - \eps$ both hold using identical techniques.
Then the fact that each of the columns in $\cb^j_1$ are assigned at most
$1/3 \cdot P(\cb_1^j) + 2\eps$ probability can be shown using the fact that $\sum_{j \in
\cb_1^j} c_j = P(\cb_1^j)$.
\end{proof}

\begin{lemma}
\label{lem:cb2lower2}
If $P(\rb^j) > 0$ and $P(\cb^j_1) \ge 18 \eps$ hold in $(x^*, y^*)$ for
some column $j < k$, and if $\eps < 1/(4n)$, then $P(\cb^j_2) > 0$ also holds.
\end{lemma}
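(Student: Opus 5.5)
We argue by contradiction, mirroring the proof of Lemma~\ref{lem:cb2lower}. Suppose $P(\cb^j_2) = 0$. Since $P(\rb^j) > 0$ and $\eps < 1/(4n)$, Lemma~\ref{lem:colprob2} applies. It gives that each of the three columns $4j, 4j+1, 4j+2$ of $\cb^j_1$ carries probability in $[1/3 \cdot P(\cb_1^j) - \eps,\ 1/3 \cdot P(\cb_1^j) + 2\eps]$.

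\textbf{Upper bound on the row block.} Each row of $\rb^j$ (rows $3j$, $3j+1$, $3j+2$) has payoff $1$ in exactly one column of $\cb^j_1$ (because $S$ is the identity) and payoff $1$ in $\cb^j_2$. Under our assumption $P(\cb^j_2) = 0$, so every row of $\rb^j$ has payoff at most $1/3 \cdot P(\cb_1^j) + 2\eps$.

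\textbf{Lower bound on a row of $\rb_e$.} By construction of the type-two $\encode$, every output whose column contains a $2$ has a row of $\rb_e$ with two $1$s, located in columns $4j+1$ and $4j+2$. This row gets payoff at least $2 \cdot (1/3 \cdot P(\cb_1^j) - \eps) = 2/3 \cdot P(\cb_1^j) - 2\eps$. The only columns handled by this reduction ($j < k$) are those containing a $2$, or containing four $1$s.

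\begin{itemize}
\item \emph{A $2$ is present.} The advantage of this row over any row of $\rb^j$ is at least $1/3 \cdot P(\cb_1^j) - 4\eps$. Since $P(\cb_1^j) \ge 18\eps$, this advantage is at least $6\eps - 4\eps = 2\eps > \eps$.
\item \emph{Four $1$s, no $2$.} The $\encode$ has no row with two $1$s, so this case needs a different handle. I expect this case to be the main obstacle. By the ordering assumption on columns, columns with at most three $1$s come at or after $k$, so a column with exactly four $1$s does fall in $j<k$. For this case I would first try to use that the single-$1$ rows in $\rb_e$ give payoff at least $1/3 \cdot P(\cb_1^j) - \eps$. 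However, this is not enough to beat $\rb^j$, since the two bounds differ by the wrong sign. So I would instead check whether the intended $\encode$ (rows $1,1$ on column $4j$, then column $4j+1$, then column $4j+2$) lets one combine the indifference among the three rows of $\rb^j$ with best responses elsewhere. If this fails, I would note that the threshold $18\eps$ is tailored to the $2$-case and restrict the lemma to columns containing a $2$, handling the all-ones columns by a slightly modified $\encode$ if necessary.
\end{itemize}

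\textbf{Conclusion.} In the main case, no row of $\rb^j$ is an $\eps$-best response. Hence $x^*$ puts zero probability on $\rb^j$ in an $\eps$-WSNE, contradicting $P(\rb^j) > 0$. The threshold $18\eps$ is exactly what makes $1/3 \cdot P(\cb_1^j) - 4\eps > \eps$ hold with room to spare, matching the constant chosen in the translation-back step.
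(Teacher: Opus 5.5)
\paragraph{Verdict.} Your treatment of columns containing a $2$ is exactly the paper's proof. You bound every row of $\rb^j$ by $\frac13 P(\cb_1^j)+2\eps$ via Lemma~\ref{lem:colprob2}. You take the row of $\encode(A_j)$ with $1$s in columns $4j+1$ and $4j+2$, which earns at least $\frac23 P(\cb_1^j)-2\eps$. You then use $P(\cb_1^j)\ge 18\eps$ to get a gap of at least $2\eps$. The paper never treats the four-$1$s case separately. It simply asserts that $\encode$ \emph{always} produces a row with $1$s in columns $4j+1$ and $4j+2$, which is false for $\encode$ applied to $(1,1,1,1,0,\dots)^\intercal$. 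So the obstacle you ran into is a real hole, and it is in the paper's argument too, not just yours.

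\paragraph{The four-$1$s case is false.} No alternative handle can exist, because the statement fails for such a column. You should drop the idea of combining indifference conditions.

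Take $A_j=(1,1,1,1,0,\dots)^\intercal$. Let $x^*$ be uniform on $\rb^j=\{3j,3j+1,3j+2\}$ and $y^*$ uniform on $\cb_1^j=\{4j,4j+1,4j+2\}$.
\begin{itemize}
\item Row player: every row of $\rb^j$ earns $1/3$. The four rows of $\rb_e$ touched by $\encode(A_j)$ each have a single $1$ in $\cb_1^j$, so they also earn exactly $1/3$. All other rows earn $0$.
\item Column player: columns $4j,4j+1,4j+2$ earn $x^*_{3j+2}=x^*_{3j}=x^*_{3j+1}=1/3$. Every other column of $B'$ is non-zero only on $\rb_e$ or on some $\rb^{j'}$ with $j'\neq j$, so it earns $0$.
\end{itemize}
This is an exact Nash equilibrium, hence an $\eps$-WSNE for every $\eps$. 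It has $P(\rb^j)=1$, $P(\cb_1^j)=1\ge 18\eps$ for any $\eps\le 1/18$, and $P(\cb_2^j)=0$. It even has $P(\rb_e)=0$, so it also breaks Lemma~\ref{lem:rbelower2} and the well-definedness claim of Theorem~\ref{thm:scs2}. The ``wrong sign'' you observed is exactly this: each original $1$ is represented by a single $1$ against a column of weight about $P(\cb_1^j)/3$. The encoded rows therefore earn no more than $\rb^j$ does without $\cb_2^j$, and nothing pushes the column player onto $\cb_2^j$.

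\paragraph{The right fix.} Your fallback is correct, but it belongs at the level of the type-two simulation rather than a modified $\encode$. Equivalently, strengthen Definition~\ref{def:stage3} so that a column without a $2$ has at most three $1$s.

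This costs nothing for the main theorem. Trace the column types through the proofs of Lemmas~\ref{lem:scs1ap}, \ref{lem:scs1app} and~\ref{lem:dcsapayoffs}, noting that rows and columns added by the swapped applications contain at most two $1$s. In the stage 3 games actually produced, the only columns with four non-zero entries are the columns $3j$ created by the second type-one simulation. These contain the $2$ from $S$, so the four-$1$s encoding never arises.
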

\begin{proof}
Let us assume for the sake of contradiction that
$P(\cb^j_2) = 0$.
We can invoke Lemma~\ref{lem:colprob2} to conclude that $y^*_{4j}$, $y^*_{4j +
1}$ and $y^*_{4j + 2}$ are all upper bounded by $1/3 \cdot
P(\cb_1^j) + 2\eps$.
Since $P(\cb^j_2) = 0$ by assumption, the payoff of rows $3j$, $3j + 1$, and
$3j + 2$ can therefore be upper bounded by $1/3 \cdot P(\cb_1^j) + 2\eps$.

On the other hand, observe that by construction the $\encode$ function always
produces some row $i \in \rb_e$ that contains a payoff of $1$ in columns $4j +
1$ and $4j + 2$. Lemma~\ref{lem:colprob2} implies that both of these columns
are assigned at least
$1/3 \cdot P(\cb_1^j) - \eps$ probability, so the payoff of row $i$ is at
least
$2/3 \cdot P(\cb_1^j) - 2\eps$.

So the difference in payoff between row $i$ and either of the rows $3j$ or $3j
+ 1$ is at least
\begin{align*}
& 2/3 \cdot P(\cb_1^j) - 2\eps - 1/3 \cdot P(\cb_1^j) - 2\eps \\
& = 1/3 \cdot P(\cb_1^j) - 4\eps \\
& \ge 2 \eps,
\end{align*}
where in the final line we used the fact that
$P(\cb_1^j) \ge 18\eps$.
Hence, the row player cannot place probability on row $3j$, $3j + 1$, or $3j +
2$
in an $\eps$-WSNE, contradicting the assumption that $P(\rb^j) > 0$.
\end{proof}

From now on, we can follow the proof of Theorem~\ref{thm:scs1} more or less
directly, with only a few modified constants due to the differences in the
previous two lemmas.

\begin{lemma}
\label{lem:rbelowera2}
If $P(\cb_2^j) > 0$ holds for some column $j < k$ in $(x^*, y^*)$, and $\eps <
1/(8n)$, then $P(\rb_e) \ge 1/(O(n))$ also holds.
\end{lemma}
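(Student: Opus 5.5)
We mirror the proof of Lemma~\ref{lem:rbelowera}, adjusted to the type two simulation, where $S$ and $T$ are $3\times 3$ and each column block $\cb_1^j$ has three columns. We argue by contradiction: suppose $P(\cb_2^j) > 0$ for some $j<k$, but $P(\rb_e) < c/n$ for a constant $c$ fixed below. The argument compares the payoff of the single column in $\cb_2^j$ against the column player's best-response payoff.

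\textbf{Lower bound on the best response.} Every row of $B'$ contains at least one payoff entry equal to $1$. In the blocks $T$ this is immediate. For rows of $\rb_e$ we use that the input is a stage~3 game whose columns are nonzero; if a row of $B$ were all zero, that row is strictly dominated for the column player's consideration and can be handled as in the earlier proofs. The number of rows of $A'$ is at most $3k + n \le 4n$, so some row receives probability at least $1/(4n)$ under $x^*$. Some column then pays the column player at least $1/(4n)$, so the best-response payoff is at least $1/(4n)$.

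\textbf{Upper bound on the $\cb_2^j$ column.} The single column of $\cb_2^j$ has nonzero entries in $B'$ only inside $\rb_e$, where it equals $B_j$. Since the input is a stage~3 game, these entries are at most $2$. Hence this column pays at most $2P(\rb_e) < 2c/n$.

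\textbf{Choice of constant and conclusion.} Take $c = 1/16$. Then the column pays less than $1/(8n)$, so its gap to the best response exceeds $1/(4n) - 1/(8n) = 1/(8n) > \eps$. It therefore cannot be in the support of an $\eps$-WSNE, contradicting $P(\cb_2^j)>0$. This gives $P(\rb_e) \ge 1/(16n) = 1/O(n)$.

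The only step needing care is counting rows, to get the $1/(4n)$ best-response bound and the maximum payoff bound; this is where the $3\times 3$ blocks change the constants relative to Lemma~\ref{lem:rbelowera}. If the row count is instead $3k+n$ with $k\le n$, the constants shift but the argument stays the same.
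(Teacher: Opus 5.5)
Your proof is correct and is the same argument the paper uses (the paper simply defers to Lemma~\ref{lem:rbelowera}): you compare a $1/(4n)$ lower bound on the column player's best-response payoff, coming from the $3k+n \le 4n$ rows, against a bound of $2P(\rb_e)$ on the single column of $\cb_2^j$. Your bound of $2$, rather than the paper's stated bound of $1$, is the safer choice, because in the first application of this simulation $B$ is still a stage~3 matrix and may contain $2$s.

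The one step to tidy is the claim that every row of $B'$ has an entry of at least $1$. This does not follow from the stage~3 definition, which only gives upper bounds, and the ``strictly dominated'' fallback does not work. It does hold as an invariant of the whole chain of reductions: in the game of Theorem~\ref{thm:res-bimatrix} every row of $B$ and every column of $A$ is nonzero. Each simulation preserves this, since every new row of $B'$ contains a $1$ from $T$ (or $T_1$, $T_2$), and every new column of $A'$ contains a nonzero entry of $S$ or of an all-ones block.
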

\begin{proof}
The proof is identical to the proof of Lemma~\ref{lem:rbelowera}. The
difference in the resulting bound is explained by the different bound on $\eps$
that we assume, and the fact that all payoffs in rows in $\rb_e$ are now bounded
by $1$, rather than $K = 8$ in the other proof.
\end{proof}

\begin{lemma}
\label{lem:rbelower2}
If $\eps < 1/(18n)$, then $P(\rb_e) \ge 1/(O(n))$ holds in $x^*$.
\end{lemma}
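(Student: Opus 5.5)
The plan is to mirror the proof of Lemma~\ref{lem:rbelower}, substituting the type two versions of the supporting lemmas. Suppose, for contradiction, that $P(\rb_e) < c/n$ for a small constant $c$ to be fixed at the end (something like $c = 1/16$).

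The first step rules out the columns $j \ge 4k$. In these columns all of the column player's payoff comes from $\rb_e$. In a stage 3 game every payoff of $B$ is in $\{0,1,2\}$, so each such column pays at most $2 \cdot c/n$. Every row of $B'$ contains a $1$, and there are at most $4n$ rows, so the row player puts at least $1/(4n)$ mass on some row. The column player therefore has a best-response payoff of at least $1/(4n)$. Choosing $c$ small enough makes $1/(4n) - 2c/n > \eps$, since $\eps < 1/(18n)$. Hence no column $j \ge 4k$ is in the support of $y^*$.

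The second step handles the blocks $\cb_2^j$. If $P(\cb_2^j) > 0$ for some $j < k$, then Lemma~\ref{lem:rbelowera2} gives $P(\rb_e) \ge 1/O(n)$. Because $\eps < 1/(18n) < 1/(8n)$, its hypothesis holds, and we obtain the contradiction directly, provided $c$ is taken no larger than the constant produced there. So we may assume $P(\cb_2^j) = 0$ for every $j < k$.

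The third step derives the final contradiction. All of the column player's mass now lies in the blocks $\cb_1^j$ with $j < k$, so some $j$ has $P(\cb_1^j) \ge 1/k \ge 1/n$. Since $\eps < 1/(18n)$, this gives $P(\cb_1^j) > 18\eps$. Lemma~\ref{lem:scstwo1} (with $\eps < 1/(4n)$) then yields $P(\rb^j) > 0$. Lemma~\ref{lem:cb2lower2} then gives $P(\cb_2^j) > 0$, contradicting the second step.

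The only delicate point is the bookkeeping of constants. The threshold $c$ must be compatible both with Lemma~\ref{lem:rbelowera2} and with the first step, where the entries of $B$ are now bounded by $2$ rather than $K$. The margin needed is also against $1/(4n)$ rather than $1/(3n)$, because $S$ and $T$ are now $3\times 3$. The choice $\eps < 1/(18n)$ is exactly what makes $1/n > 18\eps$ in the third step, so that step goes through cleanly.
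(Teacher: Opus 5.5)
Your proposal is correct and matches the paper's proof, which simply reruns the argument of Lemma~\ref{lem:rbelower} with Lemmas~\ref{lem:scstwo1}, \ref{lem:rbelowera2} and~\ref{lem:cb2lower2} in place of their type one counterparts, using $\eps < 1/(18n)$ precisely to obtain $P(\cb_1^j) \ge 1/n > 18\eps$. Your explicit constants (entries of $B$ bounded by $2$, a best-response payoff of at least $1/(4n)$, hence e.g.\ $P(\rb_e) \ge 1/(16n)$) are, if anything, more careful than the paper's remark that the relevant payoffs are bounded by $1$.
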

\begin{proof}
The proof here is identical to the proof of Lemma~\ref{lem:rbelower}, noting
that our tighter upper bound on $\eps$ still allows us to conclude $P(\cb_1^j) >
18\eps$ when it is necessary to invoke Lemma~\ref{lem:cb2lower2}.
\end{proof}

\begin{lemma}
\label{lem:clower2}
If $\eps < 1/(8n)$ then
$\sum_{j < k} P(\cb_1^j) + \sum_{j \ge k} y^*_{4k + j} \ge 1/(O(n))$ holds in~$y^*$.
\end{lemma}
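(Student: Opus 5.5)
We argue by contradiction, mirroring the proof of Lemma~\ref{lem:clower}. Suppose that
\[
\sum_{j < k} P(\cb_1^j) + \sum_{j \ge k} y^*_{4k + j} < c/n
\]
for a small constant $c$, to be fixed at the end. The aim is to show that the row player then cannot place any probability on $\rb_e$, which contradicts Lemma~\ref{lem:rbelower2}.

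First, we check which columns of $A'$ carry non-zero entries in the rows of $\rb_e$. By construction of the type two simulation, the only such columns are the columns of the blocks $\cb_1^j$ for $j < k$ (filled by $\encode(A_j)$) and the columns $4k + j$ for $j \ge k$ (filled by the original columns $A_j$). The block $\cb_2^j$ is zero on $\rb_e$ in $A'$, since $A'$ has the $1$ entries of $\cb_2^j$ only in $\rb^j$. Moreover, in this stage every entry of $A'$ lies in $\{0,1\}$: the $\encode$ outputs for type two are $0/1$, and the columns $j \ge k$ contain only $1$s by the choice of $k$. Hence each row $i \in \rb_e$ has payoff
\[
(A' y^*)_i \le \sum_{j<k} P(\cb_1^j) + \sum_{j \ge k} y^*_{4k+j} < c/n .
\]

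Next, we lower bound the row player's best-response payoff. Every column of $A'$ contains at least one entry equal to $1$. This holds for the blocks $S$ (identity), for the all-ones column $\cb_2^j$, and for the original columns, which are non-zero after reordering; any all-zero column could be dropped without loss. Since there are $O(n)$ columns, some column $c^*$ has $y^*_{c^*} \ge 1/(4n)$. The row containing a $1$ in that column then earns at least $1/(4n)$, so the best-response payoff is at least $1/(4n)$. Taking $c = 1/8$ makes every row of $\rb_e$ worse than the best response by more than $1/(8n) > \eps$, using the assumption $\eps < 1/(8n)$. So no row of $\rb_e$ is an $\eps$-best-response, and in an $\eps$-WSNE we get $P(\rb_e) = 0$.

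This contradicts Lemma~\ref{lem:rbelower2}, which gives $P(\rb_e) \ge 1/O(n)$. We should check that its hypothesis $\eps < 1/(18n)$ is available. If it is not implied by $\eps < 1/(8n)$, we will simply strengthen the hypothesis of this lemma to $\eps < 1/(18n)$, which is harmless since Theorem~\ref{thm:scs2} assumes $\eps < 1/(96n^2)$ anyway. The main care point is confirming that $\cb_2^j$ contributes nothing to the payoffs of $\rb_e$ in $A'$ (its entries in $\rb_e$ appear only in $B'$) and counting the number of columns correctly, which is about $4n$ in the single column layout; both are routine from the construction in Appendix~\ref{app:matrices}.
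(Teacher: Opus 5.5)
Your proof is correct and follows the paper's route: the paper proves this lemma by rerunning the proof of Lemma~\ref{lem:clower} with the entry bound $K/2$ replaced by $1$, so every row of $\rb_e$ earns less than the assumed small sum while some row earns $\Omega(1/n)$, which forces $P(\rb_e)=0$ and contradicts Lemma~\ref{lem:rbelower2}. You are right that the hypothesis $\eps < 1/(18n)$ of Lemma~\ref{lem:rbelower2} does not follow from $\eps < 1/(8n)$, a mismatch the paper's one-line proof inherits, so strengthening the hypothesis is the right fix. One small correction: justify that every column of $A'$ contains a $1$ by noting that every column of $A$ in Definition~\ref{def:resbi} is non-zero and that each simulation preserves this, rather than by dropping all-zero columns, which is not harmless in general because it removes a strategy of the column player.
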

\begin{proof}
The proof is identical to the proof of Lemma~\ref{lem:clower}, noting that in
this case all payoff entries are bounded by $1$ instead of $K/2 = 4$, which
explains the stronger lower bound that we obtain.
\end{proof}

\begin{lemma}
\label{lem:clowerfinal2}
Suppose $\eps < 1/(96 \cdot n^2)$ then
if $C = \{j < k \; : \; P(\cb_1^j) \ge 18 \eps \}$ we have
$\sum_{j \in C} P(\cb_1^j) + \sum_{j \ge k} y^*_{4k + j} \ge 1/(O(n))$ holds in~$y^*$.
\end{lemma}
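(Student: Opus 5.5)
The plan is the same as for Lemma~\ref{lem:clowerfinal}: start from a lower bound on the total un-thresholded mass, then subtract what the $18\eps$ threshold can remove. Since the statement only asks for $1/O(n)$, the one thing to watch is that the constant in the un-thresholded bound beats the loss $18 n\eps < 18/(96n)$. I will therefore re-derive the bound of Lemma~\ref{lem:clower2} with an explicit constant rather than quote it as $1/O(n)$.

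\textbf{Step 1: explicit constant for the un-thresholded mass.} Write $Q = \sum_{j<k} P(\cb_1^j) + \sum_{j\ge k} y^*_{4k+j}$. Every row of $\rb_e$ has non-zero entries only in columns of some $\cb_1^j$ ($j<k$) or in columns $4k+j$ ($j \ge k$). All these entries are $0/1$: the $\encode$ outputs use only $1$s, and the columns $j\ge k$ contain only $1$s by the choice of $k$. Hence every row of $\rb_e$ earns at most $Q$.

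On the other side, I need a lower bound on the row player's best-response payoff.
\begin{itemize}
\item Every column of $A'$ contains a $1$. The columns of $\cb_1^j$ get one from the identity block $S$, the column of $\cb_2^j$ from the all-ones vector on $\rb^j$, and each column $j\ge k$ is a non-zero column of $A$.
\item $A'$ has at most $4k+(n-k)\le 4n$ columns, so some column carries probability at least $1/(4n)$.
\item Playing a row with a $1$ in that column gives the row player payoff at least $1/(4n)$.
\end{itemize}

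By Lemma~\ref{lem:rbelower2}, which applies because $1/(96n^2) < 1/(18n)$, we have $P(\rb_e) > 0$. So some row of $\rb_e$ is in the support and must be an $\eps$-best response. This gives
\[
Q \;\ge\; \frac{1}{4n} - \eps \;>\; \frac{1}{4n} - \frac{1}{96n^2} \;\ge\; \frac{23}{96n}.
\]

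\textbf{Step 2: subtract the thresholded blocks.} At most $k \le n$ blocks $\cb_1^j$ have $P(\cb_1^j) < 18\eps$, and together they carry less than
\[
18 n \eps \;<\; \frac{18}{96n}.
\]
Hence
\[
\sum_{j\in C} P(\cb_1^j) + \sum_{j\ge k} y^*_{4k+j} \;>\; \frac{23-18}{96n} \;=\; \frac{5}{96n},
\]
which is $1/O(n)$ as claimed.

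\textbf{Main obstacle.} The delicate point is Step~1's constant. It must exceed $18/96 = 3/16$, so I cannot simply take the loose $1/O(n)$ from Lemma~\ref{lem:clower2}; that is exactly why the direct best-response bound $1/(4n)-\eps$ is used.

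Two facts need checking for that bound. First, the column count really is at most $4n$, so that the maximal column has probability at least $1/(4n)$. Second, the entries of $\rb_e$ are bounded by $1$, so a row's payoff is at most $Q$ even when $\encode$ puts two $1$s into one column block. If the column count were larger, say $4n+O(1)$ because of padding, the bound becomes $\frac{1}{4n+c}-\eps$. That still leaves positive slack above $18/(96n)$ for large $n$, and I would state the $1/O(n)$ conclusion accordingly.
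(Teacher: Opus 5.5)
Your proof is correct and follows the paper's route: the paper's proof just says to repeat Lemma~\ref{lem:clowerfinal}, i.e., take the un-thresholded lower bound of Lemma~\ref{lem:clower2} and subtract the at most $18n\eps$ lost to blocks below the threshold. Your insistence on an explicit constant is well placed, and it is what makes the subtraction go through. Under the hypothesis $\eps<1/(8n)$ of Lemma~\ref{lem:clower2}, the argument of Lemma~\ref{lem:clower} only guarantees $1/(8n)=12/(96n)$, which is below the loss bound $18/(96n)$; the same slip appears in Lemma~\ref{lem:clowerfinal}, where $1/(24n)=10/(240n)<11/(240n)$. Your direct bound $1/(4n)-\eps\ge 23/(96n)$, which uses the stronger hypothesis on $\eps$, fixes this.
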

\begin{proof}
This proof is identical to the proof of Lemma~\ref{lem:clowerfinal}, where only
the constants need to be changed to account for the different constants given
to us by the previous lemmas.
\end{proof}

\begin{lemma}
\label{lem:scsrow2}
If $\eps < 1/(8n)$, then
$\left|x^\intercal \cdot B_j - x^\intercal \cdot B_{j'} \right| \le O(n \cdot \eps).$
for all columns $j,j'$ for which $y_j > 0$ and $y_{j'} > 0$.
\end{lemma}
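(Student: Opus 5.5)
We follow the proof of Lemma~\ref{lem:scsrow} essentially verbatim, with the type two constants substituted.

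The starting point is that $(x^*, y^*)$ is an $\eps$-WSNE of $(A', B')$. Hence, for any two columns $c, c'$ of $B'$ with $y^*_c > 0$ and $y^*_{c'} > 0$, we have $\left|(x^*)^\intercal B'_c - (x^*)^\intercal B'_{c'}\right| \le \eps$.

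Next we identify, for every original column $j$ with $y_j > 0$, a column of $B'$ that is played under $y^*$ and whose payoff equals $P(\rb_e) \cdot x^\intercal B_j$.

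\begin{itemize}
\item \emph{Case $j \ge k$.} Column $4k + j$ of $B'$ coincides with $B_j$ on $\rb_e$ and is zero elsewhere. Moreover, $y_j > 0$ forces $y^*_{4k+j} > 0$.
\item \emph{Case $j < k$.} Here $y_j > 0$ means $P(\cb_1^j) \ge 18\eps$ by the definition of the translation.
\begin{itemize}
\item Lemma~\ref{lem:scstwo1} then gives $P(\rb^j) > 0$.
\item Lemma~\ref{lem:cb2lower2} then gives $P(\cb_2^j) > 0$.
\item The single column in $\cb_2^j$ has $B'$-entries equal to $B_j$ on $\rb_e$ and zero elsewhere, because the $T$ block lies only in $\cb_1^j$.
\end{itemize}
\end{itemize}

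In both cases, by the construction of $x$ as the re-normalization of $x^*$ restricted to $\rb_e$, the payoff of the chosen column against $x^*$ is exactly $P(\rb_e) \cdot x^\intercal B_j$.

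Applying the WSNE inequality to the representative columns of $j$ and $j'$ gives
\[
P(\rb_e)\left|x^\intercal B_j - x^\intercal B_{j'}\right| \le \eps.
\]
Dividing by $P(\rb_e)$ and using Lemma~\ref{lem:rbelower2}, which gives $P(\rb_e) \ge 1/O(n)$, yields the bound $O(n\cdot\eps)$.

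The only delicate point is checking that the hypotheses of the invoked lemmas hold under $\eps < 1/(8n)$. Lemmas~\ref{lem:scstwo1} and~\ref{lem:cb2lower2} need $\eps < 1/(4n)$, which holds. Lemma~\ref{lem:rbelower2} as stated needs $\eps < 1/(18n)$, which is stronger than what we assume.

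To handle this, I would first try to reprove the lower bound on $P(\rb_e)$ directly under $\eps < 1/(8n)$, as in Lemma~\ref{lem:rbelower}:
\begin{itemize}
\item If every column block satisfies $P(\cb_2^j) = 0$, then some $\cb_1^j$ carries probability at least $1/n$.
\item We would then need $1/n \ge 18\eps$, and $\eps < 1/(8n)$ does not guarantee this.
\end{itemize}

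Failing that, I would note that the theorem this lemma serves assumes $\eps < 1/(96n^2)$. That is far below $1/(18n)$, so Lemma~\ref{lem:rbelower2} applies, and I would state the argument under this stronger assumption.
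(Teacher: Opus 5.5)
Your proposal is correct and is exactly the paper's argument, which just says the proof is identical to that of Lemma~\ref{lem:scsrow}. You use the $18\eps$ threshold with Lemmas~\ref{lem:scstwo1} and~\ref{lem:cb2lower2} to find a supported column whose $B'$-payoff is $P(\rb_e)\cdot x^\intercal B_j$, apply the WSNE condition, and divide by the lower bound on $P(\rb_e)$. The hypothesis mismatch you flag is genuine in the paper as well: its appeal to Lemma~\ref{lem:rbelower2} silently needs $\eps < 1/(18n)$ rather than $\eps < 1/(8n)$. Your fix of working under the stronger bound, which Theorem~\ref{thm:scs2}'s assumption $\eps < 1/(96 n^2)$ guarantees, is the right one.
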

\begin{proof}
This proof is identical to the proof of Lemma~\ref{lem:scsrow}.
\end{proof}

\begin{lemma}
\label{lem:scscol2}
If $\eps < 1/(96 \cdot n^2)$, then
$\left|(A \cdot y)_i - (A \cdot y)_{i'} \right| \le O(n^2 \cdot \eps).$
for all rows $i,i'$ for which $x_i > 0$ and $x_{i'} > 0$.
\end{lemma}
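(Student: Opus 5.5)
I will follow the structure of the proof of Lemma~\ref{lem:scscol}, modifying only the step that relates the payoff of a row in $\rb_e$ under $y^*$ to its payoff under the translated strategy. Let $C = \{4j : j<k,\ P(\cb_1^j)\ge 18\eps\} \cup \{4k+j : j\ge k,\ y^*_{4k+j}>0\}$. Then $y$ is the re-normalization, by $P(C)$, of the vector that assigns $P(\cb_1^j)$ to each retained column $j<k$ and $y^*_{4k+j}$ to each column $j \ge k$. (For $j<k$ the weight placed on column $j$ of $A$ should be the whole block mass $P(\cb_1^j)$, with normalizer $\sum_{j\in C}P(\cb_1^j)+\sum_{j\ge k} y^*_{4k+j}$, as in Lemma~\ref{lem:clowerfinal2}.) The goal is to show that for every row $i\in\rb_e$ we have $(A' y^*)_i = P(C)\,(A y)_i \pm O(n\eps)$.

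\textbf{Step 1 (per-column simulation).} Fix a column $j<k$ with $P(\cb_1^j)>0$. Lemma~\ref{lem:scstwo1} gives $P(\rb^j)>0$. Lemma~\ref{lem:colprob2} then places each of $y^*_{4j},y^*_{4j+1},y^*_{4j+2}$ in $[P(\cb_1^j)/3-\eps,\ P(\cb_1^j)/3+2\eps]$. If $P(\cb_1^j)=0$ the contribution is exactly zero. Inspecting the five cases of $\encode$, a $1$ in $A_j$ is encoded by a single $1$ among the three columns, and a $2$ by two $1$s. So each encoding row $i$ receives $A_{i,j}\cdot P(\cb_1^j)/3 \pm 4\eps$ from block $\cb_1^j$. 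The true payoff $P(\cb_1^j)A_{i,j}$ therefore appears scaled by $1/3$, and this uniform factor $1/3$ (applied consistently to the $j<k$ columns) is the one point needing care. To make the factor cancel, I will define the weight of a column $j<k$ in $y$ to be $P(\cb_1^j)/3$, i.e.\ $y^*_{4j}$ up to $O(\eps)$, mirroring the choice $y'_j=y^*_{3j}$ in the first simulation. With that convention, the contribution is $y'_j A_{i,j}\pm O(\eps)$, exactly as in Lemma~\ref{lem:scscol}. For columns $j\ge k$ the contribution is exactly $y^*_{4k+j}A_{i,j}$.

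\textbf{Step 2 (sum and zeroing).} Summing over at most $n$ columns gives $(A'y^*)_i = P(C)(A\hat y)_i \pm O(n\eps)$, where $\hat y$ does no zeroing. The excluded blocks each carry less than $18\eps$ mass, and all entries are at most $2$, so zeroing them costs $O(n\eps)$ in total.

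\textbf{Step 3 (equilibrium transfer).} Every row $i$ with $x_i>0$ lies in $\rb_e$ with $x^*_i>0$, so the $\eps$-WSNE condition gives $|(A'y^*)_i-(A'y^*)_{i'}|\le\eps$. This yields $P(C)\,|(Ay)_i-(Ay)_{i'}|\le O(n\eps)$. Dividing by $P(C)\ge 1/O(n)$ from Lemma~\ref{lem:clowerfinal2} gives the claimed $O(n^2\eps)$ bound.

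The main obstacle is Step 1, namely checking that the asymmetric error window of Lemma~\ref{lem:colprob2} together with each $\encode$ pattern gives only an $O(\eps)$ error per column, and that the $1/3$ scaling is applied consistently so it cancels in the re-normalization rather than distorting the relative payoffs of rows.
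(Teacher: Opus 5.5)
Your Steps 1--3 are the paper's argument: the paper proves this lemma by pointing to the proof of Lemma~\ref{lem:scscol}, and your adaptation goes through with the convention you settle on in Step~1, $y'_j = y^*_{4j}$. By Lemma~\ref{lem:colprob2} all three columns of $\cb_1^j$ lie within $3\eps$ of $y^*_{4j}$, so block $j$ contributes $A_{i,j}\,y^*_{4j} \pm 6\eps$ to row $i$. Delete the opening claim that $y$ puts the whole block mass $P(\cb_1^j)$ on column $j$: that would over-weight the columns $j<k$ by a factor of $3$ relative to the columns $j \ge k$ and break the estimate. Also note that the final division needs $\sum_{j<k,\,P(\cb_1^j)\ge 18\eps} y^*_{4j} + \sum_{j\ge k} y^*_{4k+j} = \Omega(1/n)$ for this normalizer, not for the block-mass sum of Lemma~\ref{lem:clowerfinal2}; you can get it, for instance, via $y^*_{4j} \ge P(\cb_1^j)/3 - \eps$, and the paper's proof of Lemma~\ref{lem:scscol} skips the same step.
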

\begin{proof}
This proof is identical to the proof of Lemma~\ref{lem:scscol}.
\end{proof}

This concludes the proof of Theorem~\ref{thm:scs2}.

\section{Proof of Theorem~\ref{thm:dcs}}
\label{app:dcsproof}

Throughout this proof we suppose that $(x^*, y^*)$ be an $\eps$-WSNE of $(A', B')$.
We also use the $P$ notation to refer to the amount of probability assigned to
sets of rows or columns, as defined in Appendix~\ref{app:scs1proof}.

We start by showing some relationships between the columns in $\cb_1^i$ and the
rows in $\rb_1^i$ and $\rb_2^i$.

\begin{lemma}
\label{lem:dcs1}
If $\eps < 1/(5n)$ then
for each row $i < k$, we have the following.
\begin{itemize}
\item If $c^*_{5i+1} > 0$ then $r^*_{4i} > 0$.
\item If $c^*_{5i+2} > 0$ then $r^*_{4i+2} > 0$.
\item If $c^*_{5i} > 0$ then $r^*_{4i+1} + r^{*}_{4i+3} > 0$.
\end{itemize}
\end{lemma}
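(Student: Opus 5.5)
The plan is the argument of Lemma~\ref{lem:scs1}, applied separately to each of the three columns of $\cb_1^i$. Here $c^*$ and $r^*$ denote the probabilities $y^*$ and $x^*$ assign to columns and rows. First I will fix a lower bound on the column player's best-response payoff. Every row of $B'$ contains at least one payoff entry equal to $1$: the rows in $\rb_1^i$ and $\rb_2^i$ each get one from $T_1$ or $T_2$, and the rows in $\rb_e$ have one because of the structure of $B$ in a stage 2 game. There are at most $5n$ columns, so some column must give the column player payoff at least $1/(5n)$. More directly, $(x^*)^\intercal B' y' \ge 1/(5n)$ for a suitable column. Since $\eps < 1/(5n)$, any column whose expected payoff against $x^*$ is $0$ cannot be an $\eps$-best-response, and so it is played with probability $0$ in an $\eps$-WSNE.

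The remaining work is to read off, for each column of $\cb_1^i$, which rows give the column player a non-zero payoff in $B'$. Inside $\cb_1^i$ the only non-zero entries of $B'$ are in $\rb_1^i$ (block $T_1$) and $\rb_2^i$ (block $T_2$). The rows in $\rb_e$ are zero on $\cb_1^i$, because $B'$ places $B_{2i}$ and $B_{2i+1}$ in $\cb_2^i$ and $\cb_3^i$, not in $\cb_1^i$. The rows of other row blocks are zero there by the formal definition in Appendix~\ref{app:matrices}. In $T_1$, column $5i+1$ has a $1$ only in row $4i$, column $5i$ has a $1$ only in row $4i+1$, and column $5i+2$ is zero. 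In $T_2$, column $5i+2$ has a $1$ only in row $4i+2$, column $5i$ has a $1$ only in row $4i+3$, and column $5i+1$ is zero.

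So the expected payoff of column $5i+1$ is exactly $r^*_{4i}$, that of column $5i+2$ is exactly $r^*_{4i+2}$, and that of column $5i$ is exactly $r^*_{4i+1}+r^*_{4i+3}$. Each bullet then follows by contraposition from the first step: if the relevant row probability (or sum) were $0$, the column would earn $0$ and could not be in the support of $y^*$.

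The only real obstacle is the bookkeeping. I must make sure the rows of $\rb_e$ contribute nothing to $B'$ in $\cb_1^i$, since that is what makes each column's payoff depend only on the stated rows. I also need the existence of a column with payoff at least $1/(5n)$ to be argued correctly. A pure averaging argument needs every row of $B'$ to have a $1$, which requires that $B$ has no all-zero rows in the encoding block. This holds in a stage 2 game, or follows from the best-response column having payoff at least $\max_j (x^*)^\intercal B'_j \ge 1/(\#\text{columns})$.
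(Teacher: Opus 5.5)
Your proposal is correct and is essentially the paper's proof: each column of $\cb_1^i$ earns exactly $r^*_{4i}$, $r^*_{4i+2}$, or $r^*_{4i+1}+r^*_{4i+3}$ in $B'$, and a column with zero payoff cannot be in the support of $y^*$, since some column earns at least $1/(5n) > \eps$. One small caveat: your fallback in the last sentence is circular, because the bound $\max_j (x^*)^\intercal B'_j \ge 1/(\#\text{columns})$ already needs every row of $B'$ to contain an entry of at least $1$, and that fact comes from tracing the construction rather than from the stage 2 definition, which only imposes upper bounds (the paper simply asserts it as well).
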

\begin{proof}
We start with the first claim. Observe that column $5i+1$ contains only a
single $1$, which lies in row $4i$ and arises from the matrix $T_1$. Hence, if
we suppose for the sake of contradiction that $r^*_{4i} = 0$, then the payoff
to the column player for playing column $5i+1$ is zero. On the other hand,
observe that every row of $B$ contains at least one $1$, so there must exist a
column for which the column player receives payoff at least $1/(5n)$. Since
$\eps < 1/(5n)$, we have that the column player cannot place probability on
column $5i+1$ in an $\eps$-WSNE, which provides our contradiction.

The proofs for the other two claims can be shown using exactly the same
technique.
\end{proof}

Next we show the critical lemma for the dual column simulation: the probability
assigned to the first column in $\cb_1^i$ is equal to the sum of the
probabilities of the other two columns. This allows us to simulate the $2$
payoffs that were in the two columns with a single $2$ in the first column of
the block.

\begin{lemma}
\label{lem:dcs2}
If $\eps < 1/(5n)$, then
for each row $i < k$, if $P(\cb_1^i) > 0$ and $P(\rb_1^i) + P(\rb_2^i) > 0$,
then $$c^*_{5i} = c^*_{5i+1} + c^*_{5i+2} \pm \eps.$$
\end{lemma}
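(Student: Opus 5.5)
The plan is to compare the row player's payoffs for the rows within $\rb_1^i$ and within $\rb_2^i$, and then use the column player's best-response condition together with Lemma~\ref{lem:dcs1}. The key observation comes from reading off $A'$. Row $4i$ has payoff $c^*_{5i} + c^*_{5i+3}$. Row $4i+1$ has payoff $c^*_{5i+1} + c^*_{5i+2} + c^*_{5i+3}$. Similarly, row $4i+2$ has payoff $c^*_{5i} + c^*_{5i+4}$ and row $4i+3$ has payoff $c^*_{5i+1}+c^*_{5i+2}+c^*_{5i+4}$. So within each of the two row blocks, the payoff of the first row minus the payoff of the second row is exactly $D := c^*_{5i} - (c^*_{5i+1}+c^*_{5i+2})$. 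The lemma therefore amounts to showing $|D| \le \eps$.

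First I will treat the case $D > \eps$. Then rows $4i+1$ and $4i+3$ are each beaten by more than $\eps$ (by rows $4i$ and $4i+2$ respectively), so in an $\eps$-WSNE we have $r^*_{4i+1} = r^*_{4i+3} = 0$. In $B'$, column $5i$ has non-zero entries only in rows $4i+1$ (from $T_1$) and $4i+3$ (from $T_2$). The encoding block $\rb_e$ is zero in $B'$ on the columns of $\cb_1^i$. Hence column $5i$ gives the column player payoff $0$. As in the proof of Lemma~\ref{lem:dcs1}, every row of $B'$ contains a $1$, so some column yields payoff at least $1/(5n) > \eps$. It follows that $c^*_{5i} = 0$. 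This contradicts $c^*_{5i} > c^*_{5i+1}+c^*_{5i+2}+\eps > 0$.

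Next I will treat the case $D < -\eps$. Symmetrically, rows $4i$ and $4i+2$ are beaten by more than $\eps$ by rows $4i+1$ and $4i+3$, so $r^*_{4i} = r^*_{4i+2} = 0$. By the contrapositive of the first two bullets of Lemma~\ref{lem:dcs1}, this forces $c^*_{5i+1} = c^*_{5i+2} = 0$. But then $D = c^*_{5i} \ge 0$, contradicting $D < -\eps$.

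The only place needing care is checking exactly which entries of $B'$ lie in column $5i$, namely that the encoding rows contribute nothing there. The hypotheses $P(\cb_1^i)>0$ and $P(\rb_1^i)+P(\rb_2^i)>0$ are not really needed by this argument, though they ensure the statement is non-vacuous. The bound $\eps < 1/(5n)$ is used only in the best-response comparison, exactly as in Lemma~\ref{lem:dcs1}.
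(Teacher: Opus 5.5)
Your proof is correct and follows essentially the same argument as the paper. In each direction you compare rows $4i$ vs.\ $4i+1$ and $4i+2$ vs.\ $4i+3$ (both gaps equal $D$), remove the dominated rows using the $\eps$-WSNE condition, and finish with the contrapositive of Lemma~\ref{lem:dcs1}; the only difference is that for $D>\eps$ you re-derive its third bullet inline rather than citing it, and your remark that the hypotheses $P(\cb_1^i)>0$ and $P(\rb_1^i)+P(\rb_2^i)>0$ go unused applies equally to the paper's proof.
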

\begin{proof}
We first prove that $c^*_{5i} \le c^*_{5i+1} + c^*_{5i+2} + \eps$. Suppose for
the sake of contradiction that this is not the case. Then the row player's
payoff for playing row $4i$ is at least
\begin{equation*}
c^*_{5i} + c^*_{5i+3} > c^*_{5i+1} + c^*_{5i+2} + \eps + c^*_{5i+3},
\end{equation*}
meaning that the gap between the payoff of row $4i$ and the payoff of row $4i +
1$ is strictly more than $\eps$. Hence we must have $x^*_{4i+1} = 0$. Exactly
the same argument can also be used to prove that $x^*_{4i+3} = 0$. So by the
contrapositive of the third claim of Lemma~\ref{lem:dcs1} we have
$c^*_{5i} = 0$, which contradicts our assumption that
$c^*_{5i} > c^*_{5i+1} + c^*_{5i+2} + \eps$.

Next we prove that $c^*_{5i} \ge c^*_{5i+1} + c^*_{5i+2} - \eps$. Suppose for
the sake of contradiction that this is not the case. This means that the payoff
for playing row $4i + 1$ is at least
\begin{equation*}
c^*_{5i+1} + c^*_{5i+2} + c^*_{5i+3} > c^*_{5i} + \eps + c^*_{5i+3},
\end{equation*}
meaning that gap between the payoffs of row $4i + 1$ and row $4i$ is strictly
more than $\eps$, so we must have $x^*_{4i} = 0$. Exactly the same argument can
be used to show that $x^*_{4i+2} = 0$. But now the contrapositives of the first
two claims of Lemma~\ref{lem:dcs1} imply that $c^*_{5i+1} = 0$ and $c^*_{5i+2}
= 0$, which contradicts our assumption that $c^*_{5i} + \eps < c^*_{5i+1} +
c^*_{5i+2}$.
\end{proof}

Finally we show a relationship between the second and third columns of
$\cb_1^i$ with the blocks $\cb_2^i$ and $\cb_3^i$.

\begin{lemma}
\label{lem:dcs3}
If $\eps < 1/(5n)$, then the following statements hold.
\begin{itemize}
\item If $c^*_{5i+1} \ge 3 \eps$ then $c^*_{5i + 3} > 0$.
\item If $c^*_{5i+2} \ge 3 \eps$ then $c^*_{5i + 4} > 0$.
\end{itemize}
\end{lemma}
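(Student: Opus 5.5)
I will prove the first claim; the second is symmetric, replacing row $4i$, column $5i+1$, column $5i+3$ and the second encoded row by row $4i+2$, column $5i+2$, column $5i+4$ and the third encoded row. The argument mirrors Lemma~\ref{lem:cb2lower}. Suppose for contradiction that $c^*_{5i+1} \ge 3\eps$ but $c^*_{5i+3} = 0$. Since $c^*_{5i+1} > 0$, the first claim of Lemma~\ref{lem:dcs1} gives $x^*_{4i} > 0$. Hence row $4i$ must be an $\eps$-best response for the row player. I will exhibit a row in $\rb_e$ whose payoff exceeds that of row $4i$ by more than $\eps$, which is the desired contradiction.

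First, compute the payoff of row $4i$. Its nonzero entries are the $1$ from $S$ in column $5i$ and the $1$ in column $5i+3$. With $c^*_{5i+3}=0$ its payoff is exactly $c^*_{5i}$. Next I bound $c^*_{5i}$ using Lemma~\ref{lem:dcs2}. Its hypotheses hold: $P(\cb_1^i) \ge c^*_{5i+1} > 0$, and $P(\rb_1^i) > 0$ because $x^*_{4i}>0$. So $c^*_{5i} \le c^*_{5i+1} + c^*_{5i+2} + \eps$.

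Now look at the encoding rows. By the definition of $\encode$ on the pair of columns $(2i,2i+1)$, there is a row $r \in \rb_e$ with a $1$ in column $5i+1$, and a row $r' \in \rb_e$ with a $1$ in column $5i+2$. There is also a row with a $2$ in column $5i$. These rows may have further payoffs elsewhere, but those are nonnegative, so I only need lower bounds.

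The obstacle is that row $4i$ earns roughly $c^*_{5i+1}+c^*_{5i+2}$, while row $r$ alone earns only $c^*_{5i+1}$. A single encoding row does not obviously dominate. I would first use the row with the $2$ in column $5i$. By Lemma~\ref{lem:dcs2} its payoff is at least $2c^*_{5i} \ge c^*_{5i} + (c^*_{5i+1}+c^*_{5i+2}-\eps)$. I then need $c^*_{5i+1}+c^*_{5i+2} - \eps > \eps$, which holds since $c^*_{5i+1} \ge 3\eps$. So this row beats row $4i$ by strictly more than $\eps$. Hence $x^*_{4i}=0$, contradicting Lemma~\ref{lem:dcs1}.

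The bound $\eps < 1/(5n)$ enters only through Lemmas~\ref{lem:dcs1} and~\ref{lem:dcs2}. The threshold $3\eps$ is exactly what makes the final gap $c^*_{5i+1}+c^*_{5i+2}-2\eps$ exceed $\eps$. If this attempt hit a snag, I would instead compare the two rows $r$ and $r'$ with the pair of rows $4i$ and $4i+1$ together.
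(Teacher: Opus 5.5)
Your proof is correct and follows the paper's argument essentially verbatim. Lemma~\ref{lem:dcs1} gives $x^*_{4i}>0$, and Lemma~\ref{lem:dcs2} then gives $c^*_{5i}\ge c^*_{5i+1}+c^*_{5i+2}-\eps$. Comparing row $4i$ (payoff exactly $c^*_{5i}$ when $c^*_{5i+3}=0$) with the encoding row carrying the $2$ in column $5i$ (payoff at least $2c^*_{5i}$) yields a gap of at least $2\eps>\eps$.

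The only blemish is your closing remark, which misstates the gap as $c^*_{5i+1}+c^*_{5i+2}-2\eps$; the body correctly has $-\eps$. The fallback comparison via rows $r,r'$ is not needed.
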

\begin{proof}
We begin with the first claim.
Since $c^*_{5i+1} > 0$, Lemma~\ref{lem:dcs1} implies that $r^*_{4i} > 0$. This
implies that the precondition of Lemma~\ref{lem:dcs2} holds,
so we have $c^*_{5i} = c^*_{5i+1} + c^*_{5i+2} \pm \eps$.

Now suppose for the sake of contradiction that
$c^*_{5i + 3} = 0$. This means that the payoff of row $4i$ is exactly
$1 \cdot c^*_{5i}$, but on the other hand, by the construction of the $\encode$
function, there is some row in
$\rb_e$ that has payoff at least $2 \cdot c^*_{5i}$. So the difference in payoff
between these two rows is at least
\begin{align*}
c^*_{5i} &\ge c^*_{5i+1} + c^*_{5i+2} - \eps \\
& \ge c^*_{5i+1} - \eps \\
& \ge 3\eps - \eps = 2\eps
\end{align*}
where in the final inequality we used the assumption that
$c^*_{5i+1} \ge 3 \eps$. Hence we must have $r^*_{4i} = 0$, which is a
contradiction.

The proof of the second claim can be proved in an entirely symmetric manner.
\end{proof}

We now turn our attention towards showing that the strategies $x$ and $y$ are
well defined. We first show a lower bound on the amount of probability used in
$y'$.

\begin{lemma}
\label{lem:dcsyplower}
If $\eps < 1/(50n^2)$ then
we have $\sum_j y'_j \ge 1/(100 n)$.
\end{lemma}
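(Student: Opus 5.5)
The plan is to imitate the proof of Lemma~\ref{lem:clower}: first show that the row player places non-trivial probability on the encoding block $\rb_e$, and then show that this is impossible when $\sum_j y'_j$ is small. The reason is that the rows of $\rb_e$ earn payoff only from columns that are either counted in $y'$ or are controlled by columns counted in $y'$.

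\textbf{Step 1: $P(\rb_e) \ge 1/O(n)$.} I would prove this as in Lemmas~\ref{lem:rbelowera} and~\ref{lem:rbelower}.
\begin{itemize}
\item Every row of $B'$ contains a $1$, so the column player's best-response payoff is at least $1/(5n)$.
\item The columns in $\cb_2^i$, $\cb_3^i$ and the tail columns $5k+\cdot$ have non-zero $B'$-entries only in $\rb_e$, and these entries are at most $1$ for a stage 2 game. So if $P(\rb_e)$ is tiny, none of these columns can be in the support of $y^*$.
\item All of $y^*$ then lies in $\bigcup_i \cb_1^i$, so some $\cb_1^i$ has mass at least $1/n$.
\item By Lemma~\ref{lem:dcs1}, $P(\rb_1^i)+P(\rb_2^i)>0$. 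Lemma~\ref{lem:dcs2} then gives $2(c^*_{5i+1}+c^*_{5i+2}) \ge 1/n - \eps$, so one of $c^*_{5i+1}, c^*_{5i+2}$ is at least roughly $1/(4n) \ge 3\eps$.
\item Lemma~\ref{lem:dcs3} then forces $c^*_{5i+3}>0$ or $c^*_{5i+4}>0$. This contradicts the second bullet.
\end{itemize}
Hence $P(\rb_e)>0$, with a quantitative bound.

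\textbf{Step 2: bound the payoff of $\rb_e$ rows in terms of $\sum_j y'_j$.} A row in $\rb_e$ gets payoff only from four sources:
\begin{itemize}
\item the column $5i$, with entry at most $2$;
\item the columns $5i+1$ and $5i+2$, with entries at most $1$;
\item the tail columns $5k+\cdot$, with entries at most $2$;
\item nothing from $\cb_2^i$ or $\cb_3^i$, since those columns of $A'$ are zero in $\rb_e$.
\end{itemize}
The key point is to control $c^*_{5i}$, which is not counted in $y'$. For each $i<k$ there are two cases.
\begin{itemize}
\item If $P(\rb_1^i)+P(\rb_2^i)>0$, Lemma~\ref{lem:dcs2} gives $c^*_{5i} \le c^*_{5i+1}+c^*_{5i+2}+\eps$.
\item Otherwise, the contrapositive of the third claim of Lemma~\ref{lem:dcs1} gives $c^*_{5i}=0$.
\end{itemize}
Summing over $i$ yields $\sum_{i<k} c^*_{5i} \le \sum_j y'_j + n\eps$. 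Every $\rb_e$ row therefore has payoff at most $2(\sum_j y'_j + n\eps) + 2\sum_j y'_j$.

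\textbf{Step 3: conclude.} Suppose for contradiction that $\sum_j y'_j < 1/(100n)$. Then each $\rb_e$ row earns less than $4/(100n) + 2n\eps$. Using $\eps < 1/(50n^2)$, we have $2n\eps < 1/(25n)$, so the total is below $8/(100n)$. On the other side, every column of $A'$ contains a $1$, so the row player's best-response payoff is at least $1/(5n)$. The gap $1/(5n) - 8/(100n) = 12/(100n)$ exceeds $\eps$. Hence no row of $\rb_e$ can be in the support of $x^*$, contradicting Step 1.

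The main obstacle is Step 1. It requires tracking the constants carefully so that the threshold $3\eps$ of Lemma~\ref{lem:dcs3} is reached, and so that the payoff bound for the $\cb_2$, $\cb_3$ and tail columns really falls below the best-response payoff by more than $\eps$. The constants may need adjusting (for example the exact count of columns, up to $5n$), but the inequality $\eps < 1/(50n^2)$ leaves comfortable slack.
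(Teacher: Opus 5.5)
Your proof is correct, but it is organized differently from the paper's.

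\textbf{Your route.} You first establish $P(\rb_e)>0$ independently. This is essentially the paper's later Lemma~\ref{lem:dcsxlower}, and it needs the forcing argument of Lemma~\ref{lem:dcs3} with its $3\eps$ threshold. Only then do you use the smallness of $\sum_j y'_j$. You transfer it to the columns $5i$ via $c^*_{5i}\le c^*_{5i+1}+c^*_{5i+2}+\eps$, and conclude that every row of $\rb_e$ is more than $\eps$ below a best response, which contradicts Step~1.

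\textbf{The paper's route.} The paper uses the same transfer inequality to get $\sum_{i<k}P(\cb_1^i)+\sum_{\text{tail}}c^*_j<1/(25n)$. So the column player puts mass at least $1-1/(25n)$ on $\bigcup_i(\cb_2^i\cup\cb_3^i)$. Then $\rb_e$ is unplayed, for the same payoff reason as in your Step~3. But the columns of $\cb_2^i$ and $\cb_3^i$ pay the column player only through $\rb_e$, so they earn $0$ and cannot be in the support of $y^*$. This contradicts where the mass sits.

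\textbf{Comparison.} The paper's argument is shorter and self-contained: it needs only Lemmas~\ref{lem:dcs1} and~\ref{lem:dcs2}, with no appeal to Lemma~\ref{lem:dcs3}. Yours costs that extra step. On the other hand, your Step~1 is a statement that must be proved anyway for Theorem~\ref{thm:dcs}, so in your ordering it is reused rather than duplicated.

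\textbf{A small slip.} The entries of $B$ can equal $2$ in a stage~2 game, and after the players are swapped the other matrix is a stage~3 matrix that can also contain $2$s. So the columns in $\cb_2^i$, $\cb_3^i$ and the tail earn at most $2P(\rb_e)$, not $P(\rb_e)$. This is harmless: Step~1 only needs the case $P(\rb_e)=0$, in which those columns earn exactly $0$.
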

\begin{proof}
Suppose for the sake of contradiction that this is not the case,
meaning that we have
$\sum_j y'_j < 1/(100 n)$.

We first argue that $\sum_{i < k} P(\cb_1^i) + \sum_{j > 5k} c^*_{j} <
1/(25 n)$ should hold, noting that the only difference between this sum
and the sum in our assumption is that now we sum over the columns $5i$ for each $i < k$.
To see why this is the case, observe that
for each row
$i < k$, if $P(\cb_1^i) > 0$, then Lemma~\ref{lem:dcs1} implies that the
precondition of Lemma~\ref{lem:dcs2} holds, and so we have
$c^*_{5i} = c^*_{5i+1} + c^*_{5i+2} \pm \eps$, meaning that
$P(\cb_1^i) \le 2 \cdot c^*_{5i+1} + 2 \cdot c^*_{5i+2} + \eps$.
This means that
\begin{align*}
\sum_{i < k} P(\cb_1^i) + \sum_{j > 5k} c^*_{j} &< 1/(50 n) + n \eps \\
&< 1/(25 n),
\end{align*}
where the final inequality uses the assumption that
$\eps < 1/(50n^2)$.

This then implies that $\sum_{i < k} P(\cb_2^i) + P(\cb_3^i) \ge 1 - 1/(25n)$.
Since columns in $\cb_2^i$ and $\cb_3^i$ give payoff to the row player only in row
blocks $\rb_i^1$ and $\rb_i^2$, we get that the payoff of all rows in $\rb_e$
can be at most $2 \cdot 1/(25n) \le 1/(12n)$. On the other hand, since every
column of $A'$ contains at least one payoff entry that is greater than or equal
to $1$, there must exist a row that provides payoff at least $1/(4n)$ to the
row player. Since $\eps < 1/(6n)$, we can conclude that the row player places
zero probability on the rows in $\rb_e$.

Now note however that for each column in $\cb_2^i$ and $\cb_3^i$, the column
player only receives non-zero payoff for rows in $\rb_e$. Hence, all such
columns must receive payoff zero. On the other hand, since each row of $B'$
contains at least one payoff entry that is greater than or equal to $1$, the
column player has a column with payoff at least $1/(5n)$. Since $\eps < 1/(5n)$
we get that the column player places zero probability on columns in
$\cb_2^i$ and $\cb_3^i$ for some $i$. But this provides a contradiction with
the fact that
$\sum_{i < k} P(\cb_2^i) + P(\cb_3^i) \ge 1 - 1/(25n)$.
\end{proof}

The strategy $y''$ is produced from $y'$ by zeroing some of the columns of
$y'$. The next lemma shows that the amount of probability lost in this
operation is relatively small.

\begin{lemma}
\label{lem:dcsypplower}
If $\eps < 1/(600 n^2)$ then we have $\sum_j y''_j \ge 1/(200 n)$.
\end{lemma}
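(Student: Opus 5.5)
The plan mirrors Lemma~\ref{lem:clowerfinal}: start from the lower bound on $\sum_j y'_j$ given by Lemma~\ref{lem:dcsyplower}, then subtract the probability that is removed when passing from $y'$ to $y''$. Since $\eps < 1/(600 n^2) < 1/(50n^2)$, the precondition of Lemma~\ref{lem:dcsyplower} holds, so $\sum_j y'_j \ge 1/(100n)$.

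By definition, $y''$ differs from $y'$ only on indices $j$ with $\lfloor j/2 \rfloor < k$ and $y'_j < 3\eps$. Each such entry is set to zero, so each loses less than $3\eps$ probability. These indices correspond to the two columns $5i+1$ and $5i+2$ of $\cb_1^i$ for each $i < k$, so there are at most $2k$ of them. The number $k$ of rows containing two $2$s is at most the number of rows of $A$, which is at most $n$. Hence the total probability removed is strictly less than $2n \cdot 3\eps = 6n\eps$.

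Using $\eps < 1/(600n^2)$ gives $6n\eps < 6n/(600n^2) = 1/(100n) - 1/(200n)$ up to the factor check: $6/(600n) = 1/(100n)$. That is too loose, so I will count more carefully. The indices $j$ range over $2i$ and $2i+1$ for $i<k$, and since $A$ has $n$ columns (the two $2$s of row $i$ sit in columns $2i, 2i+1$), we get $2k \le n$. The loss is therefore below $n \cdot 3\eps < 3n/(600n^2) = 1/(200n)$. This yields
\[
\sum_j y''_j \ge \frac{1}{100n} - \frac{1}{200n} = \frac{1}{200n}.
\]

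The only delicate point is this counting step. The bound must use that the zeroed indices number at most $n$ in total, via $2k \le n$ because the paired columns are distinct columns of $A$, rather than $2n$; otherwise the constant $600$ does not suffice. Everything else is a direct substitution.
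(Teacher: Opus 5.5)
Your proposal is correct and follows the paper's proof: you invoke Lemma~\ref{lem:dcsyplower} to get $\sum_j y'_j \ge 1/(100n)$, and then bound the zeroed mass by $3n\eps < 1/(200n)$. Your explicit argument that at most $2k \le n$ entries can be zeroed, because the paired columns $2i, 2i+1$ are distinct columns of $A$, is what the paper leaves implicit in its ``clearly at most $n \cdot 3\eps$''; in the write-up, drop the abandoned $6n\eps$ false start.
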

\begin{proof}
Lemma~\ref{lem:dcsyplower} implies that
$\sum_j y'_j \ge 1/(100 n)$, and $y''$ is obtained from $y'$ by
zeroing
any column $j$ with $\lfloor j/2 \rfloor < k$ and $y'_j < 3 \eps$. Clearly the
amount that is lost due to this operation is at most
$n \cdot 3 \eps$. Since $\eps < 1/(600 n^2)$ by assumption, the amount
that is lost is therefore bounded by $1/(200 n)$. Thus we have
$\sum_j y''_j \ge 1/(100 n) - 1/(200 n) = 1/(200 n)$.
\end{proof}

This implies that the re-normalization used to define $y$ cannot fail, and so
$y$ is well defined.

Next we show that the strategy $x$ is well defined, by showing a lower bound on
the amount of probability that is used in $x'$.

\begin{lemma}
\label{lem:dcsxlower}
If $\eps < 1/(100 n)$, then we have $\sum_{i} x'_i \ge 1/(100 n)$.
\end{lemma}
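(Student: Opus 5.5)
We argue by contradiction, following the template of Lemma~\ref{lem:rbelower} from the type one simulation. Suppose that $\sum_i x'_i = P(\rb_e) < 1/(100n)$.

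\textbf{Step 1: the column player abandons every column paid only through $\rb_e$.} In $B'$, the columns of $\cb_2^i$ and $\cb_3^i$ (for $i<k$) and the trailing columns $5k+j$ receive non-zero payoff only from rows in $\rb_e$. The entries there are bounded by the maximum payoff of the stage 2 game, which is at most $2$. So each such column pays the column player less than $2/(100n) = 1/(50n)$. On the other hand, every row of $B'$ contains an entry at least $1$, so some column pays at least $1/(5n)$. Since $\eps < 1/(100n)$ is well below the gap $1/(5n) - 1/(50n)$, none of these columns can be in the support of $y^*$. Hence $y^*$ is concentrated entirely on $\bigcup_{i<k} \cb_1^i$.

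\textbf{Step 2: some $c^*_{5i+1}$ or $c^*_{5i+2}$ is large.} Pick $i$ with $P(\cb_1^i) \ge 1/k \ge 1/n$. Lemma~\ref{lem:dcs1} gives $P(\rb_1^i)+P(\rb_2^i)>0$, so Lemma~\ref{lem:dcs2} applies and yields $c^*_{5i} = c^*_{5i+1}+c^*_{5i+2} \pm \eps$. Therefore
\[
c^*_{5i+1}+c^*_{5i+2} \ge (1/n - \eps)/2 .
\]
It follows that one of $c^*_{5i+1}$, $c^*_{5i+2}$ is at least $(1/n-\eps)/4$, which exceeds $3\eps$ because $\eps < 1/(100n)$.

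\textbf{Step 3: contradiction.} Lemma~\ref{lem:dcs3} now forces $c^*_{5i+3}>0$ or $c^*_{5i+4}>0$. This contradicts Step 1, which showed that all columns in $\cb_2^i$ and $\cb_3^i$ have zero probability.

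\textbf{Main obstacle.} The delicate point is checking that the hypotheses of Lemmas~\ref{lem:dcs1}--\ref{lem:dcs3} are met under the weaker assumption $\eps<1/(100n)$. Those lemmas need $\eps<1/(5n)$, which holds. The best-response lower bound in Step 1 uses $1/(5n)$, because $B'$ has at most about $5n$ columns per row-normalisation. I would double-check that the entries of $B'$ in rows of $\rb_e$ are indeed bounded by $2$; Lemma~\ref{lem:dcsbpayoffs} together with the stage 2 definition should give this. If the bound were $K$ instead, the constant would merely change.
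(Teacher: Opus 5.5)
Your proposal is correct and follows essentially the same route as the paper: you rule out every column of $\cb_2^i$, $\cb_3^i$ and the trailing block because it is paid only through $\rb_e$ with entries at most $2$, pick a block with $P(\cb_1^i)\ge 1/n$, use Lemmas~\ref{lem:dcs1} and~\ref{lem:dcs2} to force $c^*_{5i+1}$ or $c^*_{5i+2}$ above $3\eps$, and then contradict Step~1 via Lemma~\ref{lem:dcs3}. Your Step~2 is a slightly cleaner direct version of the paper's contradiction argument (which instead derives $y^*_{5i}\ge 19\eps$), and it correctly takes only $P(\rb_1^i)+P(\rb_2^i)>0$ from Lemma~\ref{lem:dcs1}, which is all that Lemma~\ref{lem:dcs2} requires.
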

\begin{proof}
Since $x'$ is defined by copying the probabilities used by $x^*$ in the block
$\rb_e$, it is sufficient to prove that $\sum_{i \in \rb_e} x^*_i \ge 1/(100
n)$.

Suppose for the sake of contradiction that we have $\sum_{i \in \rb_e} x^*_i < 1/(100 n)$.
Note that for each column $j$ that lies in $\cb_2^i$ or $\cb_3^i$ for some $i$
must therefore receive payoff at most $2/(100 n) = 1/(50 n)$, because the
maximum payoff in $B$ is $2$. The same property holds
for the columns $j \ge 5k$. Since the column player always has a column with
payoff at least $1/(5n)$, and since $\eps < 1/(100 n)$, the column player must
place zero probability on all such columns.

This implies that all probability of the column player must be allocated to
column blocks of the form $\cb_1^i$. There are at most $n$ such column blocks,
so there must therefore be at least one row $i$ for which $P(\cb_1^i) \ge 1/n
> 25 \eps$.

We argue that we must have $y^*_{5i + 1} > 3 \eps$ or $y^*_{5i + 2} > 3
\eps$, because if not, then we would have $y^*_{5i} \ge 19 \eps$, meaning
that the third point of Lemma~\ref{lem:dcs1} would imply $P(\rb_1^i) > 0$ and
$P(\rb_2^i) > 0$, and then Lemma~\ref{lem:dcs2} can be invoked to argue that
$c^*_{5i} = c^*_{5i+1} + c^*_{5i+2} \pm \eps$, which contradicts our
assumptions about $c^*_{5i+1}$ and $c^*_{5i+2}$ both being small.

This now allows us to invoke one of the two claims of Lemma~\ref{lem:dcs3} to
argue that either $P(\cb_2^i) > 0$ or $
P(\cb_3^i) > 0$. But this is a contradiction, because we have already proved
that the column player may not place probability on either of those two column
blocks.
\end{proof}

Finally, we turn our attention towards the quality of the resulting WSNE. The
following lemma considers the column player.

\begin{lemma}
\label{lem:dcsywsne}
If $\eps < 1/(100 n)$ then we have $|x^\intercal B_j - x^\intercal B_{j'}| \le 100 \cdot n
\cdot \eps$ for all columns $j$ and $j'$ for which $y_j> 0$ and $y_{j'} > 0$.
\end{lemma}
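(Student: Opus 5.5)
The plan follows the proof of Lemma~\ref{lem:scsrow}. Each original column $j$ with $y_j > 0$ will be matched to a column of $B'$ that the column player actually plays in $(x^*, y^*)$ and whose payoff equals $P(\rb_e) \cdot x^\intercal B_j$. The $\eps$-WSNE condition in $(A', B')$ then gives an additive $\eps$ gap between any two such columns, and dividing by $P(\rb_e)$ finishes the proof.

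The first step identifies, for each $j$ with $y_j > 0$, the corresponding column of $B'$.
\begin{itemize}
\item If $\lfloor j/2 \rfloor \ge k$, then $y_j > 0$ means $y^*_{5k + (j-2k)} > 0$. That column of $B'$ is just column $j$ of $B$ placed on the rows of $\rb_e$, with zeros elsewhere. Its payoff against $x^*$ is therefore exactly $P(\rb_e)\, x^\intercal B_j$.
\item If $i = \lfloor j/2 \rfloor < k$ and $j$ is even, then $y_j > 0$ means $y''_j > 0$, so $y'_j = c^*_{5i+1} \ge 3\eps$. The first claim of Lemma~\ref{lem:dcs3} then gives $c^*_{5i+3} > 0$, so column $5i+3$ (the unique column of $\cb_2^i$) is in the support of $y^*$. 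By construction this column of $B'$ is zero outside $\rb_e$ and equals $B_{2i}$ on $\rb_e$. Its payoff is thus $P(\rb_e)\, x^\intercal B_{2i}$.
\item The odd case is symmetric: use the second claim of Lemma~\ref{lem:dcs3} with column $5i+4$ in $\cb_3^i$ and $B_{2i+1}$.
\end{itemize}
The hypothesis $\eps < 1/(100n)$ is stronger than the $\eps < 1/(5n)$ that Lemma~\ref{lem:dcs3} requires.

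In the second step, take $j, j'$ with $y_j, y_{j'} > 0$ and let $\hat j, \hat j'$ be the associated supported columns of $B'$. Since $(x^*, y^*)$ is an $\eps$-WSNE, both are $\eps$-best responses, so $|(x^*)^\intercal B'_{\hat j} - (x^*)^\intercal B'_{\hat j'}| \le \eps$. Substituting the payoffs from the first step gives $P(\rb_e)\,|x^\intercal B_j - x^\intercal B_{j'}| \le \eps$. Lemma~\ref{lem:dcsxlower} gives $P(\rb_e) = \sum_i x'_i \ge 1/(100n)$, and the bound $100 n \eps$ follows.

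The only step needing real care is the first. We must check that the columns in $\cb_2^i$ and $\cb_3^i$, and the columns $5k+\cdot$, carry no column-player payoff outside $\rb_e$. The $T_1$ and $T_2$ blocks occupy only $\cb_1^i$, so this holds. We must also confirm that the $3\eps$ threshold used to build $y''$ matches the hypothesis of Lemma~\ref{lem:dcs3}. Both are direct inspections of the formal matrix definitions in Appendix~\ref{app:matrices}.
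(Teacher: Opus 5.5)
Your proposal is correct and follows essentially the same route as the paper. Each supported original column is matched to a supported column of $B'$ whose payoff is $P(\rb_e)\, x^\intercal B_j$: column $5k+(j-2k)$ directly, or column $5i+3$ / $5i+4$ via Lemma~\ref{lem:dcs3} and the $3\eps$ threshold in $y''$. You then apply the $\eps$-WSNE condition and divide by $P(\rb_e) \ge 1/(100n)$ from Lemma~\ref{lem:dcsxlower}. Your version is slightly cleaner than the paper's, which writes $c^*_{5i+3} \ge 0$ where strict positivity is needed and speaks of the payoff of column $5i+3$ ``in $A'$'' where $B'$ is meant.
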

\begin{proof}
Since $(x^*, y^*)$ is an $\eps$-WSNE of $(A',
B')$, for each pair of columns $j, j'$ with $y^*_j > 0$ and $y^*_{j'} > 0$ we
have that $\left|(x^* \cdot B')_j - (x^* \cdot B')_{j'} \right| \le \eps$.

For each column $j$, let $i = \lfloor j/2 \rfloor$. If $i < k$ and $j$ is
even, then
we have $y_j > 0$ only if $c^*_{5i + 1} \ge 3 \eps$, and then
Lemma~\ref{lem:dcs3} implies that
$c^*_{5i + 3} \ge 0$. Observe that the payoff of column $5i+3$ in $A'$ is
$P(\rb_e) \cdot
\cdot x^\intercal \cdot
B_j$ by construction. Exactly the same argument also works for the case where
$j$ is odd, this time using column $5i + 4$ as an intermediary.
Finally, we also have that the payoff of any column for which $i \ge k$
is given by
$P(\rb_e) \cdot \cdot x^\intercal \cdot B_j$ by construction.

So for each pair of columns $j, j'$ for which $y_j > 0$ and $y_{j'} > 0$, we have
$$\left|P(\rb_e) \cdot x^\intercal \cdot B_j - P(\rb_e) \cdot x^\intercal \cdot B_{j'} \right|
\le \eps,$$
which implies
$$\left|x^\intercal \cdot B_j - x^\intercal \cdot B_{j'} \right| \le \eps/P(\rb_e).$$
Since $P(\rb_e) \ge 1/(100n)$ by Lemma~\ref{lem:dcsxlower}, we get that
$$\left|x^\intercal \cdot B_j - x^\intercal \cdot B_{j'} \right| \le 100 \cdot n \cdot \eps,$$
as required.
\end{proof}

The next lemma likewise considers the quality of the WSNE achieved by the row
player.

\begin{lemma}
\label{lem:dcsxwsne}
If $\eps < 1/(600 n^2)$
$\left|(A \cdot y)_i - (A \cdot y)_{i'} \right| \le 1200 \cdot n^2 \cdot \eps.$
for all rows $i,i'$ for which $x_i > 0$ and $x_{i'} > 0$.
\end{lemma}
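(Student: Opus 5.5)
The plan mirrors the proof of Lemma~\ref{lem:scscol}. Let $C$ be the set of columns of $A'$ that define the non-zero entries of $y''$. These are the columns $5i+1$ and $5i+2$ for $i<k$ that survive the $3\eps$ threshold, together with the columns $5k+(j-2k)$ for $j\ge 2k$. The first goal is an additive approximation: for every row $r\in\rb_e$ corresponding to row $r'$ of $A$,
\[
(A' y^*)_r = P(C)\,(A y)_{r'} \pm O(n\eps).
\]
Once this holds, I use the $\eps$-WSNE condition. For any two rows with $x_{r'},x_{r''}>0$, the rows $r$ and $r''$ are in the support of $x^*$, so their payoffs in $(A',B')$ differ by at most $\eps$. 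Hence
\[
P(C)\,\bigl|(Ay)_{r'}-(Ay)_{r''}\bigr| \le \eps + O(n\eps).
\]
Dividing by $P(C)=\sum_j y''_j\ge 1/(200n)$ from Lemma~\ref{lem:dcsypplower} gives an $O(n^2\eps)$ bound. The explicit constant $1200$ comes from tracking the additive error: I need $P(C)\,|\cdot|\le 6n\eps$, so every additive term must be kept within about $3n\eps$ per row.

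\textbf{Columns $j\ge 2k$.} The contribution of the column $5k+(j-2k)$ to row $r$ is exactly $y^*_{5k+(j-2k)}A_{r',j}$, with no error.

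\textbf{Column pairs $(2i,2i+1)$ with $i<k$.} By the $\encode$ structure, rows in $\rb_e$ see the following in the block $\cb_1^i$:
\begin{itemize}
\item the row holding the original $1$ in column $2i$ receives $c^*_{5i+1}$, and the row holding the original $1$ in column $2i+1$ receives $c^*_{5i+2}$, both exactly matching the original;
\item the row holding the two $2$s receives $2c^*_{5i}$ instead of $2(c^*_{5i+1}+c^*_{5i+2})$.
\end{itemize}
If $P(\cb_1^i)>0$, Lemma~\ref{lem:dcs1} supplies the precondition of Lemma~\ref{lem:dcs2}, so $c^*_{5i}=c^*_{5i+1}+c^*_{5i+2}\pm\eps$ and the error is at most $2\eps$. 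If $P(\cb_1^i)=0$ there is nothing to check.

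\textbf{Zeroing step.} Passing from $y'$ to $y''$ drops columns whose probability is below $3\eps$. Each dropped column perturbs any row's payoff by at most $2\cdot 3\eps$. The concern is that summing these naively over all $n$ columns gives $O(n\eps)$ with a large constant.

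To control this I use the sparsity of the stage~2 structure. Each row of $A$ has at most three non-zero entries, and each row $r\in\rb_e$ meets at most a constant number of column blocks. So the total additive error in row $r$, from both the $2\eps$ per-pair error and the zeroing, is $O(\eps)$ rather than $O(n\eps)$. This comfortably fits the constant $1200$; with a crude per-column count, $O(n\eps)$ per row would still give $O(n^2\eps)$ overall.

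The main obstacle is the bookkeeping in the zeroing step. A row of $A$ containing two $2$s has its payoff coming from $2c^*_{5i}$ in $A'$. If exactly one of $c^*_{5i+1},c^*_{5i+2}$ is zeroed, $c^*_{5i}$ still includes that mass. The resulting discrepancy is at most $2(3\eps+\eps)$, which is still $O(\eps)$ per block.
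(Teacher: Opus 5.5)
Your proof is correct and has the same skeleton as the paper's. You approximate each encoding-row payoff $(A'y^*)_r$ by $P(C)\,(Ay)_{r'}$ up to an additive error. That error comes from the $\pm\eps$ slack in $c^*_{5i}=c^*_{5i+1}+c^*_{5i+2}\pm\eps$ (Lemmas~\ref{lem:dcs1} and~\ref{lem:dcs2}, doubled by the payoff $2$) and from the mass removed by the $3\eps$ threshold. You then compare two supported rows via the $\eps$-WSNE condition and divide by $\sum_j y''_j \ge 1/(200n)$. Lemma~\ref{lem:dcsypplower} is the right citation here, and your $C$ correctly contains only the surviving columns.

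The one genuine difference is the zeroing estimate. The paper charges every row with the total removed weight $n\cdot 3\eps$. You instead use that a stage~2 row has at most three non-zero entries, so only $O(1)$ zeroed columns touch it. This is valid and gives an $O(\eps)$ per-row error, which in fact yields an $O(n\eps)$ bound, stronger than the lemma requires.

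There are two small bookkeeping points. First, for a row with two $2$s you should also treat the case where both $c^*_{5i+1}$ and $c^*_{5i+2}$ fall below $3\eps$. There the discrepancy can reach $2(6\eps+\eps)=14\eps$, which is still $O(\eps)$.

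Second, with a per-row error of up to $14\eps$, your explicit bound is roughly $200n\cdot 29\eps = 5800n\eps$. This is at most $1200n^2\eps$ only for $n\ge 5$, so ``comfortably fits'' overstates things for very small $n$. The paper's own arithmetic is no tighter: a per-row error of $\pm(3n+2)\eps$ does not give its pairwise bound $(3n+3)\eps$. Only the $O(n^2\eps)$ order is used in Theorem~\ref{thm:dcs}, and your argument certainly provides that.
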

\begin{proof}
Let $C = \{5i + 1, 5i+2 \; : \; i < k\} \cup \{j \; : \; 5k \le j \le 5k + (n-1
- j)\}$ be the set of columns that are used to define $y'$.
We first argue that $( A' \cdot y^*)_j = P(C) \cdot (A \cdot y')_j \pm
2 \eps$ for all $j \in C$, where we suppose for the sake of
convenience that the columns of $A$ are indexed by the corresponding columns in
$A'$.

First consider a row $i \ge k$. Here we directly have that
$(A' \cdot y^*)_j = P(C) \cdot (A \cdot y')_j$, since the payoff of any such
row is entirely determined by the column player's strategy
over the columns in $\{j \; : \; 5k \le j \le 5k + (n-1 - j)\}$.

Next consider a row $i < k$. Here, we need to pay attention to the encoding of
the row via the $\encode$ function. In particular, since
Lemma~\ref{lem:dcs2} implies that $c^*_{5i} = c^*_{5i+1} + c^*_{5i+2} \pm
\eps$, we get that
$(A' \cdot y^*)_j = P(C) \cdot (A \cdot y')_j \pm 2 \eps$, with the $2 \eps$
error arising due to the $\eps$ error in the
probability $c^*_{5i}$ which is then multiplied by the payoff of $2$ in the row.

So we have shown that
$(A' \cdot y^*)_j = P(C) \cdot (A \cdot y')_j \pm 2 \eps$. Next we deal with
the translation from $y'$ to $y''$, which involves setting some of the columns
in $y'$ to zero. Note that any column that is zeroed must have weight at most
$3 \eps$ by the definition of $y''$. Thus at most $n \cdot 3 \cdot \eps$ weight
is lost in this way. Therefore, the payoffs of each of the rows can increase or
decrease by at most $n \cdot 3 \cdot \eps$, meaning that we have shown that
$(A' \cdot y^*)_j = P(C) \cdot (A \cdot y'')_j \pm (3n + 2) \cdot \eps$.

Since $(x^*, y^*)$ is an $\eps$-WSNE of $(A', B')$, we have that
$\left| (A' \cdot y^*)_i - (A' \cdot y^*)_{i'} \right| \le \eps$ for any pair of
rows $i, i'$ for which $x^*_{i} > 0$ and $x^*_{i'} > 0$.
From this we obtain
$$ \left| P(C) \cdot (A \cdot y)_i - P(C) \cdot (A \cdot y)_{i'}
\right| \le (3n + 3) \cdot \eps,$$
which implies
$$ \left| (A \cdot y)_i - (A \cdot y)_{i'} \right| \le (1/P(C)) \cdot (3n + 3)
\cdot \eps,$$
and since Lemma~\ref{lem:dcsyplower} implies that $P(C) \ge 1/(200n)$ we can
conclude that
\begin{align*}
\left| (A \cdot y)_i - (A \cdot y)_{i'} \right| &\le  200 \cdot n \cdot (3n + 2) \cdot
\eps \\
& \le 1200 \cdot n^2 \cdot \eps,
\end{align*}
where the final inequality uses the fact that $n \ge 1$ and therefore $(3n +
3) \le 6n$.
\end{proof}

The combination of Lemmas~\ref{lem:dcsywsne} and~\ref{lem:dcsxwsne} implies
that $(x, y)$ is a $(1200 \cdot n^2 \cdot \eps)$-WSNE of $(A, B)$, which
completes the proof of Theorem~\ref{thm:dcs}.

\section{Proofs for the Payoff Structure of Each Simulation}
\label{app:payoffs}

The following proof shows that the initial application of a first-type single
column simulation produces a stage 1 game.

\begin{proof}[Proof of Lemma~\ref{lem:scs1ap}]
We start by enumerating the columns of $A'$.
\begin{itemize}
\item For a column $3j$  for some $j < k$, ie.\ the first column in $\cb_1^j$,
we have a payoff of $2$ arising from the matrix $S$, and at most two 1s arising
from the output of the $\encode$ function (here we note that a column with
three 1s and a $K$ is not permitted by
Definition~\ref{def:resbi}).

\item For a column $3j + 1$ for some $j < k$, ie.\ the second column in
$\cb_1^j$, we have a payoff of $1$ arising from the matrix $S$, and at most two
$1$s alongside a single value of $K/2$ arising from the $\encode$ function. So
we have at most three $1$s and a $4$, which is permitted by a stage one game.

\item For a column $3j + 2$, ie.\ the only column in $\cb_2^j$, we have exactly two
$1$s from the two entries in row block $\rb_j$ and nothing else.

\item For a column $j \ge k$, we have not changed the input matrix. Any such
column did not contain a value $K$ by definition,  so we still
have at most two $1$ payoffs.
\end{itemize}
Hence, all columns of $A'$ satisfy the requirements of a stage 1 game.

We now consider the rows of the matrix $A'$.
\begin{itemize}
\item Each row $2j$ for some column $j < k$, ie.\ the first row in $\rb_j$,
contains a payoff of $2$ from $S$ and a single $1$ from $\cb_2^j$.

\item Each row $2j + 1$ for some column $j < k$, ie.\ the second row in
$\rb_j$,
contains a $1$ from $S$ and a single $1$ from $\cb_2^j$.

\item For a row $i \in \rb_e$ we have the following.
\begin{itemize}
\item If the corresponding row of $A$ contained two $K$s, then the
corresponding row of $A'$ now contains two $K/2$s, since the $\encode$ function
does not introduce any new payoffs in the row other than the $K/2$ payoffs.

\item If the corresponding row of $A$ contained at most three entries from
$\{1, 2\}$, then note that by definition, each of the $2$ payoffs must have
shared a column with a $K = 8$ payoff. Thus the encode function turned each of
those $2$ payoffs into $1$ payoffs, and so the row now contains at most three
$1$ payoffs.
\end{itemize}
\end{itemize}
Hence, we have shown that the rows of $A'$ satisfy the requirements of a stage
1 game.
\end{proof}

The next proof shows that applying a first-type single column simulation to a
stage 1 game produces a stage 2 game.

\begin{proof}[Proof of Lemma~\ref{lem:scs1app}]
Let us begin by considering the columns of $A'$.
\begin{itemize}
\item For a column $3j$  for some $j < k$, ie.\ the first column in $\cb_1^j$,
we have a payoff of $2$ arising from the matrix $S$, and at most three 1s arising
from the output of the $\encode$ function.

\item For a column $3j + 1$ for some $j < k$, ie.\ the second column in
$\cb_1^j$, we have a payoff of $1$ arising from the matrix $S$,
along with a single $K/2 = 2$ payoff.  Here we use the fact that a $4$ payoff
cannot share a column with a $2$ payoff in a stage 1 game.

\item For a column $3j + 2$, ie.\ the only column in $\cb_2^j$, we have exactly two
$1$s from the two entries in row block $\rb_j$ and nothing else.

\item For a column $j \ge k$, we have not changed the input matrix. Any such
column did not contain a value $K$ by definition,  so we still have at most
three non-zero payoffs from the set $\{1, 2\}$ with at most one $2$.
\end{itemize}
Thus $A'$ satisfies the requirements of a stage 2 game.

Now we consider the rows of $A'$.
\begin{itemize}
\item Each row $2j$ for some column $j < k$, ie.\ the first row in $\rb_j$,
contains a payoff of $2$ from $S$ and a single $1$ from $\cb_2^j$.

\item Each row $2j + 1$ for some column $j < k$, ie.\ the second row in
$\rb_j$, contains a $1$ from $S$ and a single $1$ from $\cb_2^j$.

\item For a row $i \in \rb_e$ we have the following.
\begin{itemize}
\item If the corresponding row of $A$ contained two $K$s, then the
corresponding row of $A'$ now contains two $K/2$s, since the $\encode$ function
does not introduce any new payoffs in the row other than the $K/2$ payoffs.

\item If the corresponding row of $A$ did not contains a $K$, then observe that
the payoff structure of the row cannot have changed. In particular, in a
Stage~1 game no $2$ payoff shares a column with a $4$ payoff, so any $2$ payoff
will not be encoded and will still appear in $A'$. The encode function may have
mapped each $1$ payoff into a $1$ payoff in $A'$, but the encode function does
not create or destroy $1$s, so the resulting row of $A'$ contains either at
most three $1$s, or it contains a single $2$ along with at most one $1$.
\end{itemize}
\end{itemize}
Finally, we note that for each row that contains exactly two $2$s, each of the
$2$ payoffs shares a column with exactly one $1$ payoff. This is because
a stage $1$ game never has a $2$ payoff in the same column as a $4$ payoff,
meaning that the output of the $\encode$ function always puts the $K/2 = 2$
payoff by itself in a column, and this $2$ payoff then has exactly one $1$
payoff above it arising from the matrix $S$.
\end{proof}

The following proof shows that $B$ is unaffected when we apply a first-type
single column simulation.

\begin{proof}[Proof of Lemma~\ref{lem:scs1bp}]
Observe that every column of $B'$ is either a column from $B$, or a column that
contains a single $1$ arising from the matrix $T$. Likewise, each row of $B'$
is either a row from $B$, or a row that contains a single $1$ arising from the
matrix $T$.
\end{proof}

The following proof shows that applying a dual column simulation to a stage 2
game produces a stage 3 game.

\begin{proof}[Proof of Lemma~\ref{lem:dcsapayoffs}]
Let us first enumerate the columns of $A'$.
\begin{itemize}
\item For each $i < k$, the column $5i$ contains $1$s in rows $4i$ and $4i +
2$, both arising from the matrix $S$, and a $2$ in some row in $\rb_e$ arising
from the $\encode$ function.

\item For each $i < k$, the columns $5i + 1$ and $5i + 2$ each
$1$s in rows $4i+1$ and $4i +
3$, both arising from the matrix $S$, and a $1$ in some row in $\rb_e$ arising
from the $\encode$ function.

\item For each $i < k$, the columns $5i + 3$ and $5i + 4$ each contain two $1$s
in $\rb_1^i$ and $\rb_2^i$, respectively.

\item For each $j \ge 5k$, column $j$ contains exactly the payoffs in column
$B_{j - 5k}$ of $A$, and this column contains at most four non-zero payoffs lying in the set $\{1, 2\}$ with at most one payoff of $2$.
\end{itemize}
So we have shown that all columns of $A'$ contain at most four non-zero
payoffs lying in the set $\{1, 2\}$ with at most one payoff of $2$.

Next we consider the rows of $A'$.
\begin{itemize}
\item For each row $i < k$, the matrix $A$ contained exactly two $2$ payoffs,
and by construction, $A'$ now contains exactly one $2$ payoff.

\item For each row $i \ge k$, since the $\encode$ function does not introduce or
destroy a 1 payoff, the matrix $A'$ contains exactly the same
payoffs as the corresponding row of $A$, so the row
either contains at most three $1$s, or a $2$ and at most one $1$.
\end{itemize}
Hence we have shown that each row of $A'$ satisfies the required properties.
\end{proof}

The following proof states that $B$ is unaffected when we apply a dual column
simulation.

\begin{proof}[Proof of Lemma~\ref{lem:dcsbpayoffs}]
Observe that each row and of $B'$ is either a row of $B$ with extra zeroes
added, or contains exactly one $1$ arising from either $T_1$ or $T_2$. Likewise, each
column of $B'$ is either a column of $B$ with extra zeroes added, or contains
exactly two $1$s arising from $T_1$ and $T_2$.
\end{proof}

The following proof shows that applying a second-type single column simulation
to a stage 3 game gives us a $3$-sparse win-lose game.

\begin{proof}[Proof of Lemma~\ref{lem:scs2bp}]
Let us enumerate the columns of $A'$.
\begin{itemize}
\item Columns of the form $4j$ and $4j + 1$ with $j < k$ have a single $1$ from the matrix
$S$ and at most two $1$s from the output of the $\encode$ function.

\item Columns of the form $4j + 2$ with $j < k$ have a single $1$
from the matrix $S$ and a single $1$ from the output of the $\encode$ function.

\item Columns of the form $4j + 3$ have exactly three $1$s in the block
$\rb^j$.

\item Columns $j \ge k$ have at most three $1$s by assumption.
\end{itemize}
Now we inspect the rows of $A'$.
\begin{itemize}
\item Rows in $\rb^j$ for some $j$ have a single $1$ from the matrix $S$, and a
single $1$ in $\cb_2^j$.

\item For rows in $\rb_e$, note that the $\encode$ function does not create or
destroy a $1$, but it does split a $2$ into two $1$s. However, if a row has a
$2$ in a stage 3 game, then that row contains at most one $1$, hence the
resulting row of $A'$ contains at most three $1$s.
\end{itemize}
\end{proof}

The following proof shows that the matrix $B$ is unaffected by a second-type
single column simulation.

\begin{proof}[Proof of Lemma~\ref{lem:scs2ap}]
Observe that every column of $B'$ is either a column from $B$, or a column that
contains a single $1$ arising from the matrix $T$. Likewise, each row of $B'$
is either a row from $B$, or a row that contains a single $1$ arising from the
matrix $T$.
\end{proof}

\end{document}